  \def\doi#1{\url{https://doi.org/#1}}}
\newtheorem{lemma}{Lemma}
\newtheorem{definition}{Definition}
\newtheorem{theorem}{Theorem}
\newtheorem{corollary}{Corollary}[theorem]
\DeclareMathOperator{\tr}{tr}
\newcommand{\ci}{\perp\!\!\!\perp}
\newcommand{\ssep}{\mathop{;}\,}
\newcommand\blfootnote[1]{%
  \begingroup
  \renewcommand\thefootnote{}\footnote{#1}%
  \addtocounter{footnote}{-1}%
  \endgroup
}
\title{{\textbf{A Restricted Latent Class Hidden Markov Model for Polytomous Responses, Polytomous Attributes, and Covariates: Identifiability and Application}}}
\author[1]{Eric Alan Wayman}
\author[2]{Steven Andrew Culpepper}
\author[2]{Jeff Douglas}
\author[1]{Jesse Bowers}
\affil[1]{Independent scholar \protect\\
\texttt{ericwaymanpublications@mailworks.org},  \texttt{bowers.jesse+publications@gmail.com}}
\affil[2]{Department of Statistics, University of Illinois Urbana-Champaign \protect\\
\texttt{sculpepp@illinois.edu}, \texttt{jeffdoug@illinois.edu}}
\date{}
\begin{document}

\maketitle

\begin{abstract}
  We introduce a restricted latent class exploratory model for longitudinal data with ordinal attributes and respondent-specific covariates. Responses follow a time inhomogeneous hidden Markov model where the probability of a respondent's latent state at the current time point is conditional on the respondent's latent state at the previous time point as well as the respondent's covariates at the current time point. We prove that the model is identifiable, state a Bayesian formulation, and demonstrate its efficacy in a variety of scenarios through two simulation studies. We apply the model to response data from a mathematics examination, comparing the results to a previously published confirmatory analysis, and also apply it to emotional state response data which was measured over a several-day period.
\end{abstract}

\section{Introduction}

\noindent Latent class models \citep{goodman1974exploratory} are an exploratory method of classifying respondents into a finite set of latent or unobserved classes, under the assumption that responses are conditionally independent given the class membership of respondents. Restricted latent class models, or RLCMs \citep{haertel1984application, haertel1990continuous, vermunt2001use}, are a type of latent class model where the restrictions on certain parameters allow for an interpretation of the relationship between the response data and the latent state of the respondents. The discrete aspect of these latent states makes RLCMs useful in settings where classification of the latent state can serve the purpose of finding values of latent attributes with an interpretation that is of interest to the researcher and respondents. \blfootnote{This article was published in the \emph{Journal of Educational and Behavioral Statistics} (February, 2026): \url{https://doi.org/10.3102/10769986251415569}}

A major type of model where latent class membership is modeled over time is the hidden Markov model \citep{wiggins1955mathematical,baum1966statistical}. In the literature, a specific type of hidden Markov model that has been used to model latent class membership over time is referred to by the name ``latent transition analysis'' \citep{collins1992latent,hagenaars1990categorical,poulsen1983latent,van1990mixed}. A hidden Markov model has three components: the emission probabilities, which are the probabilities of response values conditional on the value of the latent state, the transition probabilities between time points, which are the probabilities of latent state values at a time point conditional on the latent state value at the previous time point, and the marginal probability of the latent state at the initial time point \citep{miller2016advanced}. Several latent class models for which the transition model is a hidden Markov model have been introduced. \cite{wang2018tracking} introduced a model with binary attributes and binary attributes that models the transition matrix as a function of covariates. \cite{chen2018hidden} presented a model that uses a categorical distribution over the number of possible trajectories. In the model of \cite{zhang2020multilevel}, interventions are related to the changes in the latent state. \cite{li2016latent} and \cite{kaya2017assessing} use the deterministic-input noisy-and-gate (DINA) model and both the DINA and the deterministic input noisy-or-gate model (DINO) models respectively as measurement models in a latent transition analysis framework. The \texttt{R} package \texttt{TDCM} \citep{madison2025tdcm} provides functionality for fitting the transition diagnostic classification model \citep{madison2018assessing}, a confirmatory model (where the latent structure is prespecified) with binary attributes and binary data.

Two relevant longitudinal models which do not fit into the hidden Markov model framework are that of \citet{chen2020multivariate}, for which the probability of each latent state is conditional on shared parameters and where the latent state is modeled by a multivariate probit specification, and that of \citet{bartolucci2009multivariate}, which is a longitudinal model for polytomous data with random effects that take on discrete values.

We consider latent class models in which the latent classes arise from vectors where each component is a level of a latent attribute. Much of the previous work of this type of RLCM utilized binary latent attributes. When performing a diagnosis, attributes with multiple levels (polytomous attributes) allow for a richer description of a condition or knowledge state. A concept utilized by some models of this type is the \(Q\)-matrix \citep{rupp2010diagnostic}. \(Q\)-matrices specify a relationship between the latent state and the response values: they enter models in a manner similar to variable selection matrices. The particular form the \(Q\)-matrix takes depends on the model being specified.

There has been work on models that utilize polytomous attributes. \cite{chen2013general} introduced a model where the attributes are expert-defined. In the model of \citet{sun2013polytomous}, the attributes have particular levels relating to a polytomous \(Q\)-matrix. The models of \citet{he2023sparse} and \citet{wayman2025restricted} both introduced RLCMs which are similar to the model presented in this manuscript, but are only for the cross-sectional setting. The model presented in Chapter 5 of \citet{bartolucci2012latent} is similar to the model we introduce here, but  our model uses a multivariate probit specification for the transition model which incorporates covariates through the mean of the underlying continuous random vector, a parameterization not included there.

Regarding identifiability, \citet{liu2013theory} established identifability for a cross-sectional model with binary attributes and binary responses which uses a \(Q\)-matrix. \citet{xu2017identifiability} and \citet{xu2018identifying} established identifiability for a cross-sectional model for with binary attributes, binary responses, a particular monotonicity condition, and a \(Q\)-matrix that satisfies certain conditions. More recently, \citet{liu2023identifiability} established the identifiability of a longitudinal model with binary responses, binary attributes whose transition probabilities follow a hidden Markov model, and a \(Q\)-matrix satisfying certain conditions. 

This paper introduces an RLCM where latent states consisting of polytomous attributes change over time and where covariates can help explain transitions amongst components of the latent attribute vectors. We do this by extending to the longitudinal setting the RLCM of \citet{wayman2025restricted} for cross-sectional data where respondent-specific covariates are related to the respondent's latent state through a multivariate probit model. Our model uses a restricted hidden Markov model to uncover structure in the attribute change process: observed responses occur with a probability conditional on the value of a latent variable (the emissions probabilities) and the transition probabilities for each latent state value are conditional only on the previous time point. Just as exploratory restricted latent class models have proven useful for diagnostic models by finding structure in general finite mixture models in the cross-sectional setting, when we move to the longitudinal setting, we have an analogous benefit in finding greater structure in hidden Markov models.

Compared to some previous models, ours is exploratory rather than confirmatory. We demonstrate in our education application that this exploratory model improves upon the best confirmatory model \citep{tang2021does} for a particular dataset \citep{zhan2021data} measuring performance on a mathematics assessment and the effectiveness of a particular intervention. That model, the sLong-DINA, has a unidimensional higher order factor, whereas our model utilizes a multivariate probit whose correlation matrix can capture more associations amongst the latent attributes. Combined with the fact that the \(Q\)-matrix is not pre-specified, this leads to a more accurate representation of the underlying attributes and how they relate to performance.

The structure of this paper is as follows. We first outline the main components of the model, and then state a Bayesian formulation of the model from whose posterior we can sample. We then show that the model is identifiable. Next, we describe the sampling algorithm, which makes use of parameter expansion. We display simulation results which demonstrate the accuracy of parameter estimation in a variety of scenarios. We then apply the model to two different sets of longitudinal data: data that were gathered as part of an education study, and data that were gathered to study respondents' emotional states over time. We conclude with a discussion.

\section{Methodology}

In the following, for any \(S \in \mathbb{N}\), let \([S]\) denote the set \(\{1, 2, \ldots, S\}\), and for any matrix \(A\) let \(A^\prime\) denote the transpose of \(A\). We observe the responses of \(N\) respondents at \(T\) time points to the same questionnaire of \(J\) questions with ordinal responses (``items''); we denote the response of respondent \(n\) at time point \(t\) as \(Y_n^t = (Y_{n1}^t, \ldots, Y_{nJ}^t)\), where for each \(j \in [J]\) we have \(Y_{nj}^t \in \{0, 1, \ldots, M_j - 1\}\). We also observe a respondent-specific vector of \(D\) covariates, denoted \(X_n^t \in \mathbb{R}^{1 \times D}\). We assume each respondent has a \(K\)-dimensional latent state (also referred to as the ``attribute profile'') at each time point \(t\), which we denote as \(\alpha_n^t = (\alpha_{n1}^t, \ldots, \alpha_{nK}^t) \in \{0, \ldots, L - 1\}^K =: A_L\), where \(L\) is a fixed natural number.

As a time inhomogeneous \citep{seneta2006non} hidden Markov model, our model includes both an emissions matrix, here a matrix of latent state-conditional item response probabilities which is the same for all respondents and all time points, as well as a set of transition matrices between all time points and a vector of probabilities for the latent state at the initial time point, both of which can vary across time points and across respondents. We denote the \((\sum_{j=1}^J M_j) \times L^K\) emissions matrix by \(B = \left(p(Y_{n}^t \mid \alpha_n^t, \theta_m)\right)\) (this matrix is the same for all \(n\) and all \(t\)). This component of the model is the measurement model and we thus denote its parameters as \(\theta_m\).

We write the \(L^K \times L^K\) transition matrix for respondent \(n\) from time \(t-1\) to time \(t\) as \(U_{n, t, t-1} = \left(p(\alpha_n^t \mid \alpha_n^{t-1}, \theta_s)\right)\), and the vector of marginal probabilities (of dimension \(L^K\)) for the latent state of respondent \(n\) at time \(t\) as \(\pi_n^t = (p(\alpha_n^t \mid \theta_s))\). The set of transition matrices for all \(n\) for \(t \in \{2, \ldots, T\}\) and the initial latent state probabilities for all \(n\) at \(t = 1\) form the structural model, so we denote the parameters relevant to these components as \(\theta_s\).

\begin{figure*}
  \centering
  \includegraphics[width=0.4\textwidth]{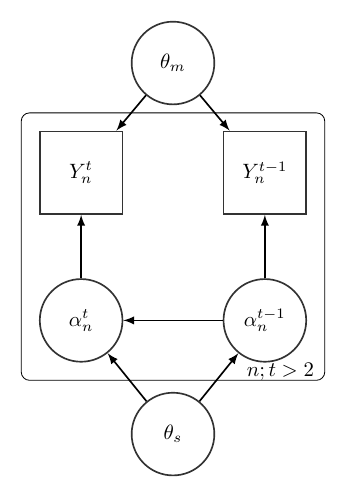}
  \caption{\label{fig:simplifiedmodel} Simplified version of model in directed graphical model form (for alt text, see Supplementary Material A)}
\end{figure*}

We now describe the components of the model, namely the measurement model, structural model, and a monotonicity condition regarding the latent state and the response data. A simplified version of the model is shown in Figure \ref{fig:simplifiedmodel} in directed graphical model form \citep{murphy2012machine}.

\subsection{Measurement Model}

The measurement model forms the elements of the emissions matrix \(B = (p(Y_{n}^t \mid \alpha_n^t, \theta_m))\). It is a cumulative probit link model \citep{agresti2015foundations}, namely, for all \(n \in [N]\), \(t \in [T]\), \(j \in [J]\) and \(m \in \left\{0, \ldots, M_j - 1\right\}\), \begin{equation}
  p(Y_{nj}^t = m \mid \alpha_n^t, \theta_m) = p(Y_{nj}^t = m \mid \alpha_n^t, \beta_j, \kappa_j) = \Phi(\kappa_{j, m + 1} - d_n^t \beta_{j}) - \Phi(\kappa_{j, m} - d_n^t \beta_{j})
\end{equation}

\noindent where \(\theta_m = (\kappa, \beta)\), \(\Phi\) denotes the cdf of the standard normal distribution, where for each \(j\), \(\kappa_{j0} < \kappa_{j1} < \cdots < \kappa_{j M_j}\) and \(\kappa_{j0} = -\infty, \kappa_{j1} = 0, \kappa_{j M_j} = \infty\) for identifiability reasons, and where \(d_n^t\) is a function of \(\alpha_n^t\) and is called the design vector of respondent \(n\).

We use a cumulative coding \citep{he2023sparse,wayman2025restricted} of \(\alpha_n^t\) in order to relate the effect of potentially each level and dimension of \(\alpha_n^t\) to the observed response \(Y_{nj}^t\). Specifically, for \(\alpha_n^t = (\alpha_{n1}^t, \ldots, \alpha_{nK}^t)\) we define, for all \(k \in [K]\), the functions \(d_k: A_L \to \{0, 1\}^L\) by \(d_k(\alpha_n^t) = (I(\alpha_{nk}^t \geq 0), I(\alpha_{nk}^t \geq 1), \ldots, I(\alpha_{nk}^t \geq L - 1))\). We define the function \begin{equation}d: A_L \to \{0, 1\}^{\prod_{k=1}^K L}\end{equation} by \(d(\alpha_n^t) = d_1(\alpha_n^t) \otimes d_2(\alpha_n^t) \otimes \cdots \otimes d_K(\alpha_n^t)\), and we write the value of \(d\) evaluated at \(\alpha_n^t\) as \(d_n^t\). Our definition for \(d_n^t\) corresponds with the saturated model that includes all main effects and interaction terms. We may fit reduced models by only using a subset of components of the design vector, which includes only the interactions we desire. For example, reduced models might only include main effects (order \(1\)), or main effects and two-way interactions (order \(2\)), up to a saturated model of order \(K\).

\subsection{Monotonicity Condition}

For two arbitrary \(u, v \in A_L\), \(u = (u_1, \ldots, u_K)\) and \(v = (v_1, \ldots, v_K)\), we write \(u \geq v\) if for all \(k \in [K]\) we have \(u_k \geq v_k\).

So that our ordinal latent state vectors have an interpretation related to observable ordinal quantities, we impose a monotonicity condition used in earlier models \citep{wayman2025restricted,culpepper2019exploratory}: for all \(t \in [T]\), \(n \in [N]\), and \(t \in [T]\),
\begin{equation}\label{monotoncond}
\forall\, u, v \in A_L \quad u \geq v \implies p(Y_{nj}^t > m \mid u, \beta_j, \kappa_j) \geq p(Y_{nj}^t > m \mid v, \beta_j, \kappa_j),
\end{equation}

\noindent equivalently, for all \(u, v \in A_L, u \geq v \implies d_u \beta_j \geq d_v \beta_j\) (where \(d_u\) is the design vector associated with \(u\)). This monotonicity condition restricts the parameter space for \(\beta_j\).

\subsection{Structural Model}

The structural model forms the transition matrices, i.e. for \(t \in \{2, 3, \ldots, T\}\), \(U_{n, t, t-1} = (p(\alpha_n^t \mid \alpha_n^{t-1}, \theta_s))\), as well as the initial latent state probabilities \(\pi_n^t = (p(\alpha_n^t \mid \theta_s))\) for \(t = 1\). The structural model is a multivariate probit model \citep{mcdonald1967nonlinear,ashford1970multi,christoffersson1975factor,muthen1978contributions} where for \(t \in \{2, \ldots, T\}\), the latent state at time point \(t\), \(\alpha_n^t\), is related to its value at time point \(t - 1\) as well as the covariates \(X_n^t\) through the mean of an underlying multivariate normal random variable, namely \begin{align}\label{mvprobit2T}
  p(\alpha_n^t \mid \alpha_n^{t - 1}, \theta_s) & = p(\alpha_n^t \mid \alpha_n^{t - 1}, \gamma, \lambda, \xi, R) \notag\\
  & = \int_{\gamma_{K\alpha_{nK}^t}}^{\gamma_{K, \alpha_{nK}^t + 1}} \ldots \int_{\gamma_{1\alpha_{n1}^t}}^{\gamma_{1, \alpha_{n1}^t + 1}} \phi_K(\alpha_n^{\ast, t}; X_n^t \lambda + d_{n, \text{otr}}^{t - 1} \xi, R) d\alpha_n^{\ast, t},
\end{align}

\noindent where \(\theta_s = (\gamma, \lambda, \xi, R)\), where \(\phi_K(x; a, b)\) is the density function of a multivariate normal random variable of dimension \(K\) in which \(x\) is the variable (row vector), \(a\) is the mean, and \(b\) is the covariance, and where \(d_{n, \text{otr}}^t\) is the design vector for the structural model with order otr (where \text{otr} stands for ``order of transition model''). For each \(k \in [K]\), we assume \(\gamma_{k0} < \gamma_{k1} < \cdots < \gamma_{k L}\), where we set \(\gamma_{k0} = -\infty\), \(\gamma_{k1} = 0\), and \(\gamma_{k L} = \infty\) for identifiability.

For \(t = 1\), we assume \begin{equation}\label{mvprobit1}
p(\alpha_n^1 \mid \theta_s) = p(\alpha_n^1 \mid \gamma, \lambda, R) = \int_{\gamma_{K\alpha_{nK}^1}}^{\gamma_{K, \alpha_{nK}^1 + 1}} \ldots \int_{\gamma_{1\alpha_{n1}^1}}^{\gamma_{1, \alpha_{n1}^1 + 1}} \phi_K(\alpha_n^{\ast, 1}; x_n^1 \lambda, R) d\alpha_n^{\ast, 1}
\end{equation}

\noindent We choose a correlation structure rather than a covariance structure for identifiability reasons (see Section 4).

We now detail the Bayesian model we implement that reflects the above components.

\section{Bayesian Model}

\begin{table}
\small
\begin{center}
\caption{\label{tab:paramslist} Variable and parameter descriptions for Bayesian model}
\vspace{0.5\baselineskip}
\begin{threeparttable}[t]
\begin{tabular}{lll}
\toprule
     & Symbol & Description \\
\midrule
\((Y, \alpha)\) & & \\
& \(Y\) & responses \\
& \(\alpha\) & latent states \\
 \midrule
     \(\theta_m\) & & \\
& \(\kappa\) & thresholds for measurement model \\
& \(\beta\) & slope parameters relating latent states to responses \\
  \midrule
     \(\theta_s\) & & \\
& \(\gamma\) & thresholds for multivariate probit specification \\
& \(\lambda\) & slope parameters relating covariates to latent states \\
& \(\xi\) & slope parameters relating latent states between time points \\
& \(R\) & underlying correlation matrix for latent states \\
  \midrule
     \(\theta_o\) & & \\
& \(Y^{\ast}\) & augmented variables for responses \\
& \(\alpha^{\ast}\) & augmented variables for latent states \\
& \(\delta\) & activation indicator variables for \(\beta\) \\
& \(\omega\) & part of prior for \(\delta\) \\
& \(V\) & diagonal of covariance matrix \\
\bottomrule
\end{tabular}
  \begin{tablenotes}
    \footnotesize
    \item \(\theta_m\) refers to measurement model parameters, \(\theta_s\) to structural model parameters, and \(\theta_o\) to other parameters.
  \end{tablenotes}
\end{threeparttable}
\end{center}
\end{table}

\begin{figure*}[t]
  \centering
  \includegraphics[width=0.8\textwidth]{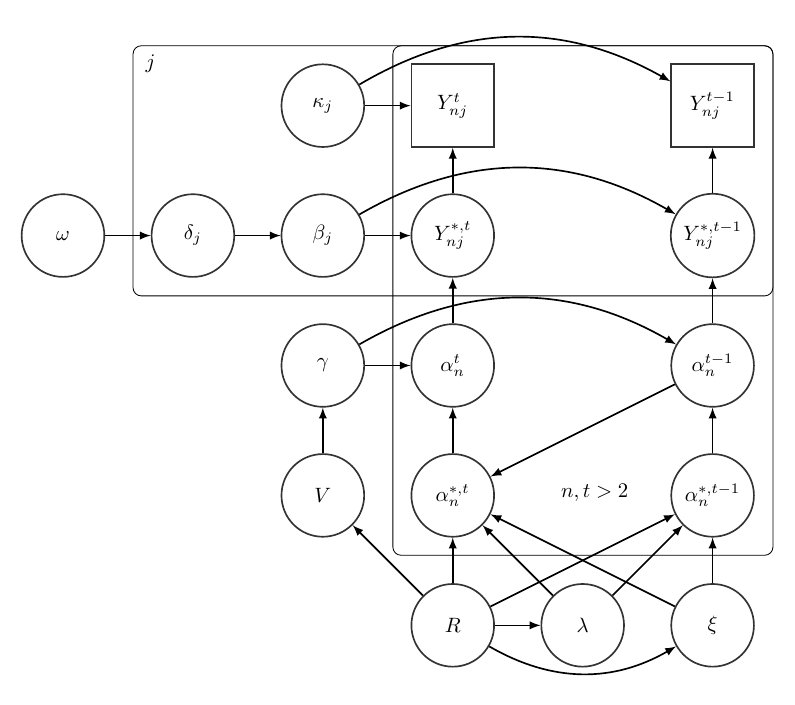}
  \caption{\label{fig:dgm-f1} Bayesian model in directed graphical model form, part one (for alt text, see Supplementary Material A)}
\end{figure*}

\begin{figure*}[t]
  \centering
  \includegraphics[width=0.3\textwidth]{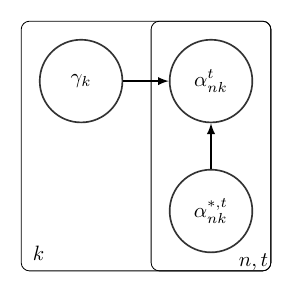}
  \caption{\label{fig:dgm-f2} Bayesian model in directed graphical model form, part two (for alt text, see Supplementary Material A)}
\end{figure*}

Our Bayesian model is formulated as a directed graphical model \citep{murphy2012machine}, the graph \(G\) for which is displayed in Figures \ref{fig:dgm-f1} and \ref{fig:dgm-f2} using plate notation \citep{murphy2012machine}. We label the set of vertices in \(G\) as \(Z = (Y, \alpha, \theta, \theta_o)\), where \(\theta = (\theta_m, \theta_s)\) is the set of measurement model parameters \(\theta_m\) and structural model parameters \(\theta_s\), and \(\theta_o\) denotes other parameters introduced for computational purposes; the variables and parameters are listed in Table \ref{tab:paramslist}. Parameters with an asterisk superscript are present for computational purposes.

Since our model \(p(Z)\) is a directed graphical model \citep{murphy2012machine} it ``admits a recursive factorization according to \(G\)'' \citep{lauritzen1996graphical}, i.e. \(p(Z) = \prod_{z \in Z} p(z \mid \text{pa}(z))\) where \(\text{pa}(z)\) refers to the parents of vertex \(z\).

From the recursive factorization, we have \begin{align}
  p(Z) & = p(Y \mid Y^\ast, \kappa) \cdot p(Y^\ast \mid \beta, \alpha) \cdot p(\kappa) \cdot p(\beta \mid \delta) \cdot p(\delta \mid \omega) \cdot p(\omega) \notag\\
  & \quad\quad \cdot \left[\prod_{t=2}^T p(\alpha^t \mid \alpha^{\ast, t}, \gamma) \cdot p(\alpha^{\ast, t} \mid \alpha^{t-1}, \lambda, \xi, R)\right] \cdot p(\alpha^1 \mid \alpha^{\ast, 1}, \gamma) \cdot p(\alpha^{\ast, 1} \mid \lambda, R) \notag\\
& \quad\quad \cdot p(\gamma \mid V) \cdot p(\lambda \mid R) \cdot p(\xi \mid R) \cdot p(V, R) \label{origmodel}
\end{align}

\noindent where we have taken the additional step of writing \(p(V \mid R) \cdot p(R) = p(V, R)\). Note that we have grouped across subscripts.

Since the model is a directed graphical model, it obeys the directed local Markov property \citep{lauritzen1996graphical} relative to \(G\), i.e. for any variable \(z \in Z\) we have \(z \ci \text{nd}(z) \mid \text{pa}(z)\) where \(\text{nd}(z)\) denotes the non-descendants of \(z\). Note that this implies \(z \ci (\text{nd}(z) \setminus \text{pa}(z)) \mid \text{pa}(z)\).

We now specify assumed relationships (likelihood, priors) for factors that appear in \eqref{origmodel}.

\subsection{Likelihood and Data Augmentation Prior for Observed Data}

We assume for all \(n \in [N]\), \(t \in [T]\), and \(j \in [J]\), \begin{equation}
  p(Y_{nj}^{\ast, t} \mid \alpha_n^t, \beta_j) = \phi(Y_{nj}^{\ast, t};\, d_n^t \beta_j, 1)
\end{equation}

\noindent where \(\phi(x; a, b)\) is the normal density with variable \(x\), mean \(a\), and variance \(b\), and \begin{equation}
  p(Y_{nj}^t \mid Y_{nj}^{\ast, t}, \kappa_j) = \sum_{m=0}^{M_j - 1} I(Y_{nj}^t = m) \cdot I(\kappa_{j m} < Y_{nj}^{\ast, t} \leq \kappa_{j, m + 1})
\end{equation}

\noindent which yields (see Supplementary Material B) \begin{equation}
p(Y_{nj}^t \mid \alpha_n^t, \beta_j, \kappa_j) = \int_{\kappa_{j Y_{nj}^t}}^{\kappa_{j, Y_{nj}^t + 1}} \phi(Y_{nj}^{\ast, t};\, d_n^t \beta_j, 1) dY_{nj}^{\ast, t}. \label{classcondresponseprob1}
\end{equation}

\noindent We assume \(p(\kappa_j) = I(-\infty = \kappa_{j0} < \kappa_{j1} < \cdots < \kappa_{j Mj} = \infty)\).

\subsection{Priors for Measurement Model Coefficients and Related Parameters}

We use the Dirac spike and normal slab prior for variable selection \citep{kuo1998variable} for each \(\beta_j\), namely, for all \(j \in [J]\):

\begin{equation}p(\beta_j \mid \delta_j) = c_j(\delta_j) \cdot I(\beta_j \in \mathcal{R}_j) \cdot \prod_{h=1}^H p(\beta_{hj} \mid \delta_{hj}), \quad p(\delta_j \mid \omega) = \prod_{h=1}^H p(\delta_{hj} \mid \omega)\end{equation} \begin{equation}p(\beta_{hj} \mid \delta_{j}) = p(\beta_{hj} \mid \delta_{hj}) = I(\delta_{hj} = 0) \cdot \mathrm{\Delta}(\beta_{hj}) + I(\delta_{hj} = 1) \cdot \phi(\beta_{hj}; 0, \sigma_{\beta}^2)\end{equation}

\begin{equation}
\mathcal{R}_j := \Big\{\beta_j \in \mathbb{R}^H : \forall\, u, v \in A_L \quad u \geq v \implies d_u \beta_j \geq d_v \beta_j \Big\}
\end{equation}

\noindent where \(\delta_{hj} \mid \omega \sim \text{Bernoulli}(\omega)\), \(\Delta\) is the Dirac delta generalized function, and \(\mathcal{R}_j\) is the region reflecting the monotonicity condition \eqref{monotoncond}. As in \citet{chen2020sparse} and \citet{wayman2025restricted}, we let \(\omega \sim \text{Beta}(\omega_0, \omega_1)\) where \(\omega_0\) and \(\omega_1\) are hyper-parameters.

\subsection{Priors for Structural Model Parameters}

For all \(n \in [N]\) and all \(t \in [T]\), we assume \begin{equation}\label{prioralpha}
p(\alpha_n^t \mid \alpha_n^{\ast, t}, \gamma_n^t) = \prod_{k=1}^K p(\alpha_{nk}^t \mid \alpha_{nk}^{\ast, t}, \gamma_k) = \prod_{k=1}^K I(\alpha_{nk}^{\ast, t} \in (\gamma_{k \alpha_{nk}^t}, \gamma_{k \alpha_{nk}^t + 1}])
\end{equation}

\noindent For all \(n \in [N]\) and all \(t \in \{2, 3, \ldots, T\}\), we assume \begin{equation}\label{prioralphastar2T}
\alpha_n^{\ast, t} \mid \lambda, \alpha_n^{t - 1}, \xi, R \sim N_K(X_n^t \lambda + d_{n, \text{otr}}^{t - 1} \xi, R)
\end{equation}

\noindent and for \(t = 1\), \begin{equation}\label{prioralphastar1}
\alpha_n^{\ast, t} \mid \lambda, R \sim N_K(X_n^t \lambda, R).
\end{equation}

\noindent These give (see Supplementary Material B) equations \eqref{mvprobit2T} and \eqref{mvprobit1}. We write \(\zeta = (\lambda^\prime, \xi^\prime)^\prime \in \mathbb{R}^{(D + H_{\text{otr}}) \times K}\) so \eqref{prioralphastar2T} and \eqref{prioralphastar1} can be written more simply as \begin{equation}
\alpha_n^{\ast, t} \mid \lambda, \alpha_n^{t - 1}, \xi, R \sim N_K(W_n^t \zeta, R)
\end{equation}

\noindent where \(W_n^t = (X_n^t, d_{n, \text{otr}}^{t - 1})\) for \(t \in \{2, 3, \ldots, T\}\) and \(W_n^t = (X_n^t, O)\) for \(t = 1\). Writing \(\alpha^{\ast, t} = ({\alpha_1^{\ast, t}}^\prime, \ldots, {\alpha_N^{\ast, t}}^\prime)^\prime\) and \(\alpha^\ast = ({\alpha^{\ast, 1}}^\prime, \ldots, {\alpha^{\ast, T}}^\prime)^\prime\), as well as \(W^t = ({W_1^t}^\prime, \ldots, {W_n^t}^\prime)^\prime\) and \(W = ({W^1}^\prime, \ldots, {W^T}^\prime)^\prime\)  gives \begin{equation}
  \alpha^\ast \mid \alpha^{1, \ldots, T-1}, \lambda, \xi, R \sim N_{TN, K}(W \zeta, I_{TN} \otimes R)
\end{equation}

\noindent (see Supplementary Material C).

We decompose a positive definite covariance matrix \(\Sigma\) as \(V^{1/2} R V^{1/2}\), where \(V = \text{diag}(v_1, \ldots, v_K) \in \mathbb{R}^{K \times K}\) and for all \(k \in [K]\), \(v_k > 0\). For \(p(R, V)\), we use a prior from \citet{wayman2025restricted}: \begin{equation}
  p(R, V) \propto (\det{R})^{-\frac{1}{2}(v_0 + K + 1)} \cdot \prod_{k \in [K]} \left[ \exp\left(-\frac{1}{2} v_k^{-1} A_{kk}\right) \cdot v_k^{-\frac{1}{2}(v_0 + 2)}\right]
\end{equation}

\noindent where for all \(k \in [K]\), \(A_{kk}\) are the diagonal elements of \(R^{-1}\) and \(v_0\) is a hyperparameter.

For each \(\gamma_k\), we use a prior introduced in \citet{wayman2025restricted} for latent variable models with a discrete latent state, namely, for each level \(l \in \{2, 3, \ldots, L - 1\}\), we assume \(\gamma_{kl} \ci \gamma_{k,l-2}, \gamma_{k, l-3}, \ldots, \gamma_{k3}, \gamma_{k2} \mid \gamma_{k, l-1}, v_k\) so that \begin{equation}
  p(\gamma_k \mid v_k) = p(\gamma_{k, L-1} \mid \gamma_{k, L-2}, v_k) \cdot p(\gamma_{k, L-2} \mid \gamma_{k, L-3}, v_k) \cdot \cdots \cdot p(\gamma_{k3} \mid \gamma_{k2}, v_k) \cdot p(\gamma_{k2} \mid v_k)
\end{equation}

\noindent and for each \(l \in \{2, 3, \ldots, L - 1\}\) we assume \(\gamma_{kl} \mid \gamma_{k,l-1}, v_k\) follows a left-truncated exponential whose rate parameter \(a\) is to be chosen such that the density is relatively flat: \begin{equation}
  p(\gamma_{kl} \mid \gamma_{k,l-1}, v_k) = I\left(\gamma_{kl} \in (\gamma_{k, l-1}, \infty)\right) \cdot a v_k^{1/2} \cdot \exp\left[-a v_k^{1/2} \cdot (\gamma_{kl} - \gamma_{k,l-1})\right].
\end{equation}

We let \(\lambda \mid R \sim N_{D, K}(0, I_D \otimes R)\), and \(\xi \mid R \sim N_{H_\text{otr}, K}(0, I_{H_\text{otr}} \otimes R)\), where \(N_{N, K}(A, B \otimes C)\) indicates a matrix variate normal distribution (see Supplementary Material D) with mean \(A\) and covariance matrix \(B \otimes C\). It follows that \(\zeta \mid R \sim N_{D + H_{\text{otr}}, K}(0, I_{D + H_{\text{otr}}} \otimes R)\).

\subsection{Integrability of the Model with Respect to the Variance Parameter}

\noindent By the directed local Markov property, \(\gamma \ci R \mid V\), so \(p(\gamma \mid V) \cdot p(R, V) = p(\gamma, R, V)\). If we use this in \eqref{origmodel}, we observe that \(V\) does not appear on the right-hand side of any conditional bars, so integration with respect to \(V\) is straightforward and gives \begin{align}
  p(Z \setminus V) & = p(Y \mid Y^\ast, \kappa) \cdot p(Y^\ast \mid \beta, \alpha) \cdot p(\kappa) \cdot p(\beta \mid \delta) \cdot p(\delta \mid \omega) \cdot p(\omega) \notag\\
  & \quad\quad \cdot \left[\prod_{t=2}^T p(\alpha^t \mid \alpha^{\ast, t}, \gamma) \cdot p(\alpha^{\ast, t} \mid \alpha^{t-1}, \lambda, \xi, R)\right] \cdot p(\alpha^1 \mid \alpha^{\ast, 1}, \gamma) \cdot p(\alpha^{\ast, 1} \mid \lambda, R) \notag\\
& \quad\quad \cdot p(\gamma, R) \cdot p(\lambda \mid R) \cdot p(\xi \mid R). \label{originalmodel}
\end{align}

\section{Model Identifiability}\label{sec-modelident}

\begin{definition}\label{identifdefgeneral}
For sets \(A, B\) and an equivalence relation \(\sim_E\) on \(A\), we say that \(A\) is identifiable from \(B\) up to \(\sim_E\) if and only if there exists a surjective function \(g: A \to B\) such that \begin{equation}
  \forall\, a, \widetilde{a} \in A \quad [g(a) = g(\widetilde{a}) \implies a \sim_E \widetilde{a}]
\end{equation}
\end{definition}

Often \(B\) is a family of density or likelihood functions and \(A\) is a parameter space. In this situation, for readability we will often write \(\{f(Y \mid \theta)\}\) to represent \(\{f_Y(\bullet \mid \theta) ; \theta \in \Theta\}\). We state the definitions of two types of identifiability specifically for this situation.

\begin{definition}\label{identifdefstrict}
  For a discrete random variable \(Y\) taking values on data space \(\mathcal{Y}\) and with \(\{f(Y \mid \theta)\}\) a family of likelihoods or densities for \(Y\), \(\Theta\) is strictly identifiable from \(\{f(Y \mid \theta)\}\) up to an equivalence relation \(\sim_E\) if and only if \begin{equation}
  \forall\, \theta, \widetilde{\theta} \in \Theta \quad [\forall y \in \mathcal{Y} \quad f_Y(y \mid \theta) = f_Y(y \mid \widetilde{\theta})] \implies \theta \sim_E \widetilde{\theta}.
\end{equation}
\end{definition}

\begin{definition}\label{identifdefgeneric}
Let \(\mathcal{N}_{\Lambda}\) denote the family of all \(\Lambda\)-null sets on the parameter space \(\Theta\), where \(\Lambda\) denotes the Lebesgue measure. For a discrete random variable \(Y\) taking values on data space \(\mathcal{Y}\) and with \(\{f(Y \mid \theta)\}\) a family of likelihoods or densities for \(Y\), \(\Theta\) is generically identifiable from \(\{f(Y \mid \theta)\}\) up to an equivalence relation \(\sim_E\) if and only if \begin{equation}
  \exists\, N \in \mathcal{N}_{\Lambda} \quad \forall\, \theta, \widetilde{\theta} \in \Theta \setminus N \quad [\forall y \in \mathcal{Y} \quad f_Y(y \mid \theta) = f_Y(y \mid \widetilde{\theta})] \implies \theta \sim_E \widetilde{\theta}.
\end{equation}
\end{definition}

\noindent Note that in Definitions \ref{identifdefstrict} and \ref{identifdefgeneric}, the surjective function of Definition \ref{identifdefgeneral} is the parameterization map.

When we say a parametric model for a variable \(Y\) and parameter space \(\Theta\) is strictly (generically) identifiable up to an equivalence relation, we mean that \(\Theta\) is strictly (generically) identifiable from \(\{f(Y \mid \theta)\}\) up to an equivalence relation. Also when applying either of the definitions, if the specific equivalence relation is not mentioned, it should be assumed that the relation is the equality relation. 

We will prove that if certain conditions hold, \(\Theta = (\Theta_s, \Theta_m)\) is strictly identifiable up to label swapping from \(\{f(Y \mid \theta)\}\), and if a certain subset of those conditions holds, \(\Theta\) is generically identifiable up to label swapping from \(\{f(Y \mid \theta)\}\). The label swapping equivalence relation on \(\Theta\) referred to here is the permutation of the rows and/or columns of all relevant vector or matrix parameters in \(\Theta\) that correspond to a permutation of the dimensions of the latent state vector (for example, if the labels of \(\alpha_n^t\) are permuted, the rows of \(\beta\) would have to be permuted as well since each row of \(\beta\) corresponds to a particular interaction effect of the dimensions of \(\alpha_n^t\), which have been permuted).

The formal statement of this result is as follows: \begin{theorem}
\label{identiffinalresult}
Define the following conditions: 
\begin{enumerate}
\item[(C1)] For all \(n \in [N]\) and \(t \in [T]\), all elements of \(\pi_n^t\) are greater than zero (this implies that \(\text{rank}(\text{diag}(\pi_n^t)) = L^K\))
\item[(C2)] \(\sum_{j=1}^J M_j \geq L^K\) (the total number of item responses is greater than or equal to the number of possible latent states)
\item[(C3)] there exist subsets \(J_1, J_2, J_3\) of items such that the ranks of the three resulting block matrices comprising the emissions matrix \(B\) are all of rank \(L^K\) (this implies that \(\text{rank}(B) = L^K\), i.e. \(B\) has full column rank)
\item[(C4)] for all \(n \in [N]\) and \(t \in \{2, 3, \ldots, T\}\), \(\text{rank}(U_{n, t, t-1}) = L^K\) (i.e. \(U_{n, t, t-1}\) has full rank)
\item[(C5)] \(N \geq D + H_{\text{otr}}\)
\item[(C6)] for all \(t \in [T]\), \(\text{rank}(W^t) = D + H_{\text{otr}}\) (i.e. \(W^t\) has full column rank)
\end{enumerate}

\noindent In addition, define two conditions on the \(\delta\) matrix:
\begin{enumerate}
\item[(D1)] \(\delta\) has for each attribute an active main effect on all levels of the attribute for at least two items 
\item[(D2)] \(\delta\) has no interaction effects active
\end{enumerate}

\noindent If conditions (C1) through (C6) and (D1) hold, then \(\{p(Y \mid \theta)\}\) is generically identifiable up to label swapping, and if conditions (C1) through (C6) and (D1) and (D2) hold, then \(\{p(Y \mid \theta)\}\) is strictly identifiable up to label swapping.
\end{theorem}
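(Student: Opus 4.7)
The plan is to prove Theorem \ref{identiffinalresult} in three stages following the standard template for HMM identifiability \citep{allman2009identifiability}, augmented with stages to recover the measurement and structural parameterizations. First, I would identify the HMM primitives — the emissions matrix $B$, the initial state distribution $\pi_n^1$, and the transition matrices $\{U_{n,t,t-1}\}$ — from the joint distribution of $Y_n^{1:T}$ for each respondent $n$. Second, I would recover the measurement-model parameters $(\kappa, \beta)$ from the identified $B$. Third, I would recover the structural-model parameters $(\gamma, \lambda, \xi, R)$ from the identified $\pi_n^1$ and $\{U_{n,t,t-1}\}$. This decomposition mirrors the strategy of \citet{liu2023identifiability} for the binary longitudinal case, extended here to ordinal responses, polytomous attributes, and covariate-dependent transitions.

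For Stage 1, at any fixed respondent-time pair $(n,t)$ the response vector $Y_n^t$ splits via (C3) into three item subsets whose sub-emission matrices each have rank $L^K$. Given $\alpha_n^t$, these three sub-responses are conditionally independent, so the joint contingency array is the Khatri--Rao product of the three sub-emission matrices weighted by $\pi_n^t$, and Kruskal's theorem recovers the three sub-emission matrices and $\pi_n^t$ up to a common column permutation. Because $B$ is shared across all $(n,t)$ while (C1) guarantees that $\pi_n^t$ has full support, the permutation can be fixed once by identifying $B$ from a single respondent, after which (C4) and the forward recursion $\pi_n^t = U_{n,t,t-1} \pi_n^{t-1}$ allow each $U_{n,t,t-1}$ to be peeled off consistently using the already-identified $B$.

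For Stage 2, each column of $B$ indexed by a latent state $\alpha$ is the vector of cumulative probit response probabilities $\Phi(\kappa_{j,m+1} - d_\alpha \beta_j) - \Phi(\kappa_{j,m} - d_\alpha \beta_j)$. Strict monotonicity of $\Phi$ recovers each $\kappa_{j,m+1} - d_\alpha \beta_j$, and the normalizations $\kappa_{j1} = 0$ separate the $\kappa$ thresholds from the scalars $d_\alpha \beta_j$. Under (D2) only main-effect components of $\beta_j$ are active, so the restricted design vectors $d_\alpha$ form a Kronecker product of cumulative-coding maps that is invertible given the monotonicity constraints in $\mathcal{R}_j$, yielding strict identifiability of $\beta_j$. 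Under only (D1), the guarantee of two items per attribute with active main effects on every level suffices to anchor the attribute labels, with the exceptional set of configurations where an interaction component conspires against this inversion being Lebesgue-null, by an argument parallel to \citet{xu2017identifiability} and \citet{wayman2025restricted}. For Stage 3, each $\pi_n^1$ and $U_{n,t,t-1}$ is a Gaussian orthant probability parameterized by $(\gamma, R, \lambda, \xi)$: I would recover $\gamma$ coordinatewise via marginal orthant probabilities and scalar probit inversion, $R$ from bivariate orthant probabilities across pairs of coordinates, and finally $\zeta = (\lambda', \xi')'$ from the linear system obtained by stacking the identified means $W_n^t \zeta$ across all $(n,t)$, whose coefficient matrix $W$ has full column rank by (C5) and (C6).

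The main obstacle I expect is Stage 1, specifically reconciling the Kruskal decomposition across heterogeneous respondents and time indices. The classical Kruskal argument applies cleanly when mixture proportions and component distributions are shared, but here $\pi_n^t$ varies with $(n,t)$, so the argument must be applied respondent by respondent and the resulting label permutations reconciled globally; the resolution, as sketched above, is that the shared $B$ pins down the permutation once and for all. A secondary subtlety is verifying that the exceptional set under (D1) alone is genuinely Lebesgue-null in the full parameter space $\Theta$ rather than merely in the $\beta$-subspace, which I would address by exhibiting the degenerate configurations as the vanishing locus of a nonzero polynomial in the parameters and invoking the standard fact that such loci are Lebesgue-null in Euclidean space.
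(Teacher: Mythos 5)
There is a genuine gap in your Stage 1, specifically in how you recover the transition matrices. Your mechanism identifies $B$ and each marginal $\pi_n^t$ by applying a Kruskal-type decomposition to the \emph{single-time-point} array obtained from the item tripartition in (C3), and then claims that ``the forward recursion $\pi_n^t = U_{n,t,t-1}\pi_n^{t-1}$ allows each $U_{n,t,t-1}$ to be peeled off.'' This fails: knowing $B$ and all the marginals $\pi_n^1,\ldots,\pi_n^T$ gives only $L^K$ linear equations per transition step for the $L^K(L^K-1)$ free entries of $U_{n,t,t-1}$, so the transition matrix is massively underdetermined. The marginal distributions of $Y_n^t$ at each time point carry no information about the temporal dependence structure. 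The paper's proof instead applies the three-way decomposition (Theorem 3 of Bonhomme et al., playing the role of Kruskal's theorem) to the joint distribution of responses at \emph{three consecutive time points} $(Y_n^{t-1},Y_n^t,Y_n^{t+1})$, conditioning on $\alpha_n^t$; the three modes of that tensor are the conditional matrices $m_n^{t-1,t}=B\,\mathrm{diag}(\pi_n^{t-1})\,U_{n,t,t-1}\,(\mathrm{diag}(\pi_n^t))^{-1}$, $m_n^{t,t}=B$, and $m_n^{t+1,t}=B\,U_{n,t+1,t}^\prime$, which are shown to have full column rank under (C1), (C3), (C4), and from which $U_{n,t,t-1}$ and $U_{n,t+1,t}$ are then extracted using the left-invertibility of $B$. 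You cite Allman et al., where exactly this three-time-point construction appears, but the argument you actually describe substitutes the (insufficient) marginal recursion for it; the single-time-point tripartition is used in the paper only as an auxiliary step (its Lemma 1) to identify $\pi_n^{t-1}$.

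The remaining stages are essentially aligned with the paper. Stage 3 matches the paper's multivariate probit argument (recover the mean matrix via scalar probit inversion using $\gamma_{k1}=0$, recover $\gamma$ sequentially, recover $R$ from bivariate orthant probabilities, then invert $M=W\zeta$ using full column rank of $W$); the one technical fact you should make explicit is that the bivariate orthant probability is \emph{strictly increasing} in the correlation (the paper invokes Drezner's derivative formula), since injectivity in $R_{ij}$ is what makes that step valid. For Stage 2 the paper does not re-derive identifiability of $(\kappa,\beta)$ from $B$ but cites the cross-sectional result of He et al.\ for both the generic case under (D1) and the strict case under (D1)--(D2); your sketch of the probit inversion and the Lebesgue-null exceptional set is plausible but would need the details you defer, and you could equally well close that stage by citation as the paper does.
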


See Supplementary Material E for the proof. We note that as part of the proof of Theorem \ref{identiffinalresult}, we prove that the single time-point multivariate probit model is strictly identifiable.

\section{Parameter Expansion and Algorithm}

In order to produce a model from which we can easily sample using a multiple-block Metropolis-Hastings algorithm \citep{chib2011introduction}, we perform a transformation of \(Z\) to \(\widetilde{Z}\); our algorithm will sample from the model \(p_{\widetilde{Z}}(\widetilde{z}) = p_Z(g^{-1}(\widetilde{z})) \cdot |\det{J_{g^{-1}}(\widetilde{z})}|\). This methodology fits into the category of ``parameter expansion'' \citep{liu1999parameter} or ``conditional augmentation'' \citep{meng1999seeking}.

We transform \(Z\) to \(\widetilde{Z}\), where \(\widetilde{\alpha^\ast} = \alpha^\ast V^{1/2}\), \(\widetilde{\gamma} = \gamma V^{1/2}\), \(\widetilde{\zeta} = \zeta V^{1/2}\), and \(\Sigma = V^{1/2} R V^{1/2}\), similar to the cross-sectional model we are extending \citep{wayman2025restricted} (see Supplementary Material F for details). Our Metropolis-within-Gibbs algorithm samples from \(p(\widetilde{Z})\), and transforms each sampled value using the inverse of the transformation to produce a sample from the original model. The sampling steps are shown in Supplementary Material G.

\section{Software}

A Python package \citep{wayman2025probitlcmlongit} containing an implementation of the procedures for simulation, data analysis, and model selection described in this manuscript has been released under a FLO (free/libre/open) license and is available at \url{https://github.com/ericwayman01/probitlcmlongit}. We note that the sampling algorithm is mostly implemented using the Armadillo C++ library \citep{sanderson2025armadillo,sanderson2019practical}. The data generating parameters for the simulations and relevant files for the data analyses are available at \url{https://github.com/ericwayman01/probitlcmlongit_related}. For the data sets that were used for the data analyses, please refer to the Data Availability Statement.

\section{Simulation Studies}

To investigate the efficacy of the model in a variety of scenarios, we performed two simulation studies: simulation study one has larger sample sizes (\(N = 250, 500, 1500, 3000\)) and a smaller number of time points (\(T = 3\)), and simulation study two uses smaller sample sizes (\(N = 125, 250, 500\)) and a larger number of time points (\(T = 30\)). Simulation study two also includes an examination of the performance of the model when a percentage of time points of data are missing. These choices for the two simulation studies roughly correspond to the two data applications we present in Section \ref{sec:applications}.

The simulations are set up as follows. For a given combination of \(K\) and \(L\), we create items in sets of five such that each set satisfies one of the following conditions: (1) for each dimension of the latent state, a set of items is related to that dimension alone, and (2) for each pair of dimensions of the latent state, a set of items is related to the pair of dimensions. For \(K = 2\) this gives 15 items, for \(K = 3\), 25 items, and for \(K = 4\), 45 items. The \(\delta\) and \(\beta\) matrices are chosen such that these relationships hold.

In order for the algorithm to be able to classify respondents correctly in terms of their latent class values, there need to be appreciably different patterns of item response class-conditional probabilities across classes: values of \(\kappa\) are generated such that this is the case. Our simulation contains two covariates, age and sex, and we generate \(\lambda\) such that there is some contribution of both covariates to the value of \(\alpha_n^t\). \(\xi\) is chosen such that each dimension of \(\alpha_n^{t-1}\) affects that same dimension of \(\alpha_n^t\), but no dimensions act in any combination.

We have 15 combinations of values of \(K, L\) and \(\rho\) (where \(\rho\) is the value of the off-diagonal elements of the correlation matrix \(R\)), and thus 15 sets of parameters as described above. We evaluate how the model performs in terms of parameter recovery for multiple sample sizes: we generate data from the model, run the model, obtain parameter estimates, and compare these estimates to the data-generating parameter values (this last step of the procedure is described in a later paragraph).

In the simulations, we set hyper-parameters to the following values for all scenarios: \(\sigma_{\beta}^2 = 2.0\), \(\omega_0 = 0.5\), and \(\omega_1 = 0.5\), and \(a = 1000^{-1}\). We set \(v_0 = K + 1\) so that we have uniform priors for the correlations \citep{barnard2000modeling,gelman2007data}. We tuned \(\sigma_{\kappa}^2\) for each sample size so that its acceptance rate was roughly 40\%. For all scenarios, the order of the measurement model was set to 2, and the order of the transition model was set to 1. For both simulation studies, for each replication we use a burn-in period of 6,000 draws and a post burn-in phase of 10,000 draws.

To evaluate convergence of the model, for all scenarios and all replications we ran the Geweke test \citep{geweke1992evaluating} on each element of each matrix parameter. For every scenario, it is the case that for every element, fewer than 2\% of the replications yield a test statistic that falls outside the 95\% confidence interval, and for most parameters the percentage is either zero or close to it. In addition, we examined integrated autocorrelation time, or IAT  \citep{liu2008monte,plummer2006coda,mcgibbon2019pyhmc} for a representative replication for each scenario. We found that most matrix and vector parameters had an average IAT of less than 10 (corresponding to effective sample sizes of greater than 1,000); for some scenarios \(\beta\) had a higher IAT. We also examined trace plots for various elements of matrix parameters to confirm convergence visually as we settled on a sufficient chain length.

In simulation study two, initial values for missing data rows were set as follows. In this simulation, each respondent \(n\) has a known vector of time points for which \(Y_n^t\) is missing, which is denoted \((t_n^1, \ldots, t_n^i)\), \(i \in [T]\). For each respondent \(n\), if \(t_n^1 = 1\) we use the first value of data that is not missing which follows \(t_n^1\). Then, for any \(t > 1\) for respondent \(n\),  we proceed sequentially and for each \(t\) use the first value of data that is not missing which precedes \(t\).

We evaluate parameter recovery on an element-wise basis. In addition to the parameters of the model, we report recovery of a additional ``parameter,'' \(\eta\), which is the set of class-conditional item response probabilities. For a given parameter (e.g. \(\beta\)), denote an element (e.g. \(\beta_{hj}\) for some \(h\) and \(j\)) for the moment by \(\theta\), and denote by \(\theta^{(s, r)}\) the \(s\)th draw of \(\theta\) from the Markov chain for the \(r\)th replication in the post-burn-in phase. Letting \(S\) be the number of draws in the post-burn-in phase, for elements of all parameters except \(\delta\), we use the estimate of the posterior mean \(\widehat{\theta}^{(r)} = (1 / S) \sum_{s=1}^S \theta^{(s, r)}\); for elements of \(\delta\), we use the estimate of the posterior mode \(\widehat{\theta}^{(r)} = I\left((1/S) \sum_{s=1}^S \theta^{(s, r)} > 0.5\right)\). To measure how well parameters were recovered, for each replication we calculated a recovery metric: for elements of \(\gamma, \eta, R, \lambda, \xi\) and \(\beta\), we calculate the absolute error of estimation for each element for replication \(r\), namely for a scalar \(\theta\), \(\text{AE}_r(\theta, \widehat{\theta}^{r}) = |\theta - \widehat{\theta}^{r}|\). We then take the average of absolute error across all replications to arrive at mean absolute error (MAE) for that element. For elements of \(\delta\), we use correctness of estimation, namely for a scalar \(\theta\), \(\text{CE}_r(\theta, \widehat{\theta}^{r}) = I(\theta = \widehat{\theta}^{(r)})\). We then take the average of correctness of estimation across all replications to arrive at recovery accuracy for that element. Averaging these values across all elements of a matrix or vector parameter gives us respectively the average mean absolute error and average recovery accuracy for the matrix or vector parameter, which are the values reported.

The results of the simulation studies are reported in Supplementary Material H and are summarized here. Recovery of \(\gamma\), \(\eta\), \(R\), \(\lambda\), \(\xi\), \(\beta\), and \(\delta\) are evaluated. We also report recovery metrics for inactive and active coefficients of \(\beta\) and the corresponding \(\delta\) elements: for \(i \in {0, 1}\), average recovery accuracy across all \(\delta_{hj}\) for which \(\delta_{hj} = i\) is reported under the header \(\delta^i\) and average MAE of the corresponding elements of \(\beta\) is reported under \(\beta^i\). We also report the per-replication run time in minutes for each scenario for both simulation studies. The timing was performed using 4 cores of an 11th Gen Intel(R) Core(TM) i7-1160G7 processor (released in the year 2020) on a machine with 16 GB of RAM.

We observe that for most combinations of \(J, K, L\) and \(\rho\), as sample size increases parameter recovery improves.

\section{Applications}\label{sec:applications}

\subsection{Education Application}

\begin{table}
\small
\begin{center}
\caption{\label{tab:tangattributes} Attributes representing knowledge of rational number manipulations}
\vspace{0.5\baselineskip}
\begin{tabular}{rr}
\toprule
Attribute number & Attribute description \\
\midrule
1 & Rational numbers \\
2 & Related concepts of rational numbers \\
3 & Axis \\
4 & Addition and subtraction of rational numbers \\
5 & Multiplication and division of rational numbers \\
6 & Mixed operation of rational numbers \\
\bottomrule
\end{tabular}
\end{center}
\end{table}

We apply the model to the dataset \citep{zhan2021data} used in \citet{tang2021does}, a study which aimed to measure the effectiveness of two types of feedback for math test takers: CDF (cognitive diagnostic feedback) and CIRF (correct-incorrect response feedback). The test had 18 items, 12 of which were multiple-choice and 6 of which were calculations. The items were designed to diagnose whether or not students had mastered six latent attributes related to rational number operations \citep{tang2020development}. The six latent binary attributes are displayed in Table \ref{tab:tangattributes}. We note that in their study, \citet{tang2021does} designed a \(Q\)-matrix which reflects their assumptions of which latent attributes (representing mathematics skills) would need to be mastered in order to score correctly on each of the items. As noted in the Introduction, the particular definition of the \(Q\)-matrix depends on the model being specified. In the case of \citet{tang2021does}, their sLong-DINA model has no interaction terms, so a 1 for a particular attribute-item pair in the \(Q\)-matrix indicates that the attribute can enter into the equation for the latent state-conditional response probability for that item.

The dataset consists of item response data for 276 respondents. The respondents are grouped into almost equal size groupings: the diagnosis group, the traditional group, and the control group. Respondents took a math test three times, and thus the item response data consists of binary values indicating a correct or incorrect answer for each of the 18 items observed at three time points. The protocol was as follows: all respondents took the test once, and 24 hours afterward, CDF (cognitive diagnostic feedback) was provided to the diagnosis group and CIRF (correct-incorrect response feedback) was provided to the traditional group. The control group received no feedback. One week after the first test, the respondents took the test a second time, and feedback was once again provided to the different groups as above. Finally, all respondents took the test a third time after one week had passed.

We fit our longitudinal model, which is exploratory, to this dataset for values of \(K\) ranging from 2 through 6, with a measurement model order of 2 (i.e. including one-way main effects and two-way interactions) for interpretability. We also fit a confirmatory version of our model which uses a fixed \(\delta\) matrix corresponding to the fixed \(Q\)-matrix of \citet{tang2021does} rather than estimating \(\delta\) from the data. Dummy variables indicating the three groupings of respondents were used as covariates. We set hyperparameter values to \(\sigma_{\beta}^2 = 2.0\), \(\omega_0 = 0.5\), and \(\omega_1 = 0.5\), \(a = 1000^{-1}\), and \(v_0 = K + 1\). We performed hyperparameter tuning on \(\sigma_{\kappa}^2\) and chose its value such that its acceptance rate is roughly 40\%. We specify the order of the transition model to be 1 (only main effects). We ran the model using a burn-in period of 10,000 draws and a post-burn-in period of 20,000 draws. We consider two diagnostics for convergence, the Geweke test and the IAT. We observe that fewer than 2\% of the Geweke test statistics fall outside the 95\% confidence interval, and that on average for each parameter the effective sample size (number of draws divided by IAT) is greater than 200.

We utilize the WAIC \citep{watanabe2010asymptotic} to evaluate the choice of model (i.e. the choice of various possible values of \(K\) and \(L\)). The WAIC is an estimate of the generalization loss of a Bayesian model; the smaller is generalization loss, the smaller is the Kullback-Leibler distance from the true distribution to the posterior predictive distribution obtained from the selected model and the observed data. Here we use class-conditional values of the likelihood to calculate the WAIC \citep{merkle2019bayesian}. We store values of the conditional likelihood \begin{equation}
  p(y_n \mid \theta, \alpha_n) = \prod_{t=1}^T \prod_{j=1}^J \left[\Phi(\kappa_{y_{nj}^t + 1} - d_n^t \beta_j) - \Phi(\kappa_{y_{nj}^t} - d_n^t \beta_j)\right]
\end{equation}

\noindent evaluated at each \((\theta^{(s)}, \alpha^{(s)})\) in the post-burn-in phase of our sampling, using a thinning interval of 10 (see Supplementary Material I for derivation). These values were evaluated using the \texttt{waic} function of the \texttt{R} package \texttt{loo} \citep{vehtari2024loo}. The resulting WAIC values are displayed in Table \ref{tab:modelselresults}.

\begin{table}
\small
\begin{center}
\caption{\label{tab:modelselresults} Class-conditional WAIC for models, education application}
\vspace{0.5\baselineskip}
\begin{tabular}{rrrr}
\toprule
\(K\) & \(L\) & Notes & WAIC \\
\midrule
2 & 2 & & 12,425.73 \\
3 & 2 & & 11,576.91  \\
4 & 2 & & 10,984.37 \\
5 & 2 & & 10,269.64 \\
6 & 2 & & 9,523.10 \\
6 & 2 & Tang and Zhan delta & 11,437.60 \\
\bottomrule
\end{tabular}
\end{center}
\end{table}

We observe that our model, with its exploratory \(\delta\) matrix, has a better fit than the model of \citet{tang2021does} for all values of \(K\) greater than 3. We consider the parameter estimates for the \(K = 6\) case since \citet{tang2021does} assumed six attributes for their model and \(K = 6\) had the lowest WAIC value, indicating the best fit. The run time for this model with the above chain length was 12.194 minutes, utilizing 4 cores of an 11th Gen Intel(R) Core(TM) i7-1160G7 processor on a machine with 16 GB of RAM.

\begin{table}
\small
\begin{center}
\caption{\label{tab:applicbeta1} Measurement model main effects, education application}
\vspace{0.5\baselineskip}
\begin{tabular}{lcccccc}
\toprule
& \multicolumn{6}{c}{Effect} \\
Item & Intercept & \(\beta_6\) & \(\beta_5\) & \(\beta_4\) & \(\beta_3\) & \(\beta_2\) \\
\midrule
1 & -0.41 & 1.51 &  & 2.15 &  & 1.39 \\
2 & -0.57 & 1.49 &  &  &  & 3.10     \\
3 & -1.37 &  &  &  &  & 1.51         \\
4 & -0.65 &  &  &  &  &              \\
5 & -1.27 &  & 0.98 &  &  &          \\
6 & -1.43 &  &  &  & 0.95 &          \\
7 & -2.07 & 1.70 &  & 1.38 &  &      \\
8 & -1.14 & 1.40 &  &  &  & 1.50     \\
9 & -1.57 &  & 1.74 &  &  &          \\
10 & -1.69 &  &  & 1.84 &  &         \\
11 & -1.56 &  & 1.61 &  &  &         \\
12 & -1.29 &  &  &  &  &             \\
13 & -2.21 & 2.32 & 3.19 &  &  &     \\
14 & -2.14 &  & 2.82 &  & 3.44 &     \\
15 & -2.73 & 1.73 & 1.29 &  & 1.40 & \\
16 & -2.93 & 1.94 &  &  &  &         \\
17 & -2.40 & 1.54 &  &  &  &         \\
18 & -3.31 & 1.83 &  &  &  &         \\
\bottomrule
\end{tabular}
\end{center}
\end{table}

\begin{table}
\small
\begin{center}
\caption{\label{tab:applicbeta2} Measurement model interaction effects, education application}
\vspace{0.5\baselineskip}
\begin{tabular}{lccccccccccccc}
\toprule
& \multicolumn{13}{c}{Effect} \\
Item & \(\beta_{5, 6}\) & \(\beta_{4, 6}\) & \(\beta_{3, 6}\) & \(\beta_{3, 5}\) & \(\beta_{3, 4}\) & \(\beta_{2, 6}\) & \(\beta_{2, 5}\) & \(\beta_{2, 4}\) & \(\beta_{2, 3}\) & \(\beta_{1, 6}\) & \(\beta_{1, 5}\) & \(\beta_{1, 4}\) & \(\beta_{1, 3}\) \\
\midrule
1 &  &  &  &  &  &  &  &  &  &  &  &  \\
2 &  &  &  &  &  &  &  &  &  &  &  &  \\
3 & 1.80 & & 1.51 &  &  &  & 1.89 &  &  &  &  &  \\
4 &  & 2.34 &  &  &  &  &  &  &  &  &  &  \\
5 &  &  &  & 0.77 &  &  &  &  & 2.03 &  & 2.60 &  \\
6 &  &  &  & 0.75 &  &  &  &  &  &  &  &  \\
7 &  &  & 1.83 &  &  & 2.61 &  &  &  &  &  &  & 2.48 \\
8 &  &  &  &  &  &  &  &  &  &  & 2.97 &  & 1.95 \\
9 &  & 1.22 &  &  &  &  &  &  & 1.62 &  &  & 2.44 \\
10 &  &  &  &  &  &  &  &  &  &  &  &  \\
11 & 1.22 &  &  &  &  &  &  &  &  &  &  &  & 1.20 \\
12 &  &  &  &  &  &  &  &  & 1.76 & 1.96 &  &  \\
13 &  &  &  &  &  &  &  & 1.68 &  &  &  &  \\
14 &  &  &  &  &  &  &  &  & 3.83 &  &  &  \\
15 &  &  &  &  &  &  &  &  & 2.60 &  &  &  \\
16 &  &  &  &  & 1.27 &  &  &  &  &  & 1.20 &  \\
17 &  &  &  &  &  &  &  &  &  & 1.67 &  &  \\
18 &  &  &  &  &  &  & 0.89 &  &  &  &  &  \\
\bottomrule
\end{tabular}
\end{center}
\end{table}

\begin{table}
\small
\begin{center}
\caption{\label{tab:modelqmatrix} Model-implied \(Q\)-matrix, education application}
\vspace{0.5\baselineskip}
\begin{tabular}{lcccccccccccccccccc}
\toprule
 & \multicolumn{18}{c}{Item} \\
Attribute & 1 & 2 & 3 & 4 & 5 & 6 & 7 & 8 & 9 & 10 & 11 & 12 & 13 & 14 & 15 & 16 & 17 & 18 \\
\midrule
1 &   &   &   &   & 1 &   & 1 & 1 & 1 &    &  1 &  1 &    &    &    &  1 &  1 &    \\
2 & 1 & 1 & 1 &   & 1 &   & 1 & 1 & 1 &    &    &  1 &  1 &  1 &  1 &    &    &  1 \\
3 &   &   & 1 &   & 1 & 1 & 1 & 1 & 1 &    &  1 &  1 &    &  1 &  1 &  1 &    &    \\
4 & 1 &   &   & 1 &   &   & 1 &   & 1 &  1 &    &    &  1 &    &    &  1 &    &    \\
5 &   &   & 1 &   & 1 & 1 &   & 1 & 1 &    &  1 &    &  1 &  1 &  1 &  1 &    &  1 \\
6 & 1 & 1 & 1 & 1 &   &   & 1 & 1 & 1 &    &  1 &  1 &  1 &    &  1 &  1 &  1 &  1 \\
\bottomrule
\end{tabular}
\end{center}
\end{table}

The estimate of the \(\beta\) matrix (the average of draws of \(\beta\)) is shown in Table \ref{tab:applicbeta1} (which shows the main effects) and Table  \ref{tab:applicbeta2} (which shows the interaction effects). \(\beta_k\) indicates the the main effects for attribute \(k\), and \(\beta_{k,l}\) indicates the effects of the interaction of attributes \(k\) and \(l\). We have applied a sparsity criterion, namely that any element of \(\beta\) for which 0 falls into the 95\% equal-tail credible interval is deemed inactive. Attributes which had no significant effects were excluded from the tables. Attribute 1 has no significant main effects, and all other attributes load onto a different combination of items. Items 4 and 12 correspond to no main effects. Interaction effects are present for all items other than 1, 2 and 10. We observe a significant amount of sparsity for main effects, where most pairs of attributes are related to one, two or three items through interaction effects (one pair of attributes, attributes 1 and 6, have an interaction effect which loads on five items).

We examine the \(Q\)-matrix implied by the significant measurement model coefficients, which is shown in Table \ref{tab:modelqmatrix}. This \(Q\)-matrix is significantly denser than the one hypothesized by Tang and Zhan: it shows many items being related to at least three attributes and some related to more, with only one item being related to a single attribute.

\begin{table}
\small
\begin{center}
\caption{\label{tab:appliclambda} \(\lambda\) coefficients estimates, education application}
\vspace{0.5\baselineskip}
\begin{tabular}{lrrr}
\toprule
Attribute & Intercept & Diagnosis & Traditional \\
\midrule
1 & 0.94  & -0.66  & -0.71 \\
2 & -0.25 & 0.30   & 0.28   \\
3 & -1.83 & 0.66   & 0.50   \\
4 & -1.04 & 0.37   &   \\
5 &       & 0.29   &   \\
6 & -0.53 & 0.32   &   \\
\bottomrule
\end{tabular}
\end{center}
\end{table}

The estimates of the slope coefficient relating the latent state to covariates, \(\lambda\), is shown in Table \ref{tab:appliclambda}. We see that the diagnosis intervention (CDF) has a positive effect on every attribute except attribute 1, for which the effect was negative. We see that the traditional intervention (CIRF) has positive effects on attributes 2 and 3 and a negative effect on attribute 1 (attributes 4 through 6 were not significant). These results correspond to the conclusions of \citet{tang2021does}, which are that CDF is more effective than no feedback, and CDF is more effective than CIRF. For the 95\% equal-tail credible intervals for each coefficient of \(\lambda\), see Supplementary Material J.

\begin{table}
\small
\begin{center}
\caption{\label{tab:applicxi} \(\xi\) coefficients estimates, education application}
\vspace{0.5\baselineskip}
\begin{tabular}{lcccccc}
\toprule
& \multicolumn{6}{c}{Attributes at time \(t\)} \\
Attr. at \(t - 1\) & 1 & 2 & 3 & 4 & 5 & 6 \\
\midrule
1 & 0.79  &  &  &  & 0.67   & 0.39 \\
2 &       & 1.31  &  &  & 0.84   & \\
3 &       &  & 1.22  & &   & 0.44 \\
4 &       & &  & 1.12  & 0.46   & 0.49 \\
5 & -0.49 &  & 0.61  & 0.41  & 2.53 & 1.13 \\
6 & -0.51 &  &  & 0.46  & 1.38 & 2.13 \\
Intercept & -0.57 &  & 0.77  &  & -2.63 & -1.89 \\
\bottomrule
\end{tabular}
\end{center}
\end{table}

The matrix of coefficients \(\xi\) indicates how a respondent's attributes at a time point are affected by the respondent's attributes at the previous time point. The estimate of \(\xi\) is shown in Table \ref{tab:applicxi}. We see that there is a large positive number for each entry of the main diagonal of the table, which shows that if attribute \(k\) is present at time \(t - 1\), it is likely to be present at \(t\) as well (in the context of this application, this means that skills once mastered are maintained across time). Most relationships between attributes between sequential time points are positive. For the 95\% equal-tail credible intervals for each coefficient of \(\xi\), see Supplementary Material J.

\begin{table}
\small
\begin{center}
\caption{\label{tab:applicRmat} \(R\) (underlying correlation matrix) estimate, education application}
\vspace{0.5\baselineskip}
\begin{tabular}{lrrrrrr}
\toprule
 & 1 & 2 & 3 & 4 & 5 & 6 \\
\midrule
1 & 1.00 & -0.44 & 0.24 & -0.40 & -0.41 & -0.29 \\
2 & -0.44 & 1.00 & -0.08 & -0.08 & 0.41 & -0.50 \\
3 & 0.24 & -0.08 & 1.00 & -0.44 & -0.25 & -0.22 \\
4 & -0.40 & -0.08 & -0.44 & 1.00 & -0.03 & 0.24 \\
5 & -0.41 & 0.41 & -0.25 & -0.03 & 1.00 & 0.15 \\
6 & -0.29 & -0.50 & -0.22 & 0.24 & 0.15 & 1.00 \\
\bottomrule
\end{tabular}
\end{center}
\end{table}

Table \ref{tab:applicRmat} shows the estimates of the underlying correlation matrix relating the six attributes of the latent state. We observe almost no correlation between attribute pairs (2, 3), (2, 4), and (4, 5). We observe several correlations close to negative and positive 0.5.

\subsection{Emotional State Application}

We apply the model to response data \citep{shui2020datasetsynapse} collected from 140 respondents over a period of five days \citep{shui2021dataset} (two respondents from the original dataset with entire days of data missing were excluded). Data were collected by a device distributed to all participants which sent a request to the participant at six varying time points per day with a minimal interval of 90 minutes between requests \citep{shui2021dataset}. The response data was missing some time points for some respondents; we assume that these data points were missing completely at random \citep{marini1980maximum,little2021missing}. The missing data vectors are treated as a parameter with the same conditional independence and dependence assumptions in the graphical model as \(Y\). We sample from the augmented posterior: the sampling algorithm remains the same with one additional step of sampling the various \(Y_n^t\) as follows. For each respondent \(n\), the known vector of time points for which \(Y_n^t\) is missing is denoted \((t_n^1, \ldots, t_n^i)\), \(i \in [T]\). For respondent \(n\), for each time point \(t \in (t_n^1, \ldots, t_n^i)\) of missing data, the conditional of \(Y_{nj}^t\) collapsed on \(Y_{nj}^{t, \ast}\) is a categorical distribution with \begin{equation}\label{3-missingdatamodel}
  p(Y_{nj}^t = m \mid \alpha_n^t, \beta_j, \kappa_j) = \Phi\left(\kappa_{j, m + 1} - d_n^t \beta_{j}\right) - \Phi\left(\kappa_{j, m} - d_n^t \beta_{j}\right).
\end{equation}

\noindent Details on how time points for missingness and data initialization were handled are in Supplementary Material K.

The response data consisted of seventeen items, listed in the Item column of Table \ref{3-tab:itemdesc}: five items described as the TIPI-C, or Ten-Item Personality Inventory in China \citep{shui2021dataset,shui2020datasetsynapse,lu2020disentangling}, the ten-item PANAS, or Positive and Negative Affect Schedule \citep{watson1988development}, emotional valence, and emotional arousal. The TIPIC-C items range from 0 through 6, the PANAS items range from 0 through 4, and the valence and arousal items range from 0 through 4.

\begin{table}
\small
\begin{center}
\caption{\label{3-tab:itemdesc} Description of items used for response data, emotional state application}
\vspace{0.5\baselineskip}
\begin{tabular}{lll}
\toprule
Item & Description\\
\midrule
TIPIC-C 1 & extraversion (outgoing/energetic to solitary/reserved)\\
TIPIC-C 2 & agreeableness (friendly/compassionate to challenging/callous)\\
TIPIC-C 3 & conscientiousness (efficient/organized to extravagant/careless)\\
TIPIC-C 4 & openness to experience (inventive/curious to consistent/cautious)\\
TIPIC-C 5 & emotional stability (calm, stable to anxious)\\
PANAS 1 & upset\\
PANAS 2 & hostile\\
PANAS 3 & alert\\
PANAS 4 & ashamed\\
PANAS 5 & inspired\\
PANAS 6 & nervous\\
PANAS 7 & determined\\
PANAS 8 & attentive\\
PANAS 9 & afraid\\
PANAS 10 & active\\
Emotional valence & extremely negative to extremely positive\\
Emotional arousal & extremely calm to extremely excited\\
\bottomrule
\end{tabular}
\end{center}
\end{table}

We used four covariates for the analysis. The first two were dummy variables indicating time range of measurement, namely afternoon (12:30 - 18:29) and evening (18:30 - 23:59) as opposed to a baseline of morning (07:00 - 12:29). The other two are from the pre-test measurements, specifically from the Meaning of Life Questionnaire, or MLQ \citep{steger2006meaning}: we computed the two subscales for presence and search and use the \(z\)-scores of these as our other two covariates.

We fit five different models of increasing complexity, four of which are displayed in Table \ref{3-tab:waic} along with class-conditional WAIC values calculated as they were in the education application. For one of the five models, \(K = 4, L = 2\), we observed near collinearity between two latent attributes so we excluded this model from our consideration, and took this as evidence that the latent space has under four dimensions. We selected the model with the lowest WAIC value, namely \(K = 3, L = 3\). The chains had a burn-in period of 20,000 draws and a post-burn-in phase of 20,000 draws. The run time for this model and chain length was 47.880 minutes, utilizing 4 cores of an 11th Gen Intel(R) Core(TM) i7-1160G7 processor on a machine with 16 GB of RAM. As in the previous application, we use both the Geweke test and the IAT to evaluate convergence. The Geweke test statistic for every single one of the parameters falls within the 95\% acceptance region; on average for each parameter the effective sample size is always greater than 100.

\begin{table}
\small
\begin{center}
\caption{\label{3-tab:waic} Class-conditional WAIC for models, emotional state application}
\vspace{0.5\baselineskip}
\begin{tabular}{lll}
\toprule
  K & L & WAIC \\
\midrule
2 & 2 & 160,803.7\\
3 & 2 & 155,267.4\\
2 & 3 & 154,092.4\\
3 & 3 & 149,782.1\\
\bottomrule
\end{tabular}
\end{center}
\end{table}

\begin{table}
\small
\begin{center}
\caption{\label{3-tab:betacoeffs1} \(\beta\) coefficients estimates, main effects, emotional state application}
\vspace{0.5\baselineskip}
\begin{tabular}{lrrrrrrr}
\toprule
& \multicolumn{6}{c}{Attribute} \\
& [0 0 0] & [0 0 1] & [0 0 2] & [0 1 0] & [0 2 0] & [1 0 0] & [2 0 0] \\
\midrule
extraversion & 0.63 &  &  &  &  & 0.73 & 0.99 \\
agreeableness & 1.66 &  &  & 0.93 & 1.47 & 0.48 & 0.68 \\
conscientiousness & 1.00 &  &  & 0.94 &  & 0.95 &  \\
openness & 0.72 &  &  & 0.70 &  & 0.92 & 0.59 \\
stability & 1.24 &  &  & 1.17 & 1.41 & 0.70 & 0.25 \\
upset & -1.02 & 1.97 & 1.58 &  &  &  &  \\
hostile & -1.90 & 1.94 & 1.13 &  &  &  &  \\
alert & -1.27 & 1.42 & 1.33 &  &  &  &  \\
ashamed & -1.22 & 1.64 & 0.94 &  &  &  &  \\
inspired & -1.03 & 0.94 & 0.43 & 0.64 &  & 0.94 & 1.16 \\
nervous & -1.10 & 1.82 & 1.50 &  &  &  &  \\
determined & -0.60 & 0.72 & 0.64 & 0.79 &  & 1.14 & 0.76 \\
attentive & -0.26 & 0.52 & 0.57 & 1.11 &  & 1.34 & 0.62 \\
afraid & -1.50 & 1.93 & 1.81 &  &  &  &  \\
active & -0.54 & 0.82 & 0.68 & 0.37 & 0.89 & 1.49 & 1.69 \\
valence & 0.96 &  &  & 1.29 & 1.14 & 1.31 & 1.08 \\
arousal & 0.68 &  & 0.68 &  &  &  & 0.68 \\
\bottomrule
\end{tabular}
\end{center}
\end{table}

\begin{table}
\small
\begin{center}
\caption{\label{3-tab:betacoeffs2} \(\beta\) coefficients estimates, interaction effects, emotional state application}
\vspace{0.5\baselineskip}
\begin{tabular}{lrrrrrrr}
\toprule
& \multicolumn{7}{c}{Attribute} \\
& [0 1 2] & [0 2 1] & [0 2 2] & [1 1 0] & [2 0 1] & [2 1 0] & [2 2 0] \\
\midrule
extraversion & 0.74 & 1.71 &  &  &  & 0.74 & 1.33 \\
aggreeableness &  &  &  &  &  &  &  \\
conscientiousness &  & 1.52 &  &  &  &  & 0.92 \\
openness &  &  &  &  &  & 0.48 &  \\
stability &  &  &  &  &  &  &  \\
upset &  &  &  &  &  &  &  \\
hostile &  &  &  &  &  &  &  \\
alert &  &  &  &  &  &  &  \\
ashamed &  &  &  &  &  &  &  \\
inspired &  &  &  & 0.67 &  &  &  \\
nervous &  &  &  & 0.32 &  &  &  \\
determined &  &  &  & 0.60 &  &  & 0.91 \\
attentive &  &  &  &  &  &  &  \\
afraid &  &  & 1.75 &  &  &  &  \\
active &  & 0.84 &  &  &  & 0.67 &  \\
valence &  &  &  &  & 0.61 &  &  \\
arousal &  & 0.94 &  &  &  & 0.70 & 1.15 \\
\bottomrule
\end{tabular}
\end{center}
\end{table}

Tables \ref{3-tab:betacoeffs1} and \ref{3-tab:betacoeffs2} display the model's estimates of \(\beta\). We see in Table \ref{3-tab:betacoeffs1} that some groups of items are related to only one dimension of the latent state: the TIPIC-C (personality) items are related to attributes 1 and 2 only, while the seven of the ten PANAS (affect) items are only related to attribute 3. In Table \ref{3-tab:betacoeffs2}, we see that six of the items are involved in no interaction effects. For most of the items for which there are interaction effects, taking into account the attributes involved in both the main and interaction effects leads us to conclude that such items are involved in some way with all three attributes. Supplementary Material K contains further details of the data analysis.

\section{Discussion}

In this paper, we introduced a longitudinal extension of a cross-sectional RLCM with polytomous attributes and covariates. The model has similarities to a previous model introduced by \citet{bartolucci2012latent} but provides a new structure, namely a multivariate probit specification for the transition model which incorporates covariates through the mean of the continuous random vector underlying the discrete latent state vector. The modeling technique presented here lends itself more to diagnosis than do latent trait models such as item response theory or factor analysis. In educational studies, it is often of interest to see how skill mastery can change over time, in particular for measuring the effects of interventions. Here, a Markov modeling approach was presented with covariates with quantifiable effects on transitioning among the latent states. A flexible feature of this model is that the relationship between the item response probabilities and the latent state need not be specified explicitly beforehand, and can be discovered accurately through an exploratory Bayesian approach. In fact, the results from our education application provide evidence that our exploratory model provided improved fit to an educational intervention study in comparison to a confirmatory RLCM as described in previous research. One implication is that our methods can be used to validate expert knowledge about the underlying \(Q\)-matrix and provide a more precise framework for evaluating intervention effects.

A Bayesian modeling approach and corresponding estimation algorithm were presented and shown through simulation to perform well under a variety of settings. In addition to being able to recover the underlying latent structure, our parameter expansion algorithm provides an efficient approach for inferring the parameters of the multivariate probit. Specifically, sampling the multivariate probit correlation matrix and thresholds are notoriously difficult. Our simulation results provide evidence that our novel Bayesian formulation was effective in the thresholds and attribute correlations.

As demonstrated in the emotional state application, we have extended the algorithm to handle intermittent missing data by treating the unobserved responses like other parameters of the model. This is a more efficient alternative to multiple imputation techniques that are common in the literature, and gives a simple approach to addressing missing data which is always an issue in studies such as the one we have presented. One avenue for future work would be to apply the model in a mental health or medical setting where patients transition between states: the model would help researchers evaluate the impact of cognitive therapies or medical treatments.

\section*{Statements and Declarations}

\subsection*{Declaration of Conflicting Interests}

The authors declared no potential conflicts of interest with respect to the research, authorship, and/or publication of this article.

\subsection*{Funding}

This work was partially supported by the U.S. National Science Foundation, under award numbers SES 2150628 and SES 1951057.

\subsection*{Data Availability}

The data \citep{zhan2021data} analyzed in the education application are available in the openICPSR repository, \url{https://doi.org/10.3886/E153061V1}.

The data \citep{shui2021dataset} analyzed in the emotional state application are available in the Synapse repository, \url{https://doi.org/10.7303/syn22418021}.

\bibliographystyle{apacite}
\bibliography{refs}

\newcommand{\noop}[1]{}
\begin{thebibliography}{}

\bibitem [\protect \citeauthoryear {%
Agresti%
}{%
Agresti%
}{%
{\protect \APACyear {2015}}%
}]{%
agresti2015foundations}
\APACinsertmetastar {%
agresti2015foundations}%
\begin{APACrefauthors}%
Agresti, A.%
\end{APACrefauthors}%
\unskip\
\newblock
\APACrefYear{2015}.
\newblock
\APACrefbtitle {Foundations of Linear and Generalized Linear Models}
  {Foundations of linear and generalized linear models}.
\newblock
\APACaddressPublisher{}{Wiley}.
\PrintBackRefs{\CurrentBib}

\bibitem [\protect \citeauthoryear {%
Albert%
\ \BBA {} Chib%
}{%
Albert%
\ \BBA {} Chib%
}{%
{\protect \APACyear {1993}}%
}]{%
albert1993bayesian}
\APACinsertmetastar {%
albert1993bayesian}%
\begin{APACrefauthors}%
Albert, J\BPBI H.%
\BCBT {}\ \BBA {} Chib, S.%
\end{APACrefauthors}%
\unskip\
\newblock
\APACrefYearMonthDay{1993}{}{}.
\newblock
{\BBOQ}\APACrefatitle {Bayesian analysis of binary and polychotomous response
  data} {Bayesian analysis of binary and polychotomous response data}.{\BBCQ}
\newblock
\APACjournalVolNumPages{Journal of the American Statistical
  Association}{88}{422}{669--679}.
\newblock
\begin{APACrefDOI} \doi{10.1080/01621459.1993.10476321} \end{APACrefDOI}
\PrintBackRefs{\CurrentBib}

\bibitem [\protect \citeauthoryear {%
Ashford%
\ \BBA {} Sowden%
}{%
Ashford%
\ \BBA {} Sowden%
}{%
{\protect \APACyear {1970}}%
}]{%
ashford1970multi}
\APACinsertmetastar {%
ashford1970multi}%
\begin{APACrefauthors}%
Ashford, J\BPBI R.%
\BCBT {}\ \BBA {} Sowden, R\BPBI R.%
\end{APACrefauthors}%
\unskip\
\newblock
\APACrefYearMonthDay{1970}{}{}.
\newblock
{\BBOQ}\APACrefatitle {Multi-variate probit analysis} {Multi-variate probit
  analysis}.{\BBCQ}
\newblock
\APACjournalVolNumPages{Biometrics}{26}{3}{535--546}.
\newblock
\begin{APACrefDOI} \doi{10.2307/2529107} \end{APACrefDOI}
\PrintBackRefs{\CurrentBib}

\bibitem [\protect \citeauthoryear {%
Barnard%
, McCulloch%
\BCBL {}\ \BBA {} Meng%
}{%
Barnard%
\ \protect \BOthers {.}}{%
{\protect \APACyear {2000}}%
}]{%
barnard2000modeling}
\APACinsertmetastar {%
barnard2000modeling}%
\begin{APACrefauthors}%
Barnard, J.%
, McCulloch, R.%
\BCBL {}\ \BBA {} Meng, X\BHBI L.%
\end{APACrefauthors}%
\unskip\
\newblock
\APACrefYearMonthDay{2000}{}{}.
\newblock
{\BBOQ}\APACrefatitle {Modeling Covariance Matrices in terms of standard
  deviations and correlations, with application to shrinkage} {Modeling
  covariance matrices in terms of standard deviations and correlations, with
  application to shrinkage}.{\BBCQ}
\newblock
\APACjournalVolNumPages{Statistica Sinica}{10}{4}{1281--1311}.
\newblock
\begin{APACrefURL}
  \url{https://www3.stat.sinica.edu.tw/statistica/oldpdf/A10n416.pdf}
  \end{APACrefURL}
\PrintBackRefs{\CurrentBib}

\bibitem [\protect \citeauthoryear {%
Bartolucci%
\ \BBA {} Farcomeni%
}{%
Bartolucci%
\ \BBA {} Farcomeni%
}{%
{\protect \APACyear {2009}}%
}]{%
bartolucci2009multivariate}
\APACinsertmetastar {%
bartolucci2009multivariate}%
\begin{APACrefauthors}%
Bartolucci, F.%
\BCBT {}\ \BBA {} Farcomeni, A.%
\end{APACrefauthors}%
\unskip\
\newblock
\APACrefYearMonthDay{2009}{}{}.
\newblock
{\BBOQ}\APACrefatitle {A multivariate extension of the dynamic logit model for
  longitudinal data based on a latent {M}arkov heterogeneity structure} {A
  multivariate extension of the dynamic logit model for longitudinal data based
  on a latent {M}arkov heterogeneity structure}.{\BBCQ}
\newblock
\APACjournalVolNumPages{Journal of the American Statistical
  Association}{104}{486}{816--831}.
\newblock
\begin{APACrefDOI} \doi{10.1198/jasa.2009.0107} \end{APACrefDOI}
\PrintBackRefs{\CurrentBib}

\bibitem [\protect \citeauthoryear {%
Bartolucci%
, Farcomeni%
\BCBL {}\ \BBA {} Pennoni%
}{%
Bartolucci%
\ \protect \BOthers {.}}{%
{\protect \APACyear {2012}}%
}]{%
bartolucci2012latent}
\APACinsertmetastar {%
bartolucci2012latent}%
\begin{APACrefauthors}%
Bartolucci, F.%
, Farcomeni, A.%
\BCBL {}\ \BBA {} Pennoni, F.%
\end{APACrefauthors}%
\unskip\
\newblock
\APACrefYear{2012}.
\newblock
\APACrefbtitle {Latent {M}arkov Models for Longitudinal Data} {Latent {M}arkov
  models for longitudinal data}.
\newblock
\APACaddressPublisher{}{CRC Press}.
\newblock
\begin{APACrefDOI} \doi{10.1201/b13246} \end{APACrefDOI}
\PrintBackRefs{\CurrentBib}

\bibitem [\protect \citeauthoryear {%
Baum%
\ \BBA {} Petrie%
}{%
Baum%
\ \BBA {} Petrie%
}{%
{\protect \APACyear {1966}}%
}]{%
baum1966statistical}
\APACinsertmetastar {%
baum1966statistical}%
\begin{APACrefauthors}%
Baum, L\BPBI E.%
\BCBT {}\ \BBA {} Petrie, T.%
\end{APACrefauthors}%
\unskip\
\newblock
\APACrefYearMonthDay{1966}{}{}.
\newblock
{\BBOQ}\APACrefatitle {Statistical inference for probabilistic functions of
  finite state {M}arkov chains} {Statistical inference for probabilistic
  functions of finite state {M}arkov chains}.{\BBCQ}
\newblock
\APACjournalVolNumPages{The Annals of Mathematical
  Statistics}{37}{6}{1554--1563}.
\newblock
\begin{APACrefDOI} \doi{10.1214/aoms/1177699147} \end{APACrefDOI}
\PrintBackRefs{\CurrentBib}

\bibitem [\protect \citeauthoryear {%
Bonhomme%
, Jochmans%
\BCBL {}\ \BBA {} Robin%
}{%
Bonhomme%
\ \protect \BOthers {.}}{%
{\protect \APACyear {2016}}%
}]{%
bonhomme2016estimating}
\APACinsertmetastar {%
bonhomme2016estimating}%
\begin{APACrefauthors}%
Bonhomme, S.%
, Jochmans, K.%
\BCBL {}\ \BBA {} Robin, J\BHBI M.%
\end{APACrefauthors}%
\unskip\
\newblock
\APACrefYearMonthDay{2016}{}{}.
\newblock
{\BBOQ}\APACrefatitle {Estimating multivariate latent-structure models}
  {Estimating multivariate latent-structure models}.{\BBCQ}
\newblock
\APACjournalVolNumPages{The Annals of Statistics}{44}{2}{540 -- 563}.
\newblock
\begin{APACrefDOI} \doi{10.1214/15-AOS1376} \end{APACrefDOI}
\PrintBackRefs{\CurrentBib}

\bibitem [\protect \citeauthoryear {%
J.~Chen%
\ \BBA {} de~la Torre%
}{%
J.~Chen%
\ \BBA {} de~la Torre%
}{%
{\protect \APACyear {2013}}%
}]{%
chen2013general}
\APACinsertmetastar {%
chen2013general}%
\begin{APACrefauthors}%
Chen, J.%
\BCBT {}\ \BBA {} de~la Torre, J.%
\end{APACrefauthors}%
\unskip\
\newblock
\APACrefYearMonthDay{2013}{}{}.
\newblock
{\BBOQ}\APACrefatitle {A general cognitive diagnosis model for expert-defined
  polytomous attributes} {A general cognitive diagnosis model for
  expert-defined polytomous attributes}.{\BBCQ}
\newblock
\APACjournalVolNumPages{Applied Psychological Measurement}{37}{6}{419--437}.
\newblock
\begin{APACrefDOI} \doi{10.1177/0146621613479818} \end{APACrefDOI}
\PrintBackRefs{\CurrentBib}

\bibitem [\protect \citeauthoryear {%
Y.~Chen%
, Culpepper%
\BCBL {}\ \BBA {} Liang%
}{%
Y.~Chen%
\ \protect \BOthers {.}}{%
{\protect \APACyear {2020}}%
}]{%
chen2020sparse}
\APACinsertmetastar {%
chen2020sparse}%
\begin{APACrefauthors}%
Chen, Y.%
, Culpepper, S.%
\BCBL {}\ \BBA {} Liang, F.%
\end{APACrefauthors}%
\unskip\
\newblock
\APACrefYearMonthDay{2020}{}{}.
\newblock
{\BBOQ}\APACrefatitle {A sparse latent class model for cognitive diagnosis} {A
  sparse latent class model for cognitive diagnosis}.{\BBCQ}
\newblock
\APACjournalVolNumPages{Psychometrika}{85}{1}{121--153}.
\newblock
\begin{APACrefDOI} \doi{10.1007/s11336-019-09693-2} \end{APACrefDOI}
\PrintBackRefs{\CurrentBib}

\bibitem [\protect \citeauthoryear {%
Y.~Chen%
\ \BBA {} Culpepper%
}{%
Y.~Chen%
\ \BBA {} Culpepper%
}{%
{\protect \APACyear {2020}}%
}]{%
chen2020multivariate}
\APACinsertmetastar {%
chen2020multivariate}%
\begin{APACrefauthors}%
Chen, Y.%
\BCBT {}\ \BBA {} Culpepper, S\BPBI A.%
\end{APACrefauthors}%
\unskip\
\newblock
\APACrefYearMonthDay{2020}{}{}.
\newblock
{\BBOQ}\APACrefatitle {A multivariate probit model for learning trajectories: A
  fine-grained evaluation of an educational intervention} {A multivariate
  probit model for learning trajectories: A fine-grained evaluation of an
  educational intervention}.{\BBCQ}
\newblock
\APACjournalVolNumPages{Applied Psychological Measurement}{44}{7-8}{515--530}.
\newblock
\begin{APACrefDOI} \doi{10.1177/0146621620920928} \end{APACrefDOI}
\PrintBackRefs{\CurrentBib}

\bibitem [\protect \citeauthoryear {%
Y.~Chen%
, Culpepper%
, Wang%
\BCBL {}\ \BBA {} Douglas%
}{%
Y.~Chen%
\ \protect \BOthers {.}}{%
{\protect \APACyear {2018}}%
}]{%
chen2018hidden}
\APACinsertmetastar {%
chen2018hidden}%
\begin{APACrefauthors}%
Chen, Y.%
, Culpepper, S\BPBI A.%
, Wang, S.%
\BCBL {}\ \BBA {} Douglas, J.%
\end{APACrefauthors}%
\unskip\
\newblock
\APACrefYearMonthDay{2018}{}{}.
\newblock
{\BBOQ}\APACrefatitle {A hidden {M}arkov model for learning trajectories in
  cognitive diagnosis with application to spatial rotation skills} {A hidden
  {M}arkov model for learning trajectories in cognitive diagnosis with
  application to spatial rotation skills}.{\BBCQ}
\newblock
\APACjournalVolNumPages{Applied Psychological Measurement}{42}{1}{5--23}.
\newblock
\begin{APACrefDOI} \doi{10.1177/0146621617721250} \end{APACrefDOI}
\PrintBackRefs{\CurrentBib}

\bibitem [\protect \citeauthoryear {%
Chib%
}{%
Chib%
}{%
{\protect \APACyear {2011}}%
}]{%
chib2011introduction}
\APACinsertmetastar {%
chib2011introduction}%
\begin{APACrefauthors}%
Chib, S.%
\end{APACrefauthors}%
\unskip\
\newblock
\APACrefYearMonthDay{2011}{}{}.
\newblock
{\BBOQ}\APACrefatitle {Introduction to Simulation and {MCMC} Methods}
  {Introduction to simulation and {MCMC} methods}.{\BBCQ}
\newblock
\BIn{} J.~Geweke, G.~Koop\BCBL {}\ \BBA {} H.~Van~Dijk\ (\BEDS), \APACrefbtitle
  {The {O}xford Handbook of {B}ayesian Econometrics} {The {O}xford handbook of
  {B}ayesian econometrics}\ (\BPGS\ 183--217).
\newblock
\APACaddressPublisher{}{Oxford University Press}.
\newblock
\begin{APACrefDOI} \doi{10.1093/oxfordhb/9780199559084.013.0006}
  \end{APACrefDOI}
\PrintBackRefs{\CurrentBib}

\bibitem [\protect \citeauthoryear {%
Christoffersson%
}{%
Christoffersson%
}{%
{\protect \APACyear {1975}}%
}]{%
christoffersson1975factor}
\APACinsertmetastar {%
christoffersson1975factor}%
\begin{APACrefauthors}%
Christoffersson, A.%
\end{APACrefauthors}%
\unskip\
\newblock
\APACrefYearMonthDay{1975}{}{}.
\newblock
{\BBOQ}\APACrefatitle {Factor analysis of dichotomized variables} {Factor
  analysis of dichotomized variables}.{\BBCQ}
\newblock
\APACjournalVolNumPages{Psychometrika}{40}{1}{5--32}.
\newblock
\begin{APACrefDOI} \doi{10.1007/BF02291477} \end{APACrefDOI}
\PrintBackRefs{\CurrentBib}

\bibitem [\protect \citeauthoryear {%
Collins%
\ \BBA {} Wugalter%
}{%
Collins%
\ \BBA {} Wugalter%
}{%
{\protect \APACyear {1992}}%
}]{%
collins1992latent}
\APACinsertmetastar {%
collins1992latent}%
\begin{APACrefauthors}%
Collins, L\BPBI M.%
\BCBT {}\ \BBA {} Wugalter, S\BPBI E.%
\end{APACrefauthors}%
\unskip\
\newblock
\APACrefYearMonthDay{1992}{}{}.
\newblock
{\BBOQ}\APACrefatitle {Latent class models for stage-sequential dynamic latent
  variables} {Latent class models for stage-sequential dynamic latent
  variables}.{\BBCQ}
\newblock
\APACjournalVolNumPages{Multivariate Behavioral Research}{27}{1}{131--157}.
\newblock
\begin{APACrefDOI} \doi{10.1207/s15327906mbr2701_8} \end{APACrefDOI}
\PrintBackRefs{\CurrentBib}

\bibitem [\protect \citeauthoryear {%
Cowles%
}{%
Cowles%
}{%
{\protect \APACyear {1996}}%
}]{%
cowles1996accelerating}
\APACinsertmetastar {%
cowles1996accelerating}%
\begin{APACrefauthors}%
Cowles, M\BPBI K.%
\end{APACrefauthors}%
\unskip\
\newblock
\APACrefYearMonthDay{1996}{}{}.
\newblock
{\BBOQ}\APACrefatitle {Accelerating {M}onte {C}arlo {M}arkov chain convergence
  for cumulative-link generalized linear models} {Accelerating {M}onte {C}arlo
  {M}arkov chain convergence for cumulative-link generalized linear
  models}.{\BBCQ}
\newblock
\APACjournalVolNumPages{Statistics and Computing}{6}{}{101--111}.
\newblock
\begin{APACrefDOI} \doi{10.1007/BF00162520} \end{APACrefDOI}
\PrintBackRefs{\CurrentBib}

\bibitem [\protect \citeauthoryear {%
Culpepper%
}{%
Culpepper%
}{%
{\protect \APACyear {2019}}%
}]{%
culpepper2019exploratory}
\APACinsertmetastar {%
culpepper2019exploratory}%
\begin{APACrefauthors}%
Culpepper, S\BPBI A.%
\end{APACrefauthors}%
\unskip\
\newblock
\APACrefYearMonthDay{2019}{}{}.
\newblock
{\BBOQ}\APACrefatitle {An exploratory diagnostic model for ordinal responses
  with binary attributes: Identifiability and estimation} {An exploratory
  diagnostic model for ordinal responses with binary attributes:
  Identifiability and estimation}.{\BBCQ}
\newblock
\APACjournalVolNumPages{Psychometrika}{84}{4}{921--940}.
\newblock
\begin{APACrefDOI} \doi{10.1007/s11336-019-09683-4} \end{APACrefDOI}
\PrintBackRefs{\CurrentBib}

\bibitem [\protect \citeauthoryear {%
Drezner%
\ \BBA {} Wesolowsky%
}{%
Drezner%
\ \BBA {} Wesolowsky%
}{%
{\protect \APACyear {1990}}%
}]{%
drezner1990computation}
\APACinsertmetastar {%
drezner1990computation}%
\begin{APACrefauthors}%
Drezner, Z.%
\BCBT {}\ \BBA {} Wesolowsky, G\BPBI O.%
\end{APACrefauthors}%
\unskip\
\newblock
\APACrefYearMonthDay{1990}{}{}.
\newblock
{\BBOQ}\APACrefatitle {On the computation of the bivariate normal integral} {On
  the computation of the bivariate normal integral}.{\BBCQ}
\newblock
\APACjournalVolNumPages{Journal of Statistical Computation and
  Simulation}{35}{1-2}{101--107}.
\newblock
\begin{APACrefDOI} \doi{10.1080/00949659008811236} \end{APACrefDOI}
\PrintBackRefs{\CurrentBib}

\bibitem [\protect \citeauthoryear {%
Gelman%
\ \BBA {} Hill%
}{%
Gelman%
\ \BBA {} Hill%
}{%
{\protect \APACyear {2007}}%
}]{%
gelman2007data}
\APACinsertmetastar {%
gelman2007data}%
\begin{APACrefauthors}%
Gelman, A.%
\BCBT {}\ \BBA {} Hill, J.%
\end{APACrefauthors}%
\unskip\
\newblock
\APACrefYear{2007}.
\newblock
\APACrefbtitle {Data Analysis Using Regression and Multilevel/Hierarchical
  Models} {Data analysis using regression and multilevel/hierarchical models}.
\newblock
\APACaddressPublisher{}{Cambridge University Press}.
\newblock
\begin{APACrefDOI} \doi{10.1017/CBO9780511790942} \end{APACrefDOI}
\PrintBackRefs{\CurrentBib}

\bibitem [\protect \citeauthoryear {%
Geweke%
}{%
Geweke%
}{%
{\protect \APACyear {1992}}%
}]{%
geweke1992evaluating}
\APACinsertmetastar {%
geweke1992evaluating}%
\begin{APACrefauthors}%
Geweke, J.%
\end{APACrefauthors}%
\unskip\
\newblock
\APACrefYearMonthDay{1992}{}{}.
\newblock
{\BBOQ}\APACrefatitle {Evaluating the accuracy of sampling-based approaches to
  the calculation of posterior moments} {Evaluating the accuracy of
  sampling-based approaches to the calculation of posterior moments}.{\BBCQ}
\newblock
\BIn{} J\BHBI M.~Bernardo, J\BPBI O.~Berger, A\BPBI P.~Dawid\BCBL {}\ \BBA {}
  A\BPBI F\BPBI M.~Smith\ (\BEDS), \APACrefbtitle {Bayesian Statistics 4:
  Proceedings of the {F}ourth {V}alencia {I}nternational {M}eeting, Dedicated
  to the memory of {M}orris {H}. {D}e{G}root, 1931--1989} {Bayesian statistics
  4: Proceedings of the {F}ourth {V}alencia {I}nternational {M}eeting,
  dedicated to the memory of {M}orris {H}. {D}e{G}root, 1931--1989}\ (\BPGS\
  169--193).
\newblock
\APACaddressPublisher{}{Oxford University Press}.
\newblock
\begin{APACrefDOI} \doi{10.1093/oso/9780198522669.003.0010} \end{APACrefDOI}
\PrintBackRefs{\CurrentBib}

\bibitem [\protect \citeauthoryear {%
Goodman%
}{%
Goodman%
}{%
{\protect \APACyear {1974}}%
}]{%
goodman1974exploratory}
\APACinsertmetastar {%
goodman1974exploratory}%
\begin{APACrefauthors}%
Goodman, L\BPBI A.%
\end{APACrefauthors}%
\unskip\
\newblock
\APACrefYearMonthDay{1974}{}{}.
\newblock
{\BBOQ}\APACrefatitle {Exploratory latent structure analysis using both
  identifiable and unidentifiable models} {Exploratory latent structure
  analysis using both identifiable and unidentifiable models}.{\BBCQ}
\newblock
\APACjournalVolNumPages{Biometrika}{61}{2}{215--231}.
\newblock
\begin{APACrefDOI} \doi{10.1093/biomet/61.2.215} \end{APACrefDOI}
\PrintBackRefs{\CurrentBib}

\bibitem [\protect \citeauthoryear {%
Gupta%
\ \BBA {} Nagar%
}{%
Gupta%
\ \BBA {} Nagar%
}{%
{\protect \APACyear {2000}}%
}]{%
gupta2000matrix}
\APACinsertmetastar {%
gupta2000matrix}%
\begin{APACrefauthors}%
Gupta, A\BPBI K.%
\BCBT {}\ \BBA {} Nagar, D\BPBI K.%
\end{APACrefauthors}%
\unskip\
\newblock
\APACrefYear{2000}.
\newblock
\APACrefbtitle {Matrix Variate Distributions} {Matrix variate distributions}.
\newblock
\APACaddressPublisher{}{Chapman \& Hall/CRC}.
\newblock
\begin{APACrefDOI} \doi{10.1201/9780203749289} \end{APACrefDOI}
\PrintBackRefs{\CurrentBib}

\bibitem [\protect \citeauthoryear {%
E.~Haertel%
}{%
E.~Haertel%
}{%
{\protect \APACyear {1984}}%
}]{%
haertel1984application}
\APACinsertmetastar {%
haertel1984application}%
\begin{APACrefauthors}%
Haertel, E.%
\end{APACrefauthors}%
\unskip\
\newblock
\APACrefYearMonthDay{1984}{}{}.
\newblock
{\BBOQ}\APACrefatitle {An application of latent class models to assessment
  data} {An application of latent class models to assessment data}.{\BBCQ}
\newblock
\APACjournalVolNumPages{Applied Psychological Measurement}{8}{3}{333--346}.
\newblock
\begin{APACrefDOI} \doi{10.1177/014662168400800311} \end{APACrefDOI}
\PrintBackRefs{\CurrentBib}

\bibitem [\protect \citeauthoryear {%
E\BPBI H.~Haertel%
}{%
E\BPBI H.~Haertel%
}{%
{\protect \APACyear {1990}}%
}]{%
haertel1990continuous}
\APACinsertmetastar {%
haertel1990continuous}%
\begin{APACrefauthors}%
Haertel, E\BPBI H.%
\end{APACrefauthors}%
\unskip\
\newblock
\APACrefYearMonthDay{1990}{}{}.
\newblock
{\BBOQ}\APACrefatitle {Continuous and discrete latent structure models for item
  response data} {Continuous and discrete latent structure models for item
  response data}.{\BBCQ}
\newblock
\APACjournalVolNumPages{Psychometrika}{55}{3}{477--494}.
\newblock
\begin{APACrefDOI} \doi{10.1007/BF02294762} \end{APACrefDOI}
\PrintBackRefs{\CurrentBib}

\bibitem [\protect \citeauthoryear {%
Hagenaars%
}{%
Hagenaars%
}{%
{\protect \APACyear {1990}}%
}]{%
hagenaars1990categorical}
\APACinsertmetastar {%
hagenaars1990categorical}%
\begin{APACrefauthors}%
Hagenaars, J\BPBI A.%
\end{APACrefauthors}%
\unskip\
\newblock
\APACrefYear{1990}.
\newblock
\APACrefbtitle {Categorical longitudinal data: Log-linear panel, trend, and
  cohort analysis} {Categorical longitudinal data: Log-linear panel, trend, and
  cohort analysis}.
\newblock
\APACaddressPublisher{}{SAGE Publications}.
\PrintBackRefs{\CurrentBib}

\bibitem [\protect \citeauthoryear {%
He%
, Culpepper%
\BCBL {}\ \BBA {} Douglas%
}{%
He%
\ \protect \BOthers {.}}{%
{\protect \APACyear {2023}}%
}]{%
he2023sparse}
\APACinsertmetastar {%
he2023sparse}%
\begin{APACrefauthors}%
He, S.%
, Culpepper, S\BPBI A.%
\BCBL {}\ \BBA {} Douglas, J.%
\end{APACrefauthors}%
\unskip\
\newblock
\APACrefYearMonthDay{2023}{}{}.
\newblock
{\BBOQ}\APACrefatitle {A Sparse Latent Class Model for Polytomous Attributes in
  Cognitive Diagnostic Assessments} {A sparse latent class model for polytomous
  attributes in cognitive diagnostic assessments}.{\BBCQ}
\newblock
\BIn{} L\BPBI A.~van~der Ark, W\BPBI H\BPBI M.~Emons\BCBL {}\ \BBA {} R\BPBI
  R.~Meijer\ (\BEDS), \APACrefbtitle {Essays on Contemporary Psychometrics}
  {Essays on contemporary psychometrics}\ (\BPGS\ 413--442).
\newblock
\APACaddressPublisher{}{Springer International Publishing}.
\newblock
\begin{APACrefDOI} \doi{10.1007/978-3-031-10370-4_21} \end{APACrefDOI}
\PrintBackRefs{\CurrentBib}

\bibitem [\protect \citeauthoryear {%
Kaya%
\ \BBA {} Leite%
}{%
Kaya%
\ \BBA {} Leite%
}{%
{\protect \APACyear {2017}}%
}]{%
kaya2017assessing}
\APACinsertmetastar {%
kaya2017assessing}%
\begin{APACrefauthors}%
Kaya, Y.%
\BCBT {}\ \BBA {} Leite, W\BPBI L.%
\end{APACrefauthors}%
\unskip\
\newblock
\APACrefYearMonthDay{2017}{}{}.
\newblock
{\BBOQ}\APACrefatitle {Assessing change in latent skills across time with
  longitudinal cognitive diagnosis modeling: An evaluation of model
  performance} {Assessing change in latent skills across time with longitudinal
  cognitive diagnosis modeling: An evaluation of model performance}.{\BBCQ}
\newblock
\APACjournalVolNumPages{Educational and Psychological
  Measurement}{77}{3}{369--388}.
\newblock
\begin{APACrefDOI} \doi{10.1177/0013164416659314} \end{APACrefDOI}
\PrintBackRefs{\CurrentBib}

\bibitem [\protect \citeauthoryear {%
Kuo%
\ \BBA {} Mallick%
}{%
Kuo%
\ \BBA {} Mallick%
}{%
{\protect \APACyear {1998}}%
}]{%
kuo1998variable}
\APACinsertmetastar {%
kuo1998variable}%
\begin{APACrefauthors}%
Kuo, L.%
\BCBT {}\ \BBA {} Mallick, B.%
\end{APACrefauthors}%
\unskip\
\newblock
\APACrefYearMonthDay{1998}{}{}.
\newblock
{\BBOQ}\APACrefatitle {Variable selection for regression models} {Variable
  selection for regression models}.{\BBCQ}
\newblock
\APACjournalVolNumPages{Sankhy{\=a}: The Indian Journal of Statistics, Series
  B}{60}{1}{65--81}.
\PrintBackRefs{\CurrentBib}

\bibitem [\protect \citeauthoryear {%
Lang%
}{%
Lang%
}{%
{\protect \APACyear {1987}}%
}]{%
lang1987linear}
\APACinsertmetastar {%
lang1987linear}%
\begin{APACrefauthors}%
Lang, S.%
\end{APACrefauthors}%
\unskip\
\newblock
\APACrefYear{1987}.
\newblock
\APACrefbtitle {Linear Algebra} {Linear algebra}\ (\PrintOrdinal{Third}\ \BEd).
\newblock
\APACaddressPublisher{}{Springer-Verlag}.
\newblock
\begin{APACrefDOI} \doi{10.1007/978-1-4757-1949-9} \end{APACrefDOI}
\PrintBackRefs{\CurrentBib}

\bibitem [\protect \citeauthoryear {%
Lauritzen%
}{%
Lauritzen%
}{%
{\protect \APACyear {1996}}%
}]{%
lauritzen1996graphical}
\APACinsertmetastar {%
lauritzen1996graphical}%
\begin{APACrefauthors}%
Lauritzen, S\BPBI L.%
\end{APACrefauthors}%
\unskip\
\newblock
\APACrefYear{1996}.
\newblock
\APACrefbtitle {Graphical Models} {Graphical models}.
\newblock
\APACaddressPublisher{}{Clarendon Press}.
\newblock
\begin{APACrefDOI} \doi{10.1093/oso/9780198522195.001.0001} \end{APACrefDOI}
\PrintBackRefs{\CurrentBib}

\bibitem [\protect \citeauthoryear {%
Li%
, Cohen%
, Bottge%
\BCBL {}\ \BBA {} Templin%
}{%
Li%
\ \protect \BOthers {.}}{%
{\protect \APACyear {2016}}%
}]{%
li2016latent}
\APACinsertmetastar {%
li2016latent}%
\begin{APACrefauthors}%
Li, F.%
, Cohen, A.%
, Bottge, B.%
\BCBL {}\ \BBA {} Templin, J.%
\end{APACrefauthors}%
\unskip\
\newblock
\APACrefYearMonthDay{2016}{}{}.
\newblock
{\BBOQ}\APACrefatitle {A latent transition analysis model for assessing change
  in cognitive skills} {A latent transition analysis model for assessing change
  in cognitive skills}.{\BBCQ}
\newblock
\APACjournalVolNumPages{Educational and Psychological
  Measurement}{76}{2}{181--204}.
\newblock
\begin{APACrefDOI} \doi{10.1177/0013164415588946} \end{APACrefDOI}
\PrintBackRefs{\CurrentBib}

\bibitem [\protect \citeauthoryear {%
Little%
}{%
Little%
}{%
{\protect \APACyear {2021}}%
}]{%
little2021missing}
\APACinsertmetastar {%
little2021missing}%
\begin{APACrefauthors}%
Little, R\BPBI J.%
\end{APACrefauthors}%
\unskip\
\newblock
\APACrefYearMonthDay{2021}{}{}.
\newblock
{\BBOQ}\APACrefatitle {Missing data assumptions} {Missing data
  assumptions}.{\BBCQ}
\newblock
\APACjournalVolNumPages{Annual Review of Statistics and Its
  Application}{8}{}{89--107}.
\newblock
\begin{APACrefDOI} \doi{10.1146/annurev-statistics-040720-031104}
  \end{APACrefDOI}
\PrintBackRefs{\CurrentBib}

\bibitem [\protect \citeauthoryear {%
J.~Liu%
}{%
J.~Liu%
}{%
{\protect \APACyear {2008}}%
}]{%
liu2008monte}
\APACinsertmetastar {%
liu2008monte}%
\begin{APACrefauthors}%
Liu, J.%
\end{APACrefauthors}%
\unskip\
\newblock
\APACrefYear{2008}.
\newblock
\APACrefbtitle {Monte {C}arlo Strategies in Scientific Computing} {Monte
  {C}arlo strategies in scientific computing}.
\newblock
\APACaddressPublisher{}{Springer Science+Business Media New York}.
\newblock
\begin{APACrefDOI} \doi{10.1007/978-0-387-76371-2} \end{APACrefDOI}
\PrintBackRefs{\CurrentBib}

\bibitem [\protect \citeauthoryear {%
J.~Liu%
, Xu%
\BCBL {}\ \BBA {} Ying%
}{%
J.~Liu%
\ \protect \BOthers {.}}{%
{\protect \APACyear {2013}}%
}]{%
liu2013theory}
\APACinsertmetastar {%
liu2013theory}%
\begin{APACrefauthors}%
Liu, J.%
, Xu, G.%
\BCBL {}\ \BBA {} Ying, Z.%
\end{APACrefauthors}%
\unskip\
\newblock
\APACrefYearMonthDay{2013}{}{}.
\newblock
{\BBOQ}\APACrefatitle {Theory of self-learning {Q}-matrix} {Theory of
  self-learning {Q}-matrix}.{\BBCQ}
\newblock
\APACjournalVolNumPages{Bernoulli}{19}{5A}{1790--1817}.
\newblock
\begin{APACrefDOI} \doi{10.3150/12-BEJ430} \end{APACrefDOI}
\PrintBackRefs{\CurrentBib}

\bibitem [\protect \citeauthoryear {%
J\BPBI S.~Liu%
\ \BBA {} Wu%
}{%
J\BPBI S.~Liu%
\ \BBA {} Wu%
}{%
{\protect \APACyear {1999}}%
}]{%
liu1999parameter}
\APACinsertmetastar {%
liu1999parameter}%
\begin{APACrefauthors}%
Liu, J\BPBI S.%
\BCBT {}\ \BBA {} Wu, Y\BPBI N.%
\end{APACrefauthors}%
\unskip\
\newblock
\APACrefYearMonthDay{1999}{}{}.
\newblock
{\BBOQ}\APACrefatitle {Parameter expansion for data augmentation} {Parameter
  expansion for data augmentation}.{\BBCQ}
\newblock
\APACjournalVolNumPages{Journal of the American Statistical
  Association}{94}{448}{1264--1274}.
\newblock
\begin{APACrefDOI} \doi{10.2307/2669940} \end{APACrefDOI}
\PrintBackRefs{\CurrentBib}

\bibitem [\protect \citeauthoryear {%
Y.~Liu%
, Culpepper%
\BCBL {}\ \BBA {} Chen%
}{%
Y.~Liu%
\ \protect \BOthers {.}}{%
{\protect \APACyear {2023}}%
}]{%
liu2023identifiability}
\APACinsertmetastar {%
liu2023identifiability}%
\begin{APACrefauthors}%
Liu, Y.%
, Culpepper, S\BPBI A.%
\BCBL {}\ \BBA {} Chen, Y.%
\end{APACrefauthors}%
\unskip\
\newblock
\APACrefYearMonthDay{2023}{}{}.
\newblock
{\BBOQ}\APACrefatitle {Identifiability of hidden {M}arkov models for learning
  trajectories in cognitive diagnosis} {Identifiability of hidden {M}arkov
  models for learning trajectories in cognitive diagnosis}.{\BBCQ}
\newblock
\APACjournalVolNumPages{Psychometrika}{88}{2}{361--386}.
\newblock
\begin{APACrefDOI} \doi{10.1007/s11336-023-09904-x} \end{APACrefDOI}
\PrintBackRefs{\CurrentBib}

\bibitem [\protect \citeauthoryear {%
Lu%
, Liu%
, Liao%
\BCBL {}\ \BBA {} Wang%
}{%
Lu%
\ \protect \BOthers {.}}{%
{\protect \APACyear {2020}}%
}]{%
lu2020disentangling}
\APACinsertmetastar {%
lu2020disentangling}%
\begin{APACrefauthors}%
Lu, J\BPBI G.%
, Liu, X\BPBI L.%
, Liao, H.%
\BCBL {}\ \BBA {} Wang, L.%
\end{APACrefauthors}%
\unskip\
\newblock
\APACrefYearMonthDay{2020}{}{}.
\newblock
{\BBOQ}\APACrefatitle {Disentangling stereotypes from social reality:
  Astrological stereotypes and discrimination in {C}hina.} {Disentangling
  stereotypes from social reality: Astrological stereotypes and discrimination
  in {C}hina.}{\BBCQ}
\newblock
\APACjournalVolNumPages{Journal of Personality and Social
  Psychology}{119}{6}{1359--1379}.
\newblock
\begin{APACrefDOI} \doi{10.1037/pspi0000237} \end{APACrefDOI}
\PrintBackRefs{\CurrentBib}

\bibitem [\protect \citeauthoryear {%
Madison%
\ \BBA {} Bradshaw%
}{%
Madison%
\ \BBA {} Bradshaw%
}{%
{\protect \APACyear {2018}}%
}]{%
madison2018assessing}
\APACinsertmetastar {%
madison2018assessing}%
\begin{APACrefauthors}%
Madison, M\BPBI J.%
\BCBT {}\ \BBA {} Bradshaw, L\BPBI P.%
\end{APACrefauthors}%
\unskip\
\newblock
\APACrefYearMonthDay{2018}{}{}.
\newblock
{\BBOQ}\APACrefatitle {Assessing growth in a diagnostic classification model
  framework} {Assessing growth in a diagnostic classification model
  framework}.{\BBCQ}
\newblock
\APACjournalVolNumPages{Psychometrika}{83}{4}{963--990}.
\newblock
\begin{APACrefDOI} \doi{10.1007/s11336-018-9638-5} \end{APACrefDOI}
\PrintBackRefs{\CurrentBib}

\bibitem [\protect \citeauthoryear {%
Madison%
, Jeon%
, Cotterell%
, Haab%
\BCBL {}\ \BBA {} Zor%
}{%
Madison%
\ \protect \BOthers {.}}{%
{\protect \APACyear {2025}}%
}]{%
madison2025tdcm}
\APACinsertmetastar {%
madison2025tdcm}%
\begin{APACrefauthors}%
Madison, M\BPBI J.%
, Jeon, M.%
, Cotterell, M.%
, Haab, S.%
\BCBL {}\ \BBA {} Zor, S.%
\end{APACrefauthors}%
\unskip\
\newblock
\APACrefYearMonthDay{2025}{}{}.
\newblock
{\BBOQ}\APACrefatitle {{TDCM}: An {R} Package for Estimating Longitudinal
  Diagnostic Classification Models} {{TDCM}: An {R} package for estimating
  longitudinal diagnostic classification models}.{\BBCQ}
\newblock
\APACjournalVolNumPages{Multivariate Behavioral Research}{60}{3}{518--527}.
\newblock
\begin{APACrefDOI} \doi{10.1080/00273171.2025.2453454} \end{APACrefDOI}
\PrintBackRefs{\CurrentBib}

\bibitem [\protect \citeauthoryear {%
Marden%
}{%
Marden%
}{%
{\protect \APACyear {2015}}%
}]{%
marden2015multivariate}
\APACinsertmetastar {%
marden2015multivariate}%
\begin{APACrefauthors}%
Marden, J\BPBI I.%
\end{APACrefauthors}%
\unskip\
\newblock
\APACrefYear{2015}.
\newblock
\APACrefbtitle {Multivariate Statistics} {Multivariate statistics}.
\newblock
\begin{APACrefURL} \url{https://stat.istics.net/Multivariate} \end{APACrefURL}
\PrintBackRefs{\CurrentBib}

\bibitem [\protect \citeauthoryear {%
Marini%
, Olsen%
\BCBL {}\ \BBA {} Rubin%
}{%
Marini%
\ \protect \BOthers {.}}{%
{\protect \APACyear {1980}}%
}]{%
marini1980maximum}
\APACinsertmetastar {%
marini1980maximum}%
\begin{APACrefauthors}%
Marini, M\BPBI M.%
, Olsen, A\BPBI R.%
\BCBL {}\ \BBA {} Rubin, D\BPBI B.%
\end{APACrefauthors}%
\unskip\
\newblock
\APACrefYearMonthDay{1980}{}{}.
\newblock
{\BBOQ}\APACrefatitle {Maximum-likelihood estimation in panel studies with
  missing data} {Maximum-likelihood estimation in panel studies with missing
  data}.{\BBCQ}
\newblock
\APACjournalVolNumPages{Sociological Methodology}{11}{}{314--357}.
\newblock
\begin{APACrefDOI} \doi{10.2307/270868} \end{APACrefDOI}
\PrintBackRefs{\CurrentBib}

\bibitem [\protect \citeauthoryear {%
McDonald%
}{%
McDonald%
}{%
{\protect \APACyear {1967}}%
}]{%
mcdonald1967nonlinear}
\APACinsertmetastar {%
mcdonald1967nonlinear}%
\begin{APACrefauthors}%
McDonald, R\BPBI P.%
\end{APACrefauthors}%
\unskip\
\newblock
\APACrefYear{1967}.
\newblock
\APACrefbtitle {Nonlinear Factor Analysis} {Nonlinear factor analysis}.
\newblock
\APACaddressPublisher{}{Psychometric Society}.
\PrintBackRefs{\CurrentBib}

\bibitem [\protect \citeauthoryear {%
McGibbon%
}{%
McGibbon%
}{%
{\protect \APACyear {2015}}%
}]{%
mcgibbon2019pyhmc}
\APACinsertmetastar {%
mcgibbon2019pyhmc}%
\begin{APACrefauthors}%
McGibbon, R\BPBI T.%
\end{APACrefauthors}%
\unskip\
\newblock
\APACrefYearMonthDay{2015}{}{}.
\newblock
\APACrefbtitle {pyhmc: {H}amiltonian {M}onte {C}arlo in {P}ython
  \textup{[Computer software]}.} {pyhmc: {H}amiltonian {M}onte {C}arlo in
  {P}ython \textup{[Computer software]}.}
\newblock
\begin{APACrefURL} \url{https://github.com/rmcgibbo/pyhmc} \end{APACrefURL}
\PrintBackRefs{\CurrentBib}

\bibitem [\protect \citeauthoryear {%
Meng%
\ \BBA {} Van~Dyk%
}{%
Meng%
\ \BBA {} Van~Dyk%
}{%
{\protect \APACyear {1999}}%
}]{%
meng1999seeking}
\APACinsertmetastar {%
meng1999seeking}%
\begin{APACrefauthors}%
Meng, X\BHBI L.%
\BCBT {}\ \BBA {} Van~Dyk, D\BPBI A.%
\end{APACrefauthors}%
\unskip\
\newblock
\APACrefYearMonthDay{1999}{}{}.
\newblock
{\BBOQ}\APACrefatitle {Seeking efficient data augmentation schemes via
  conditional and marginal augmentation} {Seeking efficient data augmentation
  schemes via conditional and marginal augmentation}.{\BBCQ}
\newblock
\APACjournalVolNumPages{Biometrika}{86}{2}{301--320}.
\newblock
\begin{APACrefDOI} \doi{10.1093/biomet/86.2.301} \end{APACrefDOI}
\PrintBackRefs{\CurrentBib}

\bibitem [\protect \citeauthoryear {%
Merkle%
, Furr%
\BCBL {}\ \BBA {} Rabe-Hesketh%
}{%
Merkle%
\ \protect \BOthers {.}}{%
{\protect \APACyear {2019}}%
}]{%
merkle2019bayesian}
\APACinsertmetastar {%
merkle2019bayesian}%
\begin{APACrefauthors}%
Merkle, E\BPBI C.%
, Furr, D.%
\BCBL {}\ \BBA {} Rabe-Hesketh, S.%
\end{APACrefauthors}%
\unskip\
\newblock
\APACrefYearMonthDay{2019}{}{}.
\newblock
{\BBOQ}\APACrefatitle {Bayesian comparison of latent variable models:
  Conditional versus marginal likelihoods} {Bayesian comparison of latent
  variable models: Conditional versus marginal likelihoods}.{\BBCQ}
\newblock
\APACjournalVolNumPages{Psychometrika}{84}{3}{802--829}.
\newblock
\begin{APACrefDOI} \doi{10.1007/s11336-019-09679-0} \end{APACrefDOI}
\PrintBackRefs{\CurrentBib}

\bibitem [\protect \citeauthoryear {%
Miller%
}{%
Miller%
}{%
{\protect \APACyear {2016}}%
}]{%
miller2016advanced}
\APACinsertmetastar {%
miller2016advanced}%
\begin{APACrefauthors}%
Miller, J\BPBI W.%
\end{APACrefauthors}%
\unskip\
\newblock
\APACrefYearMonthDay{2016}{}{}.
\newblock
\APACrefbtitle {Hidden {M}arkov Models \textup{[Lecture notes]}.} {Hidden
  {M}arkov models \textup{[Lecture notes]}.}
\newblock
\begin{APACrefURL} \url{https://jwmi.github.io/ASM/5-HMMs.pdf} \end{APACrefURL}
\PrintBackRefs{\CurrentBib}

\bibitem [\protect \citeauthoryear {%
Murphy%
}{%
Murphy%
}{%
{\protect \APACyear {2012}}%
}]{%
murphy2012machine}
\APACinsertmetastar {%
murphy2012machine}%
\begin{APACrefauthors}%
Murphy, K\BPBI P.%
\end{APACrefauthors}%
\unskip\
\newblock
\APACrefYear{2012}.
\newblock
\APACrefbtitle {Machine Learning: A Probabilistic Perspective} {Machine
  learning: A probabilistic perspective}.
\newblock
\APACaddressPublisher{}{MIT Press}.
\PrintBackRefs{\CurrentBib}

\bibitem [\protect \citeauthoryear {%
Muth{\'e}n%
}{%
Muth{\'e}n%
}{%
{\protect \APACyear {1978}}%
}]{%
muthen1978contributions}
\APACinsertmetastar {%
muthen1978contributions}%
\begin{APACrefauthors}%
Muth{\'e}n, B.%
\end{APACrefauthors}%
\unskip\
\newblock
\APACrefYearMonthDay{1978}{}{}.
\newblock
{\BBOQ}\APACrefatitle {Contributions to factor analysis of dichotomous
  variables} {Contributions to factor analysis of dichotomous
  variables}.{\BBCQ}
\newblock
\APACjournalVolNumPages{Psychometrika}{43}{4}{551--560}.
\newblock
\begin{APACrefDOI} \doi{10.1007/BF02293813} \end{APACrefDOI}
\PrintBackRefs{\CurrentBib}

\bibitem [\protect \citeauthoryear {%
Plummer%
, Best%
, Cowles%
\BCBL {}\ \BBA {} Vines%
}{%
Plummer%
\ \protect \BOthers {.}}{%
{\protect \APACyear {2006}}%
}]{%
plummer2006coda}
\APACinsertmetastar {%
plummer2006coda}%
\begin{APACrefauthors}%
Plummer, M.%
, Best, N.%
, Cowles, K.%
\BCBL {}\ \BBA {} Vines, K.%
\end{APACrefauthors}%
\unskip\
\newblock
\APACrefYearMonthDay{2006}{}{}.
\newblock
{\BBOQ}\APACrefatitle {{CODA}: Convergence diagnosis and output analysis for
  {MCMC}} {{CODA}: Convergence diagnosis and output analysis for
  {MCMC}}.{\BBCQ}
\newblock
\APACjournalVolNumPages{R News}{6}{1}{7--11}.
\newblock
\begin{APACrefURL}
  \url{https://journal.r-project.org/articles/RN-2006-002/RN-2006-002.pdf}
  \end{APACrefURL}
\PrintBackRefs{\CurrentBib}

\bibitem [\protect \citeauthoryear {%
Poulsen%
}{%
Poulsen%
}{%
{\protect \APACyear {1983}}%
}]{%
poulsen1983latent}
\APACinsertmetastar {%
poulsen1983latent}%
\begin{APACrefauthors}%
Poulsen, C\BPBI S.%
\end{APACrefauthors}%
\unskip\
\newblock
\APACrefYear{1983}.
\newblock
\APACrefbtitle {Latent structure analysis with choice modeling applications
  \textup{(Publication No. 1983.8316074) [{D}octoral dissertation, {U}niversity
  of {P}ennsylvania]}} {Latent structure analysis with choice modeling
  applications \textup{(Publication No. 1983.8316074) [{D}octoral dissertation,
  {U}niversity of {P}ennsylvania]}}.
\newblock
\APACaddressPublisher{}{ProQuest Dissertations \& Theses Global}.
\PrintBackRefs{\CurrentBib}

\bibitem [\protect \citeauthoryear {%
Rossi%
, Allenby%
\BCBL {}\ \BBA {} McCulloch%
}{%
Rossi%
\ \protect \BOthers {.}}{%
{\protect \APACyear {2005}}%
}]{%
rossi2005bayesian}
\APACinsertmetastar {%
rossi2005bayesian}%
\begin{APACrefauthors}%
Rossi, P\BPBI E.%
, Allenby, G\BPBI M.%
\BCBL {}\ \BBA {} McCulloch, R.%
\end{APACrefauthors}%
\unskip\
\newblock
\APACrefYear{2005}.
\newblock
\APACrefbtitle {Bayesian Statistics and Marketing} {Bayesian statistics and
  marketing}.
\newblock
\APACaddressPublisher{}{Wiley}.
\newblock
\begin{APACrefDOI} \doi{10.1002/0470863692} \end{APACrefDOI}
\PrintBackRefs{\CurrentBib}

\bibitem [\protect \citeauthoryear {%
Rupp%
, Templin%
\BCBL {}\ \BBA {} Henson%
}{%
Rupp%
\ \protect \BOthers {.}}{%
{\protect \APACyear {2010}}%
}]{%
rupp2010diagnostic}
\APACinsertmetastar {%
rupp2010diagnostic}%
\begin{APACrefauthors}%
Rupp, A\BPBI A.%
, Templin, J.%
\BCBL {}\ \BBA {} Henson, R\BPBI A.%
\end{APACrefauthors}%
\unskip\
\newblock
\APACrefYear{2010}.
\newblock
\APACrefbtitle {Diagnostic Measurement: Theory, Methods, and Applications}
  {Diagnostic measurement: Theory, methods, and applications}.
\newblock
\APACaddressPublisher{}{The Guilford Press}.
\PrintBackRefs{\CurrentBib}

\bibitem [\protect \citeauthoryear {%
Sanderson%
\ \BBA {} Curtin%
}{%
Sanderson%
\ \BBA {} Curtin%
}{%
{\protect \APACyear {2019}}%
}]{%
sanderson2019practical}
\APACinsertmetastar {%
sanderson2019practical}%
\begin{APACrefauthors}%
Sanderson, C.%
\BCBT {}\ \BBA {} Curtin, R.%
\end{APACrefauthors}%
\unskip\
\newblock
\APACrefYearMonthDay{2019}{}{}.
\newblock
{\BBOQ}\APACrefatitle {Practical sparse matrices in {C}++ with hybrid storage
  and template-based expression optimisation} {Practical sparse matrices in
  {C}++ with hybrid storage and template-based expression optimisation}.{\BBCQ}
\newblock
\APACjournalVolNumPages{Mathematical and Computational Applications}{24(3),
  \textup{Article 70}}{}{}.
\newblock
\begin{APACrefDOI} \doi{10.3390/mca24030070} \end{APACrefDOI}
\PrintBackRefs{\CurrentBib}

\bibitem [\protect \citeauthoryear {%
Sanderson%
\ \BBA {} Curtin%
}{%
Sanderson%
\ \BBA {} Curtin%
}{%
{\protect \APACyear {2025}}%
}]{%
sanderson2025armadillo}
\APACinsertmetastar {%
sanderson2025armadillo}%
\begin{APACrefauthors}%
Sanderson, C.%
\BCBT {}\ \BBA {} Curtin, R.%
\end{APACrefauthors}%
\unskip\
\newblock
\APACrefYearMonthDay{2025}{}{}.
\newblock
{\BBOQ}\APACrefatitle {Armadillo: An Efficient Framework for Numerical Linear
  Algebra} {Armadillo: An efficient framework for numerical linear
  algebra}.{\BBCQ}
\newblock
\BIn{} \APACrefbtitle {2025 17th {I}nternational {C}onference on {C}omputer and
  {A}utomation {E}ngineering ({ICCAE})} {2025 17th {I}nternational {C}onference
  on {C}omputer and {A}utomation {E}ngineering ({ICCAE})}\ (\BPG~303-307).
\newblock
\begin{APACrefDOI} \doi{10.1109/ICCAE64891.2025.10980539} \end{APACrefDOI}
\PrintBackRefs{\CurrentBib}

\bibitem [\protect \citeauthoryear {%
Seneta%
}{%
Seneta%
}{%
{\protect \APACyear {2006}}%
}]{%
seneta2006non}
\APACinsertmetastar {%
seneta2006non}%
\begin{APACrefauthors}%
Seneta, E.%
\end{APACrefauthors}%
\unskip\
\newblock
\APACrefYear{2006}.
\newblock
\APACrefbtitle {Non-negative Matrices and {M}arkov Chains} {Non-negative
  matrices and {M}arkov chains}.
\newblock
\APACaddressPublisher{}{Springer Science+Business Media}.
\newblock
\begin{APACrefDOI} \doi{10.1007/0-387-32792-4} \end{APACrefDOI}
\PrintBackRefs{\CurrentBib}

\bibitem [\protect \citeauthoryear {%
Shui%
\ \protect \BOthers {.}}{%
Shui%
\ \protect \BOthers {.}}{%
{\protect \APACyear {2020}}%
}]{%
shui2020datasetsynapse}
\APACinsertmetastar {%
shui2020datasetsynapse}%
\begin{APACrefauthors}%
Shui, X.%
, Zhang, M.%
, Li, Z.%
, Hu, X.%
, Wang, F.%
\BCBL {}\ \BBA {} Zhang, D.%
\end{APACrefauthors}%
\unskip\
\newblock
\APACrefYearMonthDay{2020}{}{}.
\newblock
\APACrefbtitle {{DAPPER} dataset \textup{[Data set]}.} {{DAPPER} dataset
  \textup{[Data set]}.}
\newblock
\APAChowpublished {Synapse}.
\newblock
\begin{APACrefDOI} \doi{10.7303/syn22418021} \end{APACrefDOI}
\PrintBackRefs{\CurrentBib}

\bibitem [\protect \citeauthoryear {%
Shui%
\ \protect \BOthers {.}}{%
Shui%
\ \protect \BOthers {.}}{%
{\protect \APACyear {2021}}%
}]{%
shui2021dataset}
\APACinsertmetastar {%
shui2021dataset}%
\begin{APACrefauthors}%
Shui, X.%
, Zhang, M.%
, Li, Z.%
, Hu, X.%
, Wang, F.%
\BCBL {}\ \BBA {} Zhang, D.%
\end{APACrefauthors}%
\unskip\
\newblock
\APACrefYearMonthDay{2021}{}{}.
\newblock
{\BBOQ}\APACrefatitle {A dataset of daily ambulatory psychological and
  physiological recording for emotion research} {A dataset of daily ambulatory
  psychological and physiological recording for emotion research}.{\BBCQ}
\newblock
\APACjournalVolNumPages{Scientific Data}{8, \textup{Article 161}}{}{}.
\newblock
\begin{APACrefDOI} \doi{10.1038/s41597-021-00945-4} \end{APACrefDOI}
\PrintBackRefs{\CurrentBib}

\bibitem [\protect \citeauthoryear {%
Steger%
, Frazier%
, Oishi%
\BCBL {}\ \BBA {} Kaler%
}{%
Steger%
\ \protect \BOthers {.}}{%
{\protect \APACyear {2006}}%
}]{%
steger2006meaning}
\APACinsertmetastar {%
steger2006meaning}%
\begin{APACrefauthors}%
Steger, M\BPBI F.%
, Frazier, P.%
, Oishi, S.%
\BCBL {}\ \BBA {} Kaler, M.%
\end{APACrefauthors}%
\unskip\
\newblock
\APACrefYearMonthDay{2006}{}{}.
\newblock
{\BBOQ}\APACrefatitle {The {M}eaning in {L}ife {Q}uestionnaire: Assessing the
  presence of and search for meaning in life.} {The {M}eaning in {L}ife
  {Q}uestionnaire: Assessing the presence of and search for meaning in
  life.}{\BBCQ}
\newblock
\APACjournalVolNumPages{Journal of Counseling Psychology}{53}{1}{80--93}.
\newblock
\begin{APACrefDOI} \doi{10.1037/0022-0167.53.1.80} \end{APACrefDOI}
\PrintBackRefs{\CurrentBib}

\bibitem [\protect \citeauthoryear {%
Sun%
, Xin%
, Zhang%
\BCBL {}\ \BBA {} de~la Torre%
}{%
Sun%
\ \protect \BOthers {.}}{%
{\protect \APACyear {2013}}%
}]{%
sun2013polytomous}
\APACinsertmetastar {%
sun2013polytomous}%
\begin{APACrefauthors}%
Sun, J.%
, Xin, T.%
, Zhang, S.%
\BCBL {}\ \BBA {} de~la Torre, J.%
\end{APACrefauthors}%
\unskip\
\newblock
\APACrefYearMonthDay{2013}{}{}.
\newblock
{\BBOQ}\APACrefatitle {A polytomous extension of the generalized distance
  discriminating method} {A polytomous extension of the generalized distance
  discriminating method}.{\BBCQ}
\newblock
\APACjournalVolNumPages{Applied Psychological Measurement}{37}{7}{503--521}.
\newblock
\begin{APACrefDOI} \doi{10.1177/0146621613487254} \end{APACrefDOI}
\PrintBackRefs{\CurrentBib}

\bibitem [\protect \citeauthoryear {%
Tang%
\ \BBA {} Zhan%
}{%
Tang%
\ \BBA {} Zhan%
}{%
{\protect \APACyear {2020}}%
}]{%
tang2020development}
\APACinsertmetastar {%
tang2020development}%
\begin{APACrefauthors}%
Tang, F.%
\BCBT {}\ \BBA {} Zhan, P.%
\end{APACrefauthors}%
\unskip\
\newblock
\APACrefYearMonthDay{2020}{}{}.
\newblock
{\BBOQ}\APACrefatitle {The development of an instrument for longitudinal
  learning diagnosis of rational number operations based on parallel tests}
  {The development of an instrument for longitudinal learning diagnosis of
  rational number operations based on parallel tests}.{\BBCQ}
\newblock
\APACjournalVolNumPages{Frontiers in Psychology}{11, \textup{Article
  2246}}{}{}.
\newblock
\begin{APACrefDOI} \doi{10.3389/fpsyg.2020.02246} \end{APACrefDOI}
\PrintBackRefs{\CurrentBib}

\bibitem [\protect \citeauthoryear {%
Tang%
\ \BBA {} Zhan%
}{%
Tang%
\ \BBA {} Zhan%
}{%
{\protect \APACyear {2021}}%
}]{%
tang2021does}
\APACinsertmetastar {%
tang2021does}%
\begin{APACrefauthors}%
Tang, F.%
\BCBT {}\ \BBA {} Zhan, P.%
\end{APACrefauthors}%
\unskip\
\newblock
\APACrefYearMonthDay{2021}{}{}.
\newblock
{\BBOQ}\APACrefatitle {Does diagnostic feedback promote learning? {E}vidence
  from a longitudinal cognitive diagnostic assessment} {Does diagnostic
  feedback promote learning? {E}vidence from a longitudinal cognitive
  diagnostic assessment}.{\BBCQ}
\newblock
\APACjournalVolNumPages{AERA Open}{7}{1}{1--15}.
\newblock
\begin{APACrefDOI} \doi{10.1177/23328584211060804} \end{APACrefDOI}
\PrintBackRefs{\CurrentBib}

\bibitem [\protect \citeauthoryear {%
Tanner%
\ \BBA {} Wong%
}{%
Tanner%
\ \BBA {} Wong%
}{%
{\protect \APACyear {2010}}%
}]{%
tanner2010data}
\APACinsertmetastar {%
tanner2010data}%
\begin{APACrefauthors}%
Tanner, M\BPBI A.%
\BCBT {}\ \BBA {} Wong, W\BPBI H.%
\end{APACrefauthors}%
\unskip\
\newblock
\APACrefYearMonthDay{2010}{}{}.
\newblock
{\BBOQ}\APACrefatitle {From {EM} to data augmentation: The emergence of {MCMC}
  {B}ayesian computation in the 1980s} {From {EM} to data augmentation: The
  emergence of {MCMC} {B}ayesian computation in the 1980s}.{\BBCQ}
\newblock
\APACjournalVolNumPages{Statistical Science}{25}{4}{506 -- 516}.
\newblock
\begin{APACrefDOI} \doi{10.1214/10-STS341} \end{APACrefDOI}
\PrintBackRefs{\CurrentBib}

\bibitem [\protect \citeauthoryear {%
van~de Pol%
\ \BBA {} Langeheine%
}{%
van~de Pol%
\ \BBA {} Langeheine%
}{%
{\protect \APACyear {1990}}%
}]{%
van1990mixed}
\APACinsertmetastar {%
van1990mixed}%
\begin{APACrefauthors}%
van~de Pol, F.%
\BCBT {}\ \BBA {} Langeheine, R.%
\end{APACrefauthors}%
\unskip\
\newblock
\APACrefYearMonthDay{1990}{}{}.
\newblock
{\BBOQ}\APACrefatitle {Mixed {M}arkov latent class models} {Mixed {M}arkov
  latent class models}.{\BBCQ}
\newblock
\APACjournalVolNumPages{Sociological Methodology}{20}{}{213--247}.
\newblock
\begin{APACrefDOI} \doi{10.2307/271087} \end{APACrefDOI}
\PrintBackRefs{\CurrentBib}

\bibitem [\protect \citeauthoryear {%
Vehtari%
\ \protect \BOthers {.}}{%
Vehtari%
\ \protect \BOthers {.}}{%
{\protect \APACyear {2024}}%
}]{%
vehtari2024loo}
\APACinsertmetastar {%
vehtari2024loo}%
\begin{APACrefauthors}%
Vehtari, A.%
, Gabry, J.%
, Magnusson, M.%
, Yao, Y.%
, B\"{u}rkner, P\BHBI C.%
, Paananen, T.%
\BCBL {}\ \BBA {} Gelman, A.%
\end{APACrefauthors}%
\unskip\
\newblock
\APACrefYearMonthDay{2024}{}{}.
\newblock
\APACrefbtitle {loo: Efficient leave-one-out cross-validation and {WAIC} for
  {B}ayesian models \textup{[Computer software]}.} {loo: Efficient
  leave-one-out cross-validation and {WAIC} for {B}ayesian models
  \textup{[Computer software]}.}
\newblock
\begin{APACrefURL} \url{https://mc-stan.org/loo/} \end{APACrefURL}
\PrintBackRefs{\CurrentBib}

\bibitem [\protect \citeauthoryear {%
Vermunt%
}{%
Vermunt%
}{%
{\protect \APACyear {2001}}%
}]{%
vermunt2001use}
\APACinsertmetastar {%
vermunt2001use}%
\begin{APACrefauthors}%
Vermunt, J\BPBI K.%
\end{APACrefauthors}%
\unskip\
\newblock
\APACrefYearMonthDay{2001}{}{}.
\newblock
{\BBOQ}\APACrefatitle {The use of restricted latent class models for defining
  and testing nonparametric and parametric item response theory models} {The
  use of restricted latent class models for defining and testing nonparametric
  and parametric item response theory models}.{\BBCQ}
\newblock
\APACjournalVolNumPages{Applied Psychological Measurement}{25}{3}{283--294}.
\newblock
\begin{APACrefDOI} \doi{10.1177/01466210122032082} \end{APACrefDOI}
\PrintBackRefs{\CurrentBib}

\bibitem [\protect \citeauthoryear {%
Wang%
, Yang%
, Culpepper%
\BCBL {}\ \BBA {} Douglas%
}{%
Wang%
\ \protect \BOthers {.}}{%
{\protect \APACyear {2018}}%
}]{%
wang2018tracking}
\APACinsertmetastar {%
wang2018tracking}%
\begin{APACrefauthors}%
Wang, S.%
, Yang, Y.%
, Culpepper, S\BPBI A.%
\BCBL {}\ \BBA {} Douglas, J\BPBI A.%
\end{APACrefauthors}%
\unskip\
\newblock
\APACrefYearMonthDay{2018}{}{}.
\newblock
{\BBOQ}\APACrefatitle {Tracking skill acquisition with cognitive diagnosis
  models: A higher-order, hidden {M}arkov model with covariates} {Tracking
  skill acquisition with cognitive diagnosis models: A higher-order, hidden
  {M}arkov model with covariates}.{\BBCQ}
\newblock
\APACjournalVolNumPages{Journal of Educational and Behavioral
  Statistics}{43}{1}{57--87}.
\newblock
\begin{APACrefDOI} \doi{10.3102/1076998617719727} \end{APACrefDOI}
\PrintBackRefs{\CurrentBib}

\bibitem [\protect \citeauthoryear {%
Watanabe%
}{%
Watanabe%
}{%
{\protect \APACyear {2010}}%
}]{%
watanabe2010asymptotic}
\APACinsertmetastar {%
watanabe2010asymptotic}%
\begin{APACrefauthors}%
Watanabe, S.%
\end{APACrefauthors}%
\unskip\
\newblock
\APACrefYearMonthDay{2010}{}{}.
\newblock
{\BBOQ}\APACrefatitle {Asymptotic equivalence of {B}ayes cross validation and
  widely applicable information criterion in singular learning theory.}
  {Asymptotic equivalence of {B}ayes cross validation and widely applicable
  information criterion in singular learning theory.}{\BBCQ}
\newblock
\APACjournalVolNumPages{Journal of Machine Learning
  Research}{11}{}{3571--3594}.
\newblock
\begin{APACrefURL}
  \url{https://www.jmlr.org/papers/volume11/watanabe10a/watanabe10a.pdf}
  \end{APACrefURL}
\PrintBackRefs{\CurrentBib}

\bibitem [\protect \citeauthoryear {%
Watson%
, Clark%
\BCBL {}\ \BBA {} Tellegen%
}{%
Watson%
\ \protect \BOthers {.}}{%
{\protect \APACyear {1988}}%
}]{%
watson1988development}
\APACinsertmetastar {%
watson1988development}%
\begin{APACrefauthors}%
Watson, D.%
, Clark, L\BPBI A.%
\BCBL {}\ \BBA {} Tellegen, A.%
\end{APACrefauthors}%
\unskip\
\newblock
\APACrefYearMonthDay{1988}{}{}.
\newblock
{\BBOQ}\APACrefatitle {Development and validation of brief measures of positive
  and negative affect: The {PANAS} scales.} {Development and validation of
  brief measures of positive and negative affect: The {PANAS} scales.}{\BBCQ}
\newblock
\APACjournalVolNumPages{Journal of Personality and Social
  Psychology}{54}{6}{1063--1070}.
\newblock
\begin{APACrefDOI} \doi{10.1037/0022-3514.54.6.1063} \end{APACrefDOI}
\PrintBackRefs{\CurrentBib}

\bibitem [\protect \citeauthoryear {%
Wayman%
}{%
Wayman%
}{%
{\protect \APACyear {2025}}%
}]{%
wayman2025probitlcmlongit}
\APACinsertmetastar {%
wayman2025probitlcmlongit}%
\begin{APACrefauthors}%
Wayman, E\BPBI A.%
\end{APACrefauthors}%
\unskip\
\newblock
\APACrefYearMonthDay{2025}{}{}.
\newblock
\APACrefbtitle {probitlcmlongit \textup{[Computer software]}.} {probitlcmlongit
  \textup{[Computer software]}.}
\newblock
\begin{APACrefURL} \url{https://github.com/ericwayman01/probitlcmlongit}
  \end{APACrefURL}
\PrintBackRefs{\CurrentBib}

\bibitem [\protect \citeauthoryear {%
Wayman%
, Culpepper%
, Douglas%
\BCBL {}\ \BBA {} Bowers%
}{%
Wayman%
\ \protect \BOthers {.}}{%
{\protect \APACyear {2025}}%
}]{%
wayman2025restricted}
\APACinsertmetastar {%
wayman2025restricted}%
\begin{APACrefauthors}%
Wayman, E\BPBI A.%
, Culpepper, S\BPBI A.%
, Douglas, J.%
\BCBL {}\ \BBA {} Bowers, J.%
\end{APACrefauthors}%
\unskip\
\newblock
\APACrefYearMonthDay{2025}{}{}.
\newblock
{\BBOQ}\APACrefatitle {A restricted latent class model with polytomous
  attributes and respondent-level covariates} {A restricted latent class model
  with polytomous attributes and respondent-level covariates}.{\BBCQ}
\newblock
\APACjournalVolNumPages{Behaviormetrika}{}{}{}.
\newblock
\begin{APACrefDOI} \doi{10.1007/s41237-025-00271-8} \end{APACrefDOI}
\PrintBackRefs{\CurrentBib}

\bibitem [\protect \citeauthoryear {%
Wiggins%
}{%
Wiggins%
}{%
{\protect \APACyear {1955}}%
}]{%
wiggins1955mathematical}
\APACinsertmetastar {%
wiggins1955mathematical}%
\begin{APACrefauthors}%
Wiggins, L\BPBI M.%
\end{APACrefauthors}%
\unskip\
\newblock
\APACrefYear{1955}.
\newblock
\APACrefbtitle {Mathematical models for the interpretation of attitude and
  behavior change: {T}he analysis of multi-wave panel \textup{[{U}npublished
  doctoral dissertation]}} {Mathematical models for the interpretation of
  attitude and behavior change: {T}he analysis of multi-wave panel
  \textup{[{U}npublished doctoral dissertation]}}.
\newblock
\APACaddressPublisher{}{Columbia University}.
\PrintBackRefs{\CurrentBib}

\bibitem [\protect \citeauthoryear {%
Xu%
}{%
Xu%
}{%
{\protect \APACyear {2017}}%
}]{%
xu2017identifiability}
\APACinsertmetastar {%
xu2017identifiability}%
\begin{APACrefauthors}%
Xu, G.%
\end{APACrefauthors}%
\unskip\
\newblock
\APACrefYearMonthDay{2017}{}{}.
\newblock
{\BBOQ}\APACrefatitle {Identifiability of restricted latent class models with
  binary responses} {Identifiability of restricted latent class models with
  binary responses}.{\BBCQ}
\newblock
\APACjournalVolNumPages{The Annals of Statistics}{45}{2}{675 -- 707}.
\newblock
\begin{APACrefDOI} \doi{10.1214/16-AOS1464} \end{APACrefDOI}
\PrintBackRefs{\CurrentBib}

\bibitem [\protect \citeauthoryear {%
Xu%
\ \BBA {} Shang%
}{%
Xu%
\ \BBA {} Shang%
}{%
{\protect \APACyear {2018}}%
}]{%
xu2018identifying}
\APACinsertmetastar {%
xu2018identifying}%
\begin{APACrefauthors}%
Xu, G.%
\BCBT {}\ \BBA {} Shang, Z.%
\end{APACrefauthors}%
\unskip\
\newblock
\APACrefYearMonthDay{2018}{}{}.
\newblock
{\BBOQ}\APACrefatitle {Identifying latent structures in restricted latent class
  models} {Identifying latent structures in restricted latent class
  models}.{\BBCQ}
\newblock
\APACjournalVolNumPages{Journal of the American Statistical
  Association}{113}{523}{1284--1295}.
\newblock
\begin{APACrefDOI} \doi{10.1080/01621459.2017.1340889} \end{APACrefDOI}
\PrintBackRefs{\CurrentBib}

\bibitem [\protect \citeauthoryear {%
Zhan%
}{%
Zhan%
}{%
{\protect \APACyear {2021}}%
}]{%
zhan2021data}
\APACinsertmetastar {%
zhan2021data}%
\begin{APACrefauthors}%
Zhan, P.%
\end{APACrefauthors}%
\unskip\
\newblock
\APACrefYearMonthDay{2021}{}{}.
\newblock
\APACrefbtitle {Data for {D}oes Diagnostic Feedback Promote Learning?
  {E}vidence from a Longitudinal Cognitive Diagnostic Assessment \textup{[Data
  set]}.} {Data for {D}oes diagnostic feedback promote learning? {E}vidence
  from a longitudinal cognitive diagnostic assessment \textup{[Data set]}.}
\newblock
\APAChowpublished {Inter-university Consortium for Political and Social
  Research}.
\newblock
\begin{APACrefURL} \url{https://doi.org/10.3886/E153061V1} \end{APACrefURL}
\PrintBackRefs{\CurrentBib}

\bibitem [\protect \citeauthoryear {%
Zhang%
\ \BBA {} Chang%
}{%
Zhang%
\ \BBA {} Chang%
}{%
{\protect \APACyear {2020}}%
}]{%
zhang2020multilevel}
\APACinsertmetastar {%
zhang2020multilevel}%
\begin{APACrefauthors}%
Zhang, S.%
\BCBT {}\ \BBA {} Chang, H\BHBI H.%
\end{APACrefauthors}%
\unskip\
\newblock
\APACrefYearMonthDay{2020}{}{}.
\newblock
{\BBOQ}\APACrefatitle {A multilevel logistic hidden {M}arkov model for learning
  under cognitive diagnosis} {A multilevel logistic hidden {M}arkov model for
  learning under cognitive diagnosis}.{\BBCQ}
\newblock
\APACjournalVolNumPages{Behavior Research Methods}{52}{}{408--421}.
\newblock
\begin{APACrefDOI} \doi{10.3758/s13428-019-01238-w} \end{APACrefDOI}
\PrintBackRefs{\CurrentBib}

\end{thebibliography}

\newpage

\noindent {\textbf{\LARGE Supplementary Materials}}

\section*{Supplementary Material A}

Supplementary Material A contains the alt text for figures.

\subsection*{Figure 1}

The figure is a directed acyclic graph. There is one plate, which contains a subset of the vertices of the graph: \(Y_n^t\), \(Y_n^{t - 1}\), \(\alpha_n^t\), \(\alpha_n^{t - 1}\).

In addition to the above vertices, there are two vertices not contained in the plate: \(\theta_m\), \(\theta_s\).

The directed edges between vertices are described in the following list, where for example the list item ``\(a\) to \(b\)'' indicates a directed edge from vertex \(a\) to vertex \(b\):

\begin{itemize}
  \item \(\alpha_n^t\) to \(Y_n^t\)
  \item \(\alpha_n^{t - 1}\) to \(Y_n^{t - 1}\)
  \item \(\alpha_n^{t - 1}\) to \(\alpha_n^t\)
  \item \(\theta_m\) to \(Y_n^t\)
  \item \(\theta_m\) to \(Y_n^{t - 1}\)
  \item \(\theta_s\) to \(\alpha_n^t\)
  \item \(\theta_s\) to \(\alpha_n^{t - 1}\)
\end{itemize}

\subsection*{Figure 2}

The figure is a directed acyclic graph. There are two plates, each of which contains a subset of the vertices of the graph: \begin{itemize}
  \item Plate \(j\) contains: \(\delta_j\), \(\beta_j\), \(\kappa_j\), \(Y_{nj}^{\ast, t}\), \(Y_{nj}^t\), \(Y_{nj}^{\ast, t - 1}\), \(Y_{nj}^{t - 1}\)
  \item Plate \(n, t > 2\) contains: \(Y_{nj}^{\ast, t}\), \(Y_{nj}^t\), \(Y_{nj}^{\ast, t - 1}\), \(Y_{nj}^{t - 1}\), \(\alpha_n^t\), \(\alpha_n^{t - 1}\), \(\alpha_n^{\ast, t}\), \(\alpha_n^{\ast, t - 1}\)
\end{itemize}

In addition to the above vertices, there are six vertices not contained in any plate: \(\omega\), \(\gamma\), \(V\), \(R\), \(\lambda\), \(\xi\).

The directed edges between vertices are described in the following list, where for example the list item ``\(a\) to \(b\)'' indicates a directed edge from vertex \(a\) to vertex \(b\):

\begin{itemize}
\item \(\omega\) to \(\delta_j\)
\item \(\delta_j\) to \(\beta_j\)
\item \(\beta_j\) to \(Y_{nj}^{\ast, t}\)
\item \(\beta_j\) to \(Y_{nj}^{\ast, t - 1}\)
\item \(\kappa_j\) to \(Y_{nj}^t\)
\item \(\kappa_j\) to \(Y_{nj}^{t - 1}\)
\item \(\alpha_{n}^t\) to \(Y_{nj}^{\ast, t}\)
\item \(\alpha_{n}^{t - 1}\) to \(Y_{nj}^{\ast, t - 1}\)
\item \(\alpha_{n}^{\ast, t}\) to \(\alpha_{n}^t\)
\item \(\alpha_{n}^{\ast, t - 1}\) to \(\alpha_{n}^{t - 1}\)
\item \(\gamma\) to \(\alpha_{n}^t\)
\item \(\gamma\) to \(\alpha_{n}^{t - 1}\)
\item \(V\) to \(\gamma\)
\item \(R\) to \(V\)
\item \(R\) to \(\alpha_{n}^{\ast, t}\)
\item \(R\) to \(\alpha_{n}^{\ast, t - 1}\)
\item \(R\) to \(\lambda\)
\item \(\lambda\) to \(\alpha_{n}^{\ast, t}\)
\item \(\lambda\) to \(\alpha_{n}^{\ast, t - 1}\)
\item \(\xi\) to \(\alpha_{n}^{\ast, t}\)
\item \(\xi\) to \(\alpha_{n}^{\ast, t - 1}\)
\end{itemize}

\subsection*{Figure 3}

The figure is a directed acyclic graph. There are two plates, each of which contains a subset of the vertices of the graph: \begin{itemize}
  \item Plate \(k\) contains: \(\gamma_k\), \(\alpha_{nk}^t\), \(\alpha_{nk}^{\ast, t}\)
  \item Plate \(n, t\) contains: \(\alpha_{nk}^t\), \(\alpha_{nk}^{\ast, t}\)
\end{itemize}

The directed edges between vertices are described in the following list, where for example the list item ``\(a\) to \(b\)'' indicates a directed edge from vertex \(a\) to vertex \(b\):

\begin{itemize}
\item \(\gamma_k\) to \(\alpha_{nk}^t\)
\item \(\alpha_{nk}^{\ast, t}\) to \(\alpha_{nk}^t\)
\end{itemize}

\section*{Supplementary Material B}

Supplementary Material B describes the data augmentation procedure for the observed data and latent states.

\subsection*{Observed Data}

\noindent The introduction of \(Y_{nj}^{\ast, t}\) is a form of ``data augmentation'' \citep{tanner2010data}. Using the directed local Markov property, we have for all \(n \in [N]\), all \(t \in [T]\), and all \(j \in [J]\), \citep{albert1993bayesian} \begin{align}
  p(Y_{nj}^t, Y_{nj}^{\ast, t} \mid \alpha_n^t, \beta_j, \kappa_j) & = p(Y_{nj}^t \mid Y_{nj}^{\ast, t}, \alpha_n^t, \beta_j, \kappa_j) \cdot p(Y_{nj}^{\ast, t} \mid \alpha_n^t, \beta_j, \kappa_j) \notag\\
  & = p(Y_{nj}^t \mid Y_{nj}^{\ast, t}, \kappa_j) \cdot p(Y_{nj}^{\ast, t} \mid \alpha_n^t, \beta_j).
\end{align}

\noindent Therefore \begin{equation}
  p(Y_{nj}^t \mid \alpha_n^t, \beta_j, \kappa_j) = \int_{\kappa_{j Y_{nj}}}^{\kappa_{j(Y_{nj} + 1)}} \phi(Y_{nj}^{\ast, t};\, d_n^t \beta_j, 1) dY_{nj}^{\ast, t}.
\end{equation}

\subsection*{Latent State}

For all \(n \in [N]\), for \(t \in \{2, 3, \ldots, T\}\), making use of the directed local Markov property, \begin{align}
  p(\alpha_n^t, \alpha_n^{\ast, t} \mid \gamma, \lambda, \alpha_n^{t - 1}, \xi, R) & = p(\alpha_n^t \mid \alpha_n^{\ast, t}, \gamma, \lambda, \alpha_n^{t - 1}, \xi, R) \cdot p(\alpha_n^{\ast, t} \mid \gamma, \lambda, \alpha_n^{t - 1}, \xi, R) \notag\\
  & = p(\alpha_n^t \mid \alpha_n^{\ast, t}, \gamma) \cdot p(\alpha_n^{\ast, t} \mid \lambda, \alpha_n^{t - 1}, \xi, R).
\end{align}

\noindent For \(t = 1\), \begin{align}
  p(\alpha_n^t, \alpha_n^{\ast, t} \mid \gamma, \lambda, R) & = p(\alpha_n^t \mid \alpha_n^t, \gamma, \lambda, R) \cdot p(\alpha_n^{\ast, t} \mid \gamma, \lambda, R) \notag\\
  & = p(\alpha_n^t \mid \alpha_n^{\ast, t}, \gamma) \cdot p(\alpha_n^{\ast, t} \mid \lambda, R).
\end{align}

\noindent Therefore, for \(t \in \{2, 3, \ldots, T\}\), \begin{align}
  p(\alpha_n^t \mid \gamma, \lambda, \alpha_n^{t - 1}, \xi, R) & = \int p(\alpha_n^t, \alpha_n^{\ast, t} \mid \gamma, \lambda, \alpha_n^{t - 1}, \xi, R) d\alpha_n^{\ast, t} \notag\\
  & = \int_{\gamma_{K\alpha_{nK}^t}}^{\gamma_{K, \alpha_{nK}^t + 1}} \ldots \int_{\gamma_{1\alpha_{n1}^t}}^{\gamma_{1, \alpha_{n1}^t + 1}} \phi_K(\alpha_n^{\ast, t}; X_n^t \lambda + d_{n, \text{otr}}^{t - 1} \xi, R) d\alpha_n^{\ast, t}.
\end{align}

\noindent For \(t = 1\), \begin{align}
  p(\alpha_n^t \mid \gamma, \lambda, R) & = \int p(\alpha_n^t, \alpha_n^{\ast, t} \mid \gamma, \lambda, R) d\alpha_n^{\ast, t} \notag\\
  & = \int_{\gamma_{K\alpha_{nK}^t}}^{\gamma_{K, \alpha_{nK}^t + 1}} \ldots \int_{\gamma_{1\alpha_{n1}^t}}^{\gamma_{1, \alpha_{n1}^t + 1}} \phi_K(\alpha_n^{\ast, t}; X_n^t \lambda, R) d\alpha_n^{\ast, t}.
\end{align}

\section*{Supplementary Material C}\label{sec-amalgamating}

Supplementary Material C derives the distribution of \(\alpha^\ast \mid \alpha^{1, \ldots, T-1}, \lambda, \xi, R\).

From assumed relationship \(\alpha_n^{\ast, t} \mid \alpha_n^{t-1}, \lambda, \xi, R \sim N_K(X_n^t \lambda + d_{n, \text{otr}}^{t - 1} \xi, R)\) and the applicable conditional independencies, letting \(\alpha^{\ast, t} = ({\alpha_1^{\ast, t}}^\prime, \ldots, {\alpha_N^{\ast, t}}^\prime)^\prime \in \mathbb{R}^{N \times K}\) and \(d_{\text{otr}}^t = ({d_{1, \text{otr}}^t}^\prime, \ldots, {d_{N, \text{otr}}^t}^\prime)^\prime\), an \(N \times H\) matrix, observe that \begin{equation}
  \alpha^{\ast, t} \mid \alpha^{t-1}, \lambda, \xi, R \sim N_{N, K}(X^t \lambda + d_{\text{otr}}^{t - 1} \xi, I_N \otimes R).
\end{equation}

\noindent Further, considering the expression \(\prod_{t=2}^T p(\alpha^{\ast, t} \mid \alpha^{t-1}, \lambda, \xi, R)\) that appears in the model's recursive factorization, letting \(\alpha^\ast = ({\alpha^{\ast, 1}}^\prime, \ldots, {\alpha^{\ast, T}}^\prime)^\prime \in \mathbb{R}^{NT \times K}\) and \(d_{\text{otr}} = ({d_{\text{otr}}^1}^\prime, \ldots, {d_{\text{otr}}^T}^\prime)^\prime\) a \(TN \times H\) matrix, we can write for the term inside the exponential \begin{align}
& \sum_{t=2}^{T} (\alpha^{\ast, t} - X^t \lambda - d_{\text{otr}}^{t - 1} \xi)^\prime (\alpha^{\ast, t} - X^t \lambda - d_{\text{otr}}^{t - 1} \xi) + (\alpha^{\ast, 1} - X^1 \lambda)^\prime (\alpha^{\ast, 1} - X^1 \lambda) \notag\\
  & = \sum_{t=2}^{T} (\alpha^{\ast, t} - X^t \lambda - d_{\text{otr}}^{t - 1} \xi)^\prime (\alpha^{\ast, t} - X^t \lambda - d_{\text{otr}}^{t - 1} \xi) + (\alpha^{\ast, 1} - X^1 \lambda - O \xi)^\prime (\alpha^{\ast, 1} - X^1 \lambda - O \xi) \notag\\
  & = (\alpha^{\ast} - W \zeta)^\prime (\alpha^{\ast} - W \zeta). \label{bg5intermed1}
\end{align}

\noindent Thus we obtain \begin{align}
  \alpha^\ast \mid \alpha^{1, \ldots, T-1}, \lambda, \xi, R & \sim N_{TN, K}(X \lambda + (O^\prime, {d_{\text{otr}}^{1, \ldots, T-1}}^\prime)^\prime \xi, I_{TN} \otimes R) \notag\\
& = N_{TN, K}(W \zeta, I_{TN} \otimes R).
\end{align}

\section*{Supplementary Material D}

Supplementary Material D provides information on the matrix variate normal distribution.

The following definitions and theorems are quoted nearly verbatim from \cite{gupta2000matrix}, and use the same numbering.

\vspace{\baselineskip}

\noindent \textbf{Definition 2.2.1} (Gupta and Nagar): The random matrix \(X (p \times n)\) is said to have a matrix variate normal distribution with mean matrix \(M (p \times n)\) and covariance matrix \(\Sigma \otimes \Psi\) where \(\Sigma (p \times p) > 0\) and \(\Psi (n \times n) > 0\), if \(\text{vec}(X^\prime) \sim N_{pn}(\text{vec}(M^\prime), \Sigma \otimes \Psi)\), a multivariate normal distribution. For such an \(X\), we write \(X \sim N_{p,n}(M, \Sigma \otimes \Psi)\).

\vspace{\baselineskip}

\noindent \textbf{Theorem 2.2.1} (Gupta and Nagar): If \(X \sim N_{p,n}(M, \Sigma \otimes \Psi)\), then the p.d.f. of \(X\) is given by \begin{equation*}
    (2\pi)^{-\frac{1}{2}np} (\det(\Sigma))^{-\frac{1}{2}n} (\det(\Psi))^{-\frac{1}{2}p} \text{etr}\left\{-\frac{1}{2} \Sigma^{-1} (X - M) \Psi^{-1} (X - M)^\prime\right\}
  \end{equation*}
  \noindent where \text{etr} denotes the expected value of the trace and where of course \(X \in \mathbb{R}^{p \times n}\) and \(M \in \mathbb{R}^{p \times n}\).

\vspace{\baselineskip}

\noindent \textbf{Theorem 2.3.1} (Gupta and Nagar): If \(X \sim N_{p,n}(M, \Sigma \otimes \Psi)\), then \(X^\prime \sim N_{n,p}(M^\prime, \Psi \otimes \Sigma)\).

\section*{Supplementary Material E}

Supplementary Material E contains the proof of Theorem 1 as well as proofs of other results on which the proof of Theorem 1 depends.

Theorem 1 is proved in section E.1. The proof of Theorem 1 relies on Theorem \ref{identiflikfinalident}, which is proved in section E.2.

\subsection*{E.1 \quad Proof of Theorem 1 (Identifiability of Longitudinal Model)}

When we say ``up to label swapping of dimensions of \(\alpha_n^t\)'' or simply ``up to label swapping,'' we mean that for each relevant vector or matrix parameter \(a\) of \(\Theta\), that \(\widetilde{a} = g(a)\) where \(g\) is a transformation that permutes (1) the rows, (2) the columns, or (3) both the rows and the columns of \(a\) on the dimension(s) for which the latent state value varies. Let \(R\) be the matrix that permutes the rows of a matrix or vector when applied on the left. Recall that \(R^\prime\) is then the permutation matrix that performs the corresponding permutation of the columns of a matrix when applied on the right.

Note that for readability, we abstract away indexing issues by sometimes writing vectors equal to scalars when we define the entries of matrices. 

Let \(U = \{U_{n, t, t-1} \ssep n \in [N], t \in \{2, 3, \ldots, T\}\}\), namely the set of all transition matrices for all respondents, and let \(P = \{\pi_n^t \ssep n \in [N], t \in [T]\}\), namely the set of all marginal latent state probabilities for all respondents for all time points. Write \(\{B, U, P\}\) for the set of all possible triples of \(B\), \(U\), and \(P\). Also, when we write for example \(\{U\}\) we mean the set of all possible \(U\).

The steps of the proof are as follows. We assume conditions (C1) through (C6) hold. We will show

\begin{enumerate}
\item[(1)] (Corollary \ref{identiflongit04}) \(\{B, U, P\}\) is strictly identifiable up to label swapping from \(\{p(Y \mid \theta)\}\)
\item[(2)] (Lemma \ref{identiflongit2a01}) \(\forall\, U, \widetilde{U} \in \{U\} \enspace \forall\, \theta_s, \widetilde{\theta_s} \in \Theta_s \quad U \sim_E \widetilde{U} \implies \theta_s \sim_E \widetilde{\theta_s}\) (this relies on Theorem \ref{identiflikfinalident})
\end{enumerate}

By the result of \citet{he2023sparse}, \begin{enumerate}
\item[(3)] \(\Theta_m\) is generically identifiable up to label swapping from \(\{B\}\) if condition (D1) holds, and \(\Theta_m\) is strictly identifiable up to label swapping from \(\{B\}\) if both conditions (D1) and (D2) hold.
  \end{enumerate}

Putting (2) and (3) together, we have that if both (D1) and (D2) hold, then \((B, U) \sim_E (\widetilde{B}, \widetilde{U}) \implies (\theta_m, \theta_s) \sim_E (\widetilde{\theta_m}, \widetilde{\theta_s})\). If only (D1) holds, then for all all \((\widetilde{\theta_m}, \widetilde{\theta_s})\) other than those on a measure zero subset we have \((B, U) \sim_E (\widetilde{B}, \widetilde{U}) \implies (\theta_m, \theta_s) \sim_E (\widetilde{\theta_m}, \widetilde{\theta_s})\).

From (1) through (3) we have that if condition (D1) holds, then \(\Theta\) is generically identifiable up to label swapping from \(\{p(Y \mid \theta)\}\), and if both conditions (D1) and (D2) hold then \(\Theta\) is strictly identifiable up to label swapping from \(\{p(Y \mid \theta)\}\).

We now show results (1) and (2).

\subsubsection*{E.1.1 \quad Showing Result (1)}

\begin{lemma}\label{identiflongit01}
  If there exist subsets \(J_1, J_2, J_3\) of the items such that partitioning the emissions matrix \((p(Y_n^{t-1} \mid \alpha_n^{t-1}, \theta))\) into matrices \((p(Y_{n, J_1}^{t-1} \mid \alpha_n^{t-1}, \theta))\), \((p(Y_{n, J_2}^{t-1} \mid \alpha_n^{t-1}, \theta))\), and \((p(Y_{n, J_3}^{t-1} \mid \alpha_n^{t-1}, \theta))\) whose ranks are all greater than or equal to \(L^K\), then \(\{\pi_n^{t-1}\}\) is strictly identifiable up to label swapping from \(\{p(Y_n^{t-1} \mid \theta)\}\).
\end{lemma}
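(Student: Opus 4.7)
The plan is to invoke Kruskal's classical uniqueness theorem for three-way tensor decompositions, adapted to finite mixture models in the manner of Allman, Matias, and Rhodes (2009). The key observation is that, conditional on $\alpha_n^{t-1}$, the responses across different items are independent, so the joint distribution over any partition of the items into three disjoint blocks $J_1, J_2, J_3$ factors as
\begin{equation*}
p(y_1, y_2, y_3 \mid \theta) = \sum_{\alpha \in A_L} \pi_n^{t-1}(\alpha) \prod_{i=1}^3 p(Y_{n, J_i}^{t-1} = y_i \mid \alpha, \theta),
\end{equation*}
which is a CP decomposition of a three-way probability tensor into $L^K$ rank-one terms.

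First, I would formalize the tensor factorization by letting $B_i$ denote the block emissions matrix of dimension $\bigl(\prod_{j \in J_i} M_j\bigr) \times L^K$ whose columns are the class-conditional response-pattern distributions $p(Y_{n, J_i}^{t-1} \mid \alpha, \theta)$ indexed by $\alpha \in A_L$. The identifiability question reduces to whether the triple of factor matrices $(B_1, B_2, B_3)$ together with the mixing weights $\pi_n^{t-1}$ is uniquely determined from the tensor of joint probabilities of response patterns.

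Second, I would apply Kruskal's theorem. By hypothesis each $B_i$ has rank at least $L^K$; since each such matrix has exactly $L^K$ columns, its rank must equal $L^K$, and hence each $B_i$ has full column rank with Kruskal rank $L^K$. The sufficient Kruskal condition
\begin{equation*}
\mathrm{rank}_K(B_1) + \mathrm{rank}_K(B_2) + \mathrm{rank}_K(B_3) \geq 2 L^K + 2
\end{equation*}
then becomes $3 L^K \geq 2 L^K + 2$, i.e.\ $L^K \geq 2$, which holds in every nontrivial case. Kruskal's theorem therefore yields uniqueness of the CP decomposition up to a common permutation of the $L^K$ components and column-wise rescalings whose product across the three factors is one.

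Finally, I would remove the rescaling indeterminacy via normalization: each column of each $B_i$ is a conditional probability distribution summing to one, and the weights $\pi_n^{t-1}(\alpha)$ likewise sum to one, so the scalar freedom collapses. Consequently, the columns of each $B_i$ and the entries of $\pi_n^{t-1}$ are determined up to a single simultaneous permutation of the latent-state labels. In particular $\pi_n^{t-1}$ is strictly identifiable from $\{p(Y_n^{t-1} \mid \theta)\}$ up to label swapping. The main obstacle I anticipate is expository rather than substantive: stating Kruskal's sufficient condition precisely and explaining carefully that the normalization of emissions columns and mixing weights eliminates the scalar ambiguity so that $\pi_n^{t-1}$ itself, not merely a rescaled version, is recovered; the degenerate case $L^K = 1$ is handled separately since then $\pi_n^{t-1}$ is trivially a point mass.
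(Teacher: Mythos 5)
Your proof is correct and takes essentially the same route as the paper's: both arrange $(p(Y_n^{t-1}\mid\theta))$ as a three-way tensor via the conditional-independence factorization over the blocks $J_1, J_2, J_3$, observe that the hypothesis forces each block emissions matrix to have full column rank $L^K$, and then invoke an off-the-shelf uniqueness theorem for such trilinear (CP) decompositions to recover the factors and the mixing weights up to a common permutation. The only difference is the citation --- the paper appeals to Theorem 3 of Bonhomme, Jochmans, and Robin (2016) where you appeal to Kruskal's theorem in the Allman--Matias--Rhodes formulation --- and your explicit treatment of the scaling indeterminacy via column normalization and of the degenerate case $L^K = 1$ is sound.
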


\begin{proof}
Consider \(\{p_{Y_n^{t-1}}(\bullet \mid \theta)\}\). Split the vector \(Y_n^{t-1}\) into \(Y_n^{t-1} = (Y_{n, J_1}^{t-1}, Y_{n, J_2}^{t-1}, Y_{n, J_3}^{t-1})\). Note that by the conditional independencies, for all values of \(Y_n^{t-1}\), \(p(Y_n^{t-1} \mid \alpha_n^{t-1}, \theta) = p(Y_{n, J_1}^{t-1} \mid \alpha_n^{t-1}, \theta) \cdot p(Y_{n, J_2}^{t-1} \mid \alpha_n^{t-1}, \theta) \cdot p(Y_{n, J_3}^{t-1} \mid \alpha_n^{t-1}, \theta)\). Arranging \((p(Y_n^{t-1} \mid \theta))\) as a three-way array and letting \(\otimes\) denote the outer product, observe that we can write \begin{align}
  & (p(Y_n^{t-1} \mid \theta)) \notag\\
  & = \sum_{l=1}^{L^K} p(\alpha_n^{t-1} = l \mid \theta) \cdot (p(Y_{n, J_1}^{t-1} \mid \alpha_n^{t-1} = l, \theta)) \otimes (p(Y_{n, J_2}^{t-1} \mid \alpha_n^{t-1} = l, \theta)) \otimes (p(Y_{n, J_3}^{t-1} \mid \alpha_n^{t-1} = l, \theta)).
\end{align}

\noindent By Theorem 3 of \citet{bonhomme2016estimating}, it follows that \(\{(p(Y_{n, J_1}^{t-1} \mid \alpha_n^{t-1}, \theta)), \\ (p(Y_{n, J_2}^{t-1} \mid \alpha_n^{t-1}, \theta)), (p(Y_{n, J_3}^{t-1} \mid \alpha_n^{t-1}, \theta)), \pi_n^{t-1}\}\) is strictly identifiable up to label swapping from \(\{p(Y_n^{t-1} \mid \theta)\}\).
\end{proof}

\begin{lemma}\label{identiflongit02}
  Let \(Y_n^{t-1}, Y_n^t, Y_n^{t+1}\) be discrete random vectors. Let \(\theta \in \Theta\). For \(l \in [L^K]\) define the row vector \((m_{n, l}^{t_1, t_2})_i = p(Y_n^{t_1} = i \mid \alpha_n^{t_2} = l, \theta)\), and define \(m_n^{t_1, t_2} = ({m_{n,1}^{t_1, t_2}}^\prime, \ldots, {m_{n,L^K}^{t_1, t_2}}^\prime)^\prime\) (the matrix whose rows consist of vectors \(m_{n, l}^{t_1, t_2}\)). It holds that \(\{m_n^{t-1, t}, m_n^{t, t}, m_n^{t+1, t}, \pi_n^t\}\) is strictly identifiable up to label swapping from \(\{p(Y_n^{t-1}, Y_n^t, Y_n^{t+1} \mid \theta)\}\).
\end{lemma}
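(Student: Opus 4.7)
The plan is to recognize this as a standard three-view (Kruskal-style) identifiability problem and invoke Theorem 3 of \citet{bonhomme2016estimating}, the same tool already used in the proof of Lemma~\ref{identiflongit01}. The structural observation that drives the argument is the HMM conditional independence: by the directed local Markov property applied to the model graph, conditioning on $\alpha_n^t$ d-separates the past portion of the graph (containing $\alpha_n^{t-1}$ and $Y_n^{t-1}$) from the future portion (containing $\alpha_n^{t+1}$ and $Y_n^{t+1}$), while $Y_n^t$ depends only on $\alpha_n^t$. Consequently, given $\alpha_n^t$ the three response vectors are mutually conditionally independent, and marginalizing over $\alpha_n^t$ yields the nonnegative CP tensor decomposition
\begin{equation*}
p\bigl(Y_n^{t-1}=i,\, Y_n^t=j,\, Y_n^{t+1}=k \mid \theta\bigr) = \sum_{l=1}^{L^K} \pi_n^t(l) \cdot m_{n,l}^{t-1,t}(i) \cdot m_{n,l}^{t,t}(j) \cdot m_{n,l}^{t+1,t}(k),
\end{equation*}
so the three view matrices $m_n^{t-1,t}, m_n^{t,t}, m_n^{t+1,t}$ together with the weight vector $\pi_n^t$ form the parameters of a three-view finite mixture with $L^K$ components.

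Next I would verify the hypotheses of the cited theorem, namely positivity of the mixture weights and rank $L^K$ of each view matrix. Positivity is condition (C1). The middle matrix $m_n^{t,t}$ is the transpose of the emissions matrix $B$ and thus has rank $L^K$ by (C3). The forward view factors by conditioning on $\alpha_n^{t+1}$ as $m_n^{t+1,t} = U_{n,t+1,t}\, B^\prime$; both factors have rank $L^K$ by (C4) and (C3) respectively, so the product does too. For the backward view, conditioning on $\alpha_n^{t-1}$ gives $m_n^{t-1,t} = V_{n,t-1,t}\, B^\prime$, where $V_{n,t-1,t}$ is the matrix with $(l, l^\prime)$ entry $p(\alpha_n^{t-1}=l^\prime \mid \alpha_n^t = l)$. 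Applying Bayes' rule yields $V_{n,t-1,t} = \text{diag}(\pi_n^t)^{-1}\, U_{n,t,t-1}^\prime\, \text{diag}(\pi_n^{t-1})$; the diagonal factors are invertible by (C1) and $U_{n,t,t-1}$ has rank $L^K$ by (C4), so $V_{n,t-1,t}$, and hence $m_n^{t-1,t}$, has rank $L^K$.

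With the tensor decomposition and the rank and positivity hypotheses in place, Theorem 3 of \citet{bonhomme2016estimating} gives that the quadruple $\{m_n^{t-1,t}, m_n^{t,t}, m_n^{t+1,t}, \pi_n^t\}$ is strictly identifiable up to a common permutation of the $L^K$ latent components, which is the ``label swapping'' equivalence used throughout this proof. The main obstacle is the handling of the backward view: the reverse-time transition matrix $V_{n,t-1,t}$ is not a primitive parameter of the model, so I must use Bayes' rule together with the positivity condition (C1) to express it in a form where (C4) can be applied. Once that factorization is in hand, the remainder is a direct appeal to the three-view theorem.
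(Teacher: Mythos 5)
Your proposal is correct and follows essentially the same route as the paper's proof: you express the three view matrices as $B$ (transposed), $B U_{n,t+1,t}^\prime$, and $B\,\mathrm{diag}(\pi_n^{t-1})\,U_{n,t,t-1}\,\mathrm{diag}(\pi_n^t)^{-1}$ (up to a harmless transposition of indexing conventions), verify full rank from (C1), (C3), (C4), and invoke Theorem 3 of \citet{bonhomme2016estimating} on the resulting three-way tensor decomposition, exactly as the paper does. The Bayes-rule handling of the backward view, which you flag as the main obstacle, is precisely the step the paper carries out as well.
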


\begin{proof}
  We first relate the matrices \(m_n^{t-1, t}\), \(m_n^{t, t}\), and \(m_n^{t+1, t}\) to matrices whose ranks are known. First, observe that \(m_n^{t, t} = B\). Second, we find \(m_n^{t+1, t}\): \begin{align}
  \left(B {U_{n,t+1, t}}^\prime\right)_{ij} & = \sum_{k=1}^{L^K} p(Y_n^{t+1} = i \mid \alpha_n^{t+1} = k, \theta) \, p(\alpha_n^{t+1} = k \mid \alpha_n^t = j, \theta) \notag\\
  & = p(Y_n^{t+1} = i \mid \alpha_n^t = j, \theta) = \left(m_n^{t+1, t}\right)_{ij}
\end{align}

Third, we find \(m_n^{t-1, t}\). Note that \begin{align}
  \left(\text{diag}(\pi_n^{t-1})\, U_{n, t, t-1}\, (\text{diag}(\pi_n^t))^{-1}\right)_{ij} & = p(\alpha_n^{t-1} = i \mid \theta) \, p(\alpha_n^t = j \mid \alpha_n^{t-1} = i, \theta) \notag\\
  & \qquad\qquad p(\alpha_n^t = j \mid \theta)^{-1} \notag\\
  & = p(\alpha_n^t = j, \alpha_n^{t-1} = i \mid \theta) \, p(\alpha_n^t = j \mid \theta)^{-1} \notag\\
  & = p(\alpha_n^{t-1} = i \mid \alpha_n^t = j, \theta)
\end{align}

\noindent and therefore \begin{align}
  & \left(B\, \text{diag}(\pi_n^{t-1})\, U_{n, t, t-1}\, (\text{diag}(\pi_n^t))^{-1}\right)_{ij} \notag\\
  & = \sum_{k=1}^{L^K} p(Y_n^{t-1} = i \mid \alpha_n^{t-1} = k, \theta) \, p(\alpha_n^{t-1} = k \mid \alpha_n^t = j, \theta) \notag\\
  & = p(Y_n^{t-1} = i \mid \alpha_n^t = j, \theta) = \left(m_n^{t-1, t}\right)_{ij}
\end{align}

Summarizing, we have that \[\begin{array}{lcl}
  m_n^{t-1, t} & = & B \, \text{diag}(\pi_n^{t-1})\, U_{n, t, t-1}\, (\text{diag}(\pi_n^t))^{-1} \\
  m_n^{t, t} & = & B \\
  m_n^{t+1, t} & = & B {U_{n,t+1, t}}^\prime
\end{array}\]

We now show that \(\text{rank}(m_n^{t-1, t}) = \text{rank}(m_n^{t, t}) = \text{rank}(m_n^{t+1, t}) = L^K\). First, \(\text{rank}(m_n^{t, t}) = \text{rank}(B) = L^K\). For \(m_n^{t+1, t}\), observe that since \(m_n^{t+1, t} = B {U_{n,t+1, t}}^\prime\), we have \(\text{rank}(m_n^{t+1, t}) \leq \text{rank}(B)\) and that \begin{equation}
  \text{rank}(B) = \text{rank}(B {U_{n,t+1, t}}^\prime ({U_{n,t+1, t}}^\prime)^{-1}) \leq \text{rank}(B {U_{n,t+1, t}}^\prime)
\end{equation}

\noindent so \(\text{rank}(m_n^{t+1, t}) = \text{rank}(B) = L^K\). Finally, since \(m_n^{t-1, t} = B \, \text{diag}(\pi_n^{t-1})\, U_{n, t, t-1}\, (\text{diag}(\pi_n^t))^{-1}\), we first have \(\text{rank}(m_n^{t-1, t}) \leq \text{rank}(B)\). Since \(\text{diag}(\pi_n^{t-1})\), \(U_{n, t, t-1}\), and \((\text{diag}(\pi_n^t))^{-1}\) are all of dimension \(L^K \times L^K\) and are all of rank \(L^K\), each is invertible and thus their product is invertible. For a moment, let \(A = \text{diag}(\pi_n^{t-1})\, U_{n, t, t-1}\, (\text{diag}(\pi_n^t))^{-1}\). Observe that \begin{equation}
  \text{rank}(B) = \text{rank}(B A A^{-1}) \leq \text{rank}(BA) = \text{rank}(m_n^{t-1, t})
\end{equation}

\noindent and thus \(\text{rank}(m_n^{t-1, t}) = \text{rank}(B) = L^K\).

Observe that we can write \begin{equation}
  (p(Y_n^{t-1}, Y_n^t, Y_n^{t+1} \mid \theta)) = \sum_{l=1}^{L^K} p(\alpha_n^t = l \mid \theta) \cdot m_{n, l}^{t-1, t} \otimes m_{n, l}^{t, t} \otimes m_{n, l}^{t+1, t}
\end{equation}

\noindent By Theorem 3 of \citet{bonhomme2016estimating}, since \(m_{n, l}^{t-1, t}\), \(m_{n, l}^{t, t}\), and \(m_{n, l}^{t+1, t}\) are all full (column) rank, we have that \(\{m_n^{t-1, t}, m_n^{t, t}, m_n^{t+1, t}, \pi_n^t\}\) is strictly identifiable up to label swapping from \(\{p(Y_n^{t-1}, Y_n^t, Y_n^{t+1} \mid \theta)\}\).
\end{proof}

\begin{theorem}\label{identiflongit03}
  \(\{U_{n, t, t-1}, U_{n, t+1, t}, B, \pi_n^{t-1}, \pi_n^t\}\) is strictly identifiable up to label swapping from \\\(\{p(Y_n^{t-1}, Y_n^t, Y_n^{t+1} \mid \theta)\}\).
\end{theorem}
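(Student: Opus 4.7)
The plan is to leverage Lemma \ref{identiflongit02}, which already gives that the four quantities $m_n^{t-1, t}, m_n^{t, t}, m_n^{t+1, t}, \pi_n^t$ are strictly identifiable up to label swapping from $\{p(Y_n^{t-1}, Y_n^t, Y_n^{t+1} \mid \theta)\}$. What remains is to invert the explicit algebraic relationships derived in the proof of Lemma \ref{identiflongit02}, namely
\begin{align*}
m_n^{t-1, t} & = B \, \text{diag}(\pi_n^{t-1})\, U_{n, t, t-1}\, (\text{diag}(\pi_n^t))^{-1}, \\
m_n^{t, t} & = B, \\
m_n^{t+1, t} & = B\, {U_{n,t+1, t}}^\prime,
\end{align*}
to recover $B$, $U_{n, t, t-1}$, $U_{n, t+1, t}$, $\pi_n^{t-1}$, and $\pi_n^t$. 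Here $\pi_n^t$ is already delivered by Lemma \ref{identiflongit02} and $B$ is read off directly as $B = m_n^{t,t}$.

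Next I would use the full column rank of $B$ guaranteed by condition (C3) to invert the remaining relations. From $m_n^{t+1,t} = B \, U_{n,t+1,t}^\prime$, left-multiplying by $(B^\prime B)^{-1} B^\prime$ gives $U_{n,t+1,t}^\prime = (B^\prime B)^{-1} B^\prime m_n^{t+1,t}$, so $U_{n,t+1,t}$ is uniquely determined. To recover $\pi_n^{t-1}$ (which is not among the quantities Lemma \ref{identiflongit02} explicitly delivers), I would marginalize $p(Y_n^{t-1}, Y_n^t, Y_n^{t+1} \mid \theta)$ over $Y_n^t$ and $Y_n^{t+1}$, obtaining the vector equation $(p(Y_n^{t-1} \mid \theta)) = B \pi_n^{t-1}$. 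Since $B$ has full column rank, $\pi_n^{t-1} = (B^\prime B)^{-1} B^\prime (p(Y_n^{t-1} \mid \theta))$ is uniquely determined. Condition (C1) ensures $\text{diag}(\pi_n^{t-1})$ and $\text{diag}(\pi_n^t)$ are invertible, so finally
\begin{equation*}
U_{n, t, t-1} = (\text{diag}(\pi_n^{t-1}))^{-1} (B^\prime B)^{-1} B^\prime m_n^{t-1, t} \, \text{diag}(\pi_n^t)
\end{equation*}
is uniquely determined.

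The only subtlety I anticipate is bookkeeping the label-swapping equivalence. Lemma \ref{identiflongit02} gives identifiability of $m_n^{t-1,t}, m_n^{t,t}, m_n^{t+1,t}, \pi_n^t$ up to a common permutation of the latent-state labels, applied to the columns of each $m_n^{\cdot,t}$ and to the entries of $\pi_n^t$. I would note that the algebraic inversions above are equivariant under this permutation: applying a permutation matrix $R$ (acting as $R^\prime$ on the right of $B$, as $R \cdot R^\prime$ conjugation on $U_{n,t,t-1}$ and $U_{n,t+1,t}$, and as $R$ on the left of $\pi_n^{t-1}$) transforms one solution to another of the same system, so the recovered quintuple is well-defined up to label swapping. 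I expect no genuine obstacle here beyond making this equivariance explicit; the main conceptual work was already done in Lemmas \ref{identiflongit01} and \ref{identiflongit02}, and this theorem is essentially a linear-algebra extraction step from their output.
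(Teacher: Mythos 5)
Your proposal is correct and, for most of the argument, coincides with the paper's proof: both start from Lemma \ref{identiflongit02}, read off \(B = m_n^{t,t}\), and use the full column rank of \(B\) from (C3) together with the invertibility of \(\text{diag}(\pi_n^{t-1})\) and \(\text{diag}(\pi_n^t)\) from (C1) to invert the relations \(m_n^{t+1,t} = B\,U_{n,t+1,t}^\prime\) and \(m_n^{t-1,t} = B\,\text{diag}(\pi_n^{t-1})\,U_{n,t,t-1}\,(\text{diag}(\pi_n^t))^{-1}\) (the paper phrases this as a chain of equivalences under a permutation \(R\) rather than as explicit left-inverse formulas, but the algebra is the same, and your equivariance check is exactly the content of that chain). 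The one place you genuinely diverge is the recovery of \(\pi_n^{t-1}\): the paper obtains it by invoking Lemma \ref{identiflongit01}, i.e.\ a second application of the three-way tensor decomposition of \citet{bonhomme2016estimating} to the single-time-point marginal \(p(Y_n^{t-1}\mid\theta)\) using the item subsets \(J_1, J_2, J_3\), whereas you marginalize the identified joint over \(Y_n^t, Y_n^{t+1}\) to get \((p(Y_n^{t-1}\mid\theta)) = B\,\pi_n^{t-1}\) and solve by \(\pi_n^{t-1} = (B^\prime B)^{-1}B^\prime(p(Y_n^{t-1}\mid\theta))\), which is valid since \(B\) is already identified (up to column permutation) and has full column rank, and which is equivariant under the permutation action as you note. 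Your route is slightly more economical -- it avoids a second invocation of the tensor identifiability theorem and derives \(\pi_n^{t-1}\) as a linear-algebra by-product of quantities already in hand -- while the paper's route keeps the single-time-point identifiability of \(\pi_n^{t-1}\) as a free-standing lemma. Either way the conclusion and the conditions used, (C1), (C3), (C4), are the same.
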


\begin{proof}
  To prove this theorem, we will show that \begin{align}
    & \{m_n^{t-1, t}, m_n^{t, t}, m_n^{t+1, t}, \pi_n^t\} \sim_E \{\widetilde{m}_n^{t-1, t}, \widetilde{m}_n^{t, t}, \widetilde{m}_n^{t+1, t}, \widetilde{\pi}_n^t\} \notag\\
    & \implies \{U_{n, t, t-1}, U_{n, t+1, t}, B, \pi_n^{t-1}, \pi_n^t\} \sim_E \{\widetilde{U}_{n, t, t-1}, \widetilde{U}_{n, t+1, t}, \widetilde{B}, \widetilde{\pi}_n^{t-1}, \widetilde{\pi}_n^t\}.\label{identiflongit03fact}
  \end{align}

\noindent Combining this result with Lemma \ref{identiflongit02} will result in the desired conclusion. We now show that \eqref{identiflongit03fact} holds.

Assume \(\{m_n^{t-1, t}, m_n^{t, t}, m_n^{t+1, t}, \pi_n^t\} \sim_E \{\widetilde{m}_n^{t-1, t}, \widetilde{m}_n^{t, t}, \widetilde{m}_n^{t+1, t}, \widetilde{\pi}_n^t\}\).

First, \(\pi_n^t \sim_E \widetilde{\pi}_n^t\), which holds if and only if \(\widetilde{\pi}_n^t = R \pi_n^t\) (and thus \(\text{diag}(\widetilde{\pi}_n^t) = R \, \text{diag}(\pi_n^t) \, R^\prime\)).

Second, \(m_n^{t, t} \sim_E \widetilde{m}_n^{t, t}\), which holds if and only if \[\begin{array}{llcl}
  & \widetilde{m}_n^{t, t} & = & m_n^{t, t} R^\prime \\
\therefore\quad & \widetilde{B} & = & B R^\prime
\end{array}\]

\noindent and thus \(\widetilde{B} \sim_E B\).

Third, \(m_n^{t+1, t} \sim_E \widetilde{m}_n^{t+1, t}\), which holds if and only if \[\def\arraystretch{1.2}
\begin{array}{rcl}
  & \widetilde{m}_n^{t+1, t} & = m_n^{t+1, t} R^\prime \\
\therefore\quad & \widetilde{B} \, \widetilde{U}_{n,t+1, t}^\prime & = (B \, U_{n,t+1, t}^\prime) \, R^\prime \\
\therefore\quad & (B R^\prime) \, \widetilde{U}_{n,t+1, t}^\prime & = B \, U_{n,t+1, t}^\prime \, R^\prime \\
\therefore\quad & \widetilde{U}_{n,t+1, t} \, (R B^\prime) & = R \, U_{n,t+1, t} \, B^\prime \\
\therefore\quad & \widetilde{U}_{n,t+1, t} \, (R B^\prime) \, B & = R \, U_{n,t+1, t} \, B^\prime \, B\\
\equiv\quad & \widetilde{U}_{n,t+1, t} \, R & = R \, U_{n,t+1, t}\\
\therefore\quad & \widetilde{U}_{n,t+1, t} \, R \, R^\prime & = R \, U_{n,t+1, t} \, R^\prime \\
\equiv\quad & \widetilde{U}_{n,t+1, t} & = R \, U_{n,t+1, t} \, R^\prime
\end{array}\]

\noindent where we have used (C3). Thus we have shown that \(\widetilde{U}_{n,t+1, t} \sim_E U_{n,t+1, t}\).

Fourth, \(m_n^{t-1, t} \sim_E \widetilde{m}_n^{t-1, t}\), which holds if and only if \[\def\arraystretch{1.2}
\begin{array}{rcl}
  & \widetilde{m}_n^{t-1, t} & = m_n^{t-1, t} R^\prime \\
  \therefore\quad & \widetilde{B} \, \text{diag}(\widetilde{\pi}_n^{t-1})\, \widetilde{U}_{n, t, t-1}\, (\text{diag}(\widetilde{\pi}_n^t))^{-1} & = B \, \text{diag}(\pi_n^{t-1})\, U_{n, t, t-1}\, (\text{diag}(\pi_n^t))^{-1} R^\prime\\
  \therefore\quad & (B R^\prime) (R \, \text{diag}(\pi_n^{t-1}) \, R^\prime) \widetilde{U}_{n, t, t-1} \, (R \, (\text{diag}(\pi_n^t))^{-1} \, R^\prime) & = B \, \text{diag}(\pi_n^{t-1})\, U_{n, t, t-1}\, (\text{diag}(\pi_n^t))^{-1} R^\prime\\
  \therefore\quad & B \, \text{diag}(\pi_n^{t-1}) \, R^\prime \, \widetilde{U}_{n, t, t-1} \, (R \, (\text{diag}(\pi_n^t))^{-1} \, R^\prime) & = B \, \text{diag}(\pi_n^{t-1})\, U_{n, t, t-1}\, (\text{diag}(\pi_n^t))^{-1} R^\prime\\
  \therefore\quad & B \, \text{diag}(\pi_n^{t-1}) \, R^\prime \, \widetilde{U}_{n, t, t-1} \, R \, (\text{diag}(\pi_n^t))^{-1} & = B \, \text{diag}(\pi_n^{t-1})\, U_{n, t, t-1}\, (\text{diag}(\pi_n^t))^{-1} \\
  \therefore\quad & \text{diag}(\pi_n^{t-1}) \, R^\prime \, \widetilde{U}_{n, t, t-1} \, R \, \text{diag}(\pi_n^t))^{-1} & = \text{diag}(\pi_n^{t-1})\, U_{n, t, t-1}\, (\text{diag}(\pi_n^t))^{-1} \\
  \therefore\quad & R^\prime \, \widetilde{U}_{n, t, t-1} \, R & = U_{n, t, t-1} \\
  \therefore\quad & \widetilde{U}_{n, t, t-1} & = R \, U_{n, t, t-1}\, R^\prime \\
\end{array}\]

\noindent where in the above derivation we have used Lemma \ref{identiflongit01} and assumptions (C1), (C3) and (C4). Thus we have shown  \(\widetilde{U}_{n, t, t-1} \sim_E U_{n, t, t-1}\).

Thus we have shown that \(\{m_n^{t-1, t}, m_n^{t, t}, m_n^{t+1, t}, \pi_n^t\} \sim_E \{\widetilde{m}_n^{t-1, t}, \widetilde{m}_n^{t, t}, \widetilde{m}_n^{t+1, t}, \widetilde{\pi}_n^t\}\) implies that \(\{U_{n, t, t-1}, U_{n, t+1, t}, B, \pi_n^{t-1}, \pi_n^t\} \sim_E \{\widetilde{U}_{n, t, t-1}, \widetilde{U}_{n, t+1, t}, \widetilde{B}, \widetilde{\pi}_n^{t-1}, \widetilde{\pi}_n^t\}\). By this fact and Lemma \ref{identiflongit02}, we have shown that \(\{U_{n, t, t-1}, U_{n, t+1, t}, B, \pi_n^{t-1}, \pi_n^t\}\) is strictly identifiable up to label swapping from \(\{p(Y_n^{t-1}, Y_n^t, Y_n^{t+1} \mid \theta)\}\).
\end{proof}

\begin{corollary}\label{identiflongit04}
\(\{B, U, P\}\) is strictly identifiable up to label swapping from \(\{p(Y \mid \theta)\}\).
\end{corollary}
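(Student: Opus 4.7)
The plan is to reduce Corollary \ref{identiflongit04} to Theorem \ref{identiflongit03} by restricting attention to the marginal distribution of consecutive triples of time points for each individual respondent, and then stitching the resulting local identifications together into a single global identification. Implicit in this approach is the mild assumption $T \geq 3$, which I would state at the outset. Fixing $\theta, \widetilde{\theta} \in \Theta$ with $p(Y \mid \theta) = p(Y \mid \widetilde{\theta})$, marginalization of the joint over all other responses gives equality of $p(Y_n^{t-1}, Y_n^t, Y_n^{t+1} \mid \theta)$ and $p(Y_n^{t-1}, Y_n^t, Y_n^{t+1} \mid \widetilde{\theta})$ for every $n \in [N]$ and every $t \in \{2, \ldots, T-1\}$. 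Applying Theorem \ref{identiflongit03} to each such pair yields a permutation matrix $R_{n, t}$ witnessing the label-swapping equivalence between the corresponding tilded and untilded parameter blocks for that respondent and that time index.

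The crucial step, and the only place where the argument goes beyond bookkeeping, is to show that this permutation does not depend on $(n, t)$. Each application yields the relation $\widetilde{B} = B R_{n, t}^\prime$, and since $B$ does not depend on $n$ or $t$, we obtain $B R_{n, t}^\prime = B R_{n', t'}^\prime$ for any two applications. Condition (C3) guarantees that $B$ has full column rank $L^K$, so left-cancellation forces $R_{n, t} = R_{n', t'}$. Thus a single permutation $R$ witnesses the label swapping in every local identification simultaneously, and therefore all of $B$, every $U_{n, s, s-1}$ for $s \in \{2, \ldots, T\}$, and every $\pi_n^s$ for $s \in \{1, \ldots, T-1\}$ are related to their tilded counterparts under the same $R$.

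The last ingredient is the marginal $\pi_n^T$, which is not directly produced by any application of the theorem. Here I would use the law of total probability to express $\pi_n^T$ as a deterministic function of $\pi_n^{T-1}$ and $U_{n, T, T-1}$, both of which have already been identified under the common $R$; a direct substitution then shows that this same $R$ also relates $\widetilde{\pi}_n^T$ to $\pi_n^T$. Collecting everything gives identifiability of $\{B, U, P\}$ up to label swapping from $\{p(Y \mid \theta)\}$. The only real obstacle I anticipate is the coherence-of-permutation argument above: it is not deep, but it is what makes the corollary genuinely a statement about the full parameter set rather than a disjoint union of per-respondent, per-time-point statements, and it depends essentially on the sharing of $B$ across all $(n, t)$ together with its full column rank from (C3).
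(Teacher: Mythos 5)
Your proposal is correct and follows the same route as the paper, whose entire proof is the single sentence that extending Theorem \ref{identiflongit03} across all \(t\) and \(n\) gives the result. The two details you supply --- forcing a common permutation across all \((n,t)\) via the shared, full-column-rank \(B\) guaranteed by (C3), and recovering \(\pi_n^T\) from \(\pi_n^{T-1}\) and \(U_{n,T,T-1}\) because the triple-based theorem only identifies marginals up to time \(T-1\) --- are exactly the bookkeeping the paper leaves implicit, and both are handled correctly.
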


\begin{proof}
  Extending Theorem \ref{identiflongit03} across all \(t \in \{2, \ldots, T\}\) and all \(n \in [N]\) gives the result.
\end{proof}

\subsubsection*{E.1.2 \quad Showing Result (2)}

\begin{lemma}\label{identiflongit2a01}
  For all \(U, \widetilde{U} \in \{U\}\), \(U \sim_E \widetilde{U}\) implies that \(\theta_s \sim_E \widetilde{\theta_s}\).
\end{lemma}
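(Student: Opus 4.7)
The plan is to reduce this lemma to Theorem \ref{identiflikfinalident} (strict identifiability of the single time-point multivariate probit model), together with the full column rank condition (C6) on the stacked design matrix $W^t$, with (C5) providing enough rows for that rank to be attainable.

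First I would reinterpret each column of a transition matrix as an instance of a multivariate probit likelihood. Fix $n \in [N]$ and $t \in \{2, \ldots, T\}$; the column of $U_{n,t,t-1}$ indexed by $v \in A_L$ is exactly the multivariate probit probability vector with thresholds $\gamma$, correlation $R$, and mean $\mu_n^t(v) := X_n^t \lambda + d_{\text{otr}}(v)\,\xi$. Hence the collection $U$ is equivalent to a family of mvprobit likelihoods sharing $(\gamma, R)$, indexed by all triples $(n, t, v)$, whose means lie in the affine span generated by $\lambda$ and $\xi$. The hypothesis $U \sim_E \widetilde U$ supplies a single permutation $P$ of $[K]$ under which every $\widetilde U_{n,t,t-1}$ is the corresponding row/column permutation (on $A_L$) of $U_{n,t,t-1}$; matching column by column identifies each $\widetilde\mu_n^t(\widetilde v)$ with the $P$-image of $\mu_n^t(v)$, where $\widetilde v$ is the $P$-image of $v$.

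Second, I would apply Theorem \ref{identiflikfinalident} to the mvprobit family assembled above. This yields $\widetilde\gamma = P\gamma$ and $\widetilde R = P R P^\prime$, and simultaneously identifies every mean vector $\mu_n^t(v)$ up to the same column permutation $P^\prime$. To extract $\zeta = (\lambda^\prime, \xi^\prime)^\prime$, I would stack the mean equalities across $n$ at fixed $t$ to obtain the matrix equation $W^t \widetilde\zeta = W^t \zeta P^\prime$. By (C5)-(C6), $W^t$ is left-invertible, so $\widetilde\zeta = \zeta P^\prime$. Interpreting this block-wise yields $\widetilde\lambda = \lambda P^\prime$ and $\widetilde\xi = P_d\, \xi\, P^\prime$, where $P_d$ is the permutation of design-vector coordinates naturally induced by $P$ through the Kronecker-product structure of $d_{\text{otr}}$. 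These are the label-swap actions on $\lambda$ and $\xi$, so combined with the conclusion on $(\gamma, R)$ we obtain $\theta_s \sim_E \widetilde\theta_s$.

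The main obstacle is the notational bookkeeping of how a single permutation on $[K]$ simultaneously induces (i) the permutation on $A_L$ governing rows and columns of $U$, (ii) the row permutation of $\gamma$ and symmetric permutation of $R$, (iii) the column permutation of $\lambda$ and $\xi$, and (iv) the row permutation $P_d$ of $\xi$ coming from the Kronecker structure of the cumulative design. Ensuring that $P_d$ appears consistently in the stacked linear system, so that inversion via (C5)-(C6) delivers the required form $\widetilde\xi = P_d \xi P^\prime$, is where the notational care will concentrate. A secondary subtlety is checking that (C6), stated for realized $W^t$, is precisely the condition required here, since the argument effectively accesses mvprobit means for every possible $v$ through the full columns of each transition matrix.
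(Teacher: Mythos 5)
Your proposal is correct and follows essentially the same route as the paper: the paper's proof also reduces the claim to Theorem~\ref{identiflikfinalident} (strict identifiability of the multivariate probit under the rank condition on \(W\)), arguing that the permutation of rows/columns of the transition matrices commutes with the parameterization map \(\theta_s \mapsto U_{n,t,t-1}\), whose injectivity then forces \(\theta_s \sim_E \widetilde{\theta_s}\). The only difference is one of packaging — you unpack the injectivity into explicit column-by-column identification of \((\gamma, R)\) and the means followed by left-inversion of \(W^t\) via (C5)--(C6), whereas the paper invokes the theorem's injectivity wholesale — and the bookkeeping subtleties you flag (the induced permutations on \(A_L\), on the design coordinates of \(\xi\), and the fact that the transition matrices expose means for all \(v\), not just realized ones) are real but glossed over in the paper's proof as well.
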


\begin{proof}
  Let \(\sigma\) be the permutation of the dimensions of \(\theta_s\) as well as the permutation of the columns and rows of any transition matrix \(U_{n,t, t-1}\), which is induced by permuting the dimensions of \(\alpha^t\) (strictly speaking separate symbols should be used, but that would hinder readability).

  The likelihood \(p(\alpha^t \mid \alpha^{t-1}, \theta_s)\) (where we have again used shorthand; this expression refers to the likelihood values for all possible values of \(\alpha^t\)) forms a column of the matrix \(U_{n, t, t-1}\); allowing \(\alpha^{t-1}\) to range across all possible values yields the matrix \(U_{n, t, t-1}\). In this proof, denote the \(U_{n, t, t-1}\) that results from a particular value \(\theta \in \Theta_s\) as \(f_{\theta}\).

  \(U \sim_E \widetilde{U}\) implies that \(f_{\theta_2} = \sigma(f_{\theta_1})\), where \(\sigma(f_{\theta_1})\) is the matrix resulting from permuting the relevant columns and rows of \(f_{\theta_1}\). We observe that \(\sigma(f_{\theta_1}) = f_{\sigma(\theta_1)}\), i.e. permuting the columns and rows of \(f_{\theta_1}\) according to the dimension reordering gives the same matrix as first permuting the dimensions of \(\alpha^{t-1}\) and \(\theta\) and then writing down the elements of the likelihood vectors forming the transition matrix in the proper row order and placing the vectors themselves in the proper column order.

  Let \(g\) be the parameterization map \(\theta \mapsto f_{\theta}\). By the above, we have \(g(\theta_2) = f_{\theta_2} = \sigma(f_{\theta_1}) = f_{\sigma(\theta_1)} = g(\sigma(\theta_1))\). By Theorem \ref{identiflikfinalident}, \(g\) is injective, so we have \(\theta_2 = \sigma(\theta_1)\), i.e. \(\theta_1 \sim_E \theta_2\). We have thus shown that \(f_{\theta_1} \sim_E f_{\theta_2}\) implies \(\theta_1 \sim_E \theta_2\) (i.e. \(U \sim_E \widetilde{U}\) implies that \(\theta_s \sim_E \widetilde{\theta_s}\)).
\end{proof}

\subsection*{E.2 \quad Proof of Strict Identifiability of Multivariate Probit Model}

Given a data matrix \(W \in \mathbb{R}^{N \times (D + H_{\text{otr}})}\), where \(N \geq D + H_{\text{otr}}\), write \(\mathcal{M} = \mathcal{M}_1 \times \cdots \times \mathcal{M}_N\), where \(\mathcal{M}_n = \{M_n \ssep M_n = W_n \zeta, \zeta \in \mathbb{R}^{(D+H_{\text{otr}}) \times K}\}\) is the \(n\)th row of \(M\) and \(W_n\) is the \(n\)th row of \(W\), so \(\mathcal{M} = \{M; M = W \zeta,\, \zeta \in \mathbb{R}^{(D + H_{\text{otr}}) \times K}\} \subseteq \mathbb{R}^{N \times K}\).

We consider the family of densities for a sample \((\alpha_1, \ldots, \alpha_n) \in \times_{n=1}^N A_L\), namely \(\{p_{\alpha}(\bullet \mid \omega) \ssep \omega \in \Omega\}\), where \(\Omega = \mathcal{M} \times \mathcal{G} \times \mathcal{R}\), \(\mathcal{G} = \mathcal{G}_1 \times \cdots \times \mathcal{G}_K\) and where for all \(k \in [K]\), \begin{align}
  \mathcal{G}_k & = \{(\gamma_{k0}, \gamma_{k1}, \ldots, \gamma_{kL}) \ssep  \gamma_{k0} = -\infty, \gamma_{k1} = 0, \notag\\
  & \quad\quad \gamma_{k1} < \gamma_{k2} < \gamma_{k3} < \ldots, < \gamma_{k,L-1} < \infty, \gamma_{k,L} = \infty\},
\end{align}

\noindent where \(\mathcal{R} = \{R \in \mathbb{R}^{K \times K} \ssep R \text{ is positive definite, } \text{diag}(R) = (1, \ldots, 1)\}\), and where  \begin{align}
  p(\alpha \mid \omega) & = \prod_{n=1}^N p(\alpha_n \mid M_n, \gamma, R) \notag\\
  & = \prod_{n=1}^N \int_{\gamma_{K\alpha_{nK}}}^{\gamma_{K, \alpha_{nK} + 1}} \ldots \int_{\gamma_{1\alpha_{n1}}}^{\gamma_{1, \alpha_{n1} + 1}} \phi_K(\alpha_n^{\ast};\, M_n, R) d\alpha_n^{\ast}\label{mvpdensity}
\end{align}

\noindent where \(M_n \in \mathcal{M}_n\) for each \(n\). We further define \(\Theta_s = \mathcal{Z} \times \mathcal{G} \times \mathcal{R}\), where \(\mathcal{Z} = \mathbb{R}^{(D + H_{\text{otr}}) \times K}\), the parameter space for \(\zeta\). Note that the actual parameter space for this multivariate probit model is \(\Theta_s\); we use \(\Omega\) as a starting point.

The outline of the proof is as follows. We will first show that the family of densities \(\{p(\alpha \mid \omega)\}\) is strictly identifiable. We then show that if \(\text{rank}(W) = D + H_{\text{otr}}\), then \(\Theta_s\) and \(\Omega\) are isomorphic. We conclude that if \(\text{rank}(W) = D + H_{\text{otr}}\), then \(\{p(\alpha \mid \theta_s)\}\) is strictly identifiable.

\subsubsection*{E.2.1 \quad Showing the Strict Identifiability of the Family of Densities of a Multivariate Probit Model for One or More Respondents with Mean Matrix Parameter}

\begin{theorem}
\label{identifmvp1}
\(\{p(\alpha \mid \omega)\}\) as defined in \eqref{mvpdensity} is strictly identifiable.
\end{theorem}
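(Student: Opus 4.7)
The plan is to show that knowledge of the density $p(\alpha \mid \omega)$ as $\omega$ ranges over $\Omega$ uniquely determines $\omega = (M, \gamma, R)$, relying on the identifying restrictions $\gamma_{k0} = -\infty$, $\gamma_{k1} = 0$, $\gamma_{kL} = \infty$, and $\text{diag}(R) = (1, \ldots, 1)$. First I would exploit the product structure $p(\alpha \mid \omega) = \prod_{n=1}^N p(\alpha_n \mid M_n, \gamma, R)$: fixing all coordinates $\alpha_m$ for $m \neq n$ and summing the joint equality $p(\alpha \mid \omega) = p(\alpha \mid \widetilde\omega)$ over those free coordinates reduces to $p(\alpha_n \mid M_n, \gamma, R) = p(\alpha_n \mid \widetilde M_n, \widetilde\gamma, \widetilde R)$ for every $n$ and every $\alpha_n \in A_L$. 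The remaining task is then a single-respondent identification problem, to be repeated over $n$.

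Next I would recover $M$ from univariate marginals. Because $R$ is a correlation matrix, the $k$-th marginal of $\alpha_n^{\ast}$ is $N(M_{nk}, 1)$, so $P(\alpha_{nk} = 0 \mid \omega) = \Phi(\gamma_{k1} - M_{nk}) = \Phi(-M_{nk})$ by $\gamma_{k1} = 0$ (and $\gamma_{k0} = -\infty$). Injectivity of $\Phi$ gives $M_{nk} = \widetilde M_{nk}$ for each $n,k$, and hence $M = \widetilde M$. With $M$ fixed, the interior thresholds follow sequentially: for any fixed $n$ and any $l \in \{2, \ldots, L-1\}$, the CDF-type marginal $P(\alpha_{nk} \leq l-1 \mid \omega) = \Phi(\gamma_{kl} - M_{nk})$ pins down $\gamma_{kl}$ by injectivity of $\Phi$, yielding $\gamma = \widetilde\gamma$.

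For the off-diagonal entries of $R$ I would pass to bivariate marginals. For any $k \neq k'$ and any fixed $n$, $(\alpha_{nk}^{\ast}, \alpha_{nk'}^{\ast})$ is bivariate normal with means $(M_{nk}, M_{nk'})$, unit variances, and correlation $R_{kk'}$, so $P(\alpha_{nk} = 0, \alpha_{nk'} = 0 \mid \omega) = \Phi_2(-M_{nk}, -M_{nk'}; R_{kk'})$, where $\Phi_2$ is the standard bivariate normal CDF. The left-hand side is fixed by the density and the mean arguments have already been identified, so $R_{kk'}$ is determined by the value of $\Phi_2$ at known arguments.

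The main obstacle is the rigor of the final step, which requires strict monotonicity of $\Phi_2(x, y; \rho)$ in $\rho$ on $(-1, 1)$ for fixed finite $x, y$; this follows from the classical identity $\partial_\rho \Phi_2 = \phi_2 > 0$ (Plackett/Slepian), and must be cited to force $R_{kk'} = \widetilde R_{kk'}$. The unit-diagonal convention handles the remaining entries of $R$, and combining the three stages yields $\omega = \widetilde\omega$, establishing strict identifiability of $\{p(\alpha \mid \omega)\}$.
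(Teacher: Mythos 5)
Your proposal is correct and follows essentially the same route as the paper's proof: marginalize the product over the other respondents to reduce to a single-respondent problem, recover each $M_{nk}$ from the univariate marginal $\Phi(-M_{nk})$ using $\gamma_{k1}=0$, recover the interior thresholds sequentially from univariate cumulative probabilities, and pin down each $R_{ij}$ from a bivariate orthant probability via strict monotonicity of $\Phi_2$ in the correlation (the paper cites Drezner--Wesolowsky for the same derivative identity you attribute to Plackett/Slepian). The only cosmetic difference is that you phrase the threshold step with cumulative probabilities $P(\alpha_{nk}\le l-1)$ rather than the paper's interval probabilities, which is an equivalent reorganization.
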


\begin{proof}
Let \(\omega\) and \(\widetilde{\omega}\) be arbitrary values of \(\Omega\). Assume that for all \(\alpha \in \times_{n=1}^N A_L\), \(p_{\alpha}(\alpha \mid \omega) = p_{\alpha}(\alpha \mid \widetilde{\omega})\). This means we have a set \(S\) of \(L^{NK}\) equations, with one equation for each possible \(\alpha = (\alpha_1, \ldots, \alpha_n)\). More specifically, we have \begin{equation}
  S = \{u_{\alpha}(\omega, \widetilde{\omega}) = 0 \ssep \alpha \in \times_{n=1}^N A_L\},
\end{equation}

\noindent where \(u_{\alpha}(\omega, \widetilde{\omega})\) stands for \(v_{\alpha}(\omega) - v_{\alpha}(\widetilde{\omega})\), and \(v_{\alpha}(\omega)\) stands for \begin{equation}
  \prod_{n=1}^N \int_{\gamma_{K\alpha_{nK}}}^{\gamma_{K, \alpha_{nK} + 1}} \ldots \int_{\gamma_{1\alpha_{n1}}}^{\gamma_{1, \alpha_{n1} + 1}} \phi_K(\alpha_n^{\ast}; M_n, R) d\alpha_n^{\ast}.
\end{equation}

Showing \(\omega = \widetilde{\omega}\) means showing \((M, \gamma, R) = (\widetilde{M}, \widetilde{\gamma}, \widetilde{R})\). We first show that \(M = \widetilde{M}\), by repeating the following procedure for each \(n \in [N]\). Select an arbitrary \(n\). Each possible value for \(\alpha_n\) appears in \(L^{(N-1)K}\) equations. For each possible value of \(\alpha_n\), sum those equations to yield an equation \(u_{\alpha_n}(\omega, \widetilde{\omega}) = 0\), where \(u_{\alpha_n}(\omega, \widetilde{\omega})\) stands for \(v_{\alpha_n}(\omega) - v_{\alpha_n}(\widetilde{\omega})\), and \(v_{\alpha_n}(\omega)\) stands for \begin{equation}
  \int_{\gamma_{K\alpha_{nK}}}^{\gamma_{K, \alpha_{nK} + 1}} \ldots \int_{\gamma_{1\alpha_{n1}}}^{\gamma_{1, \alpha_{n1} + 1}} \phi_K(\alpha_n^{\ast}; M_n, R) d\alpha_n^{\ast}.
\end{equation}

\noindent This holds because for each \(i \in [N] \setminus n\), summing over all values of \(\alpha_i\) produces an integral over the entire support of the multivariate probit specification for that \(\alpha_i\), which evaluates to 1. Since there are \(L^K\) possible values for \(\alpha_n\), this step of the procedure has yielded \(L^K\) equations. Denote the set of these equations by \(S_n\), namely \begin{equation}
  S_n = \{u_{\alpha_n}(\omega, \widetilde{\omega}) = 0 \ssep \alpha_n \in A_L\}.
\end{equation}

Now, write \(M = (M_{n1}, \ldots, M_{nK})\). Select an arbitrary dimension \(k\). Each value of \(\alpha_{nk} \in \{0, 1, \ldots, L - 1\}\) appears in \(L^{K - 1}\) equations of \(S_n\) (since each of \(\alpha_{n1}, \ldots, \alpha_{n,k-1}, \alpha_{n,k+1}, \ldots, \alpha_{nK}\) has \(L\) possible values). For a value \(l\) of \(\alpha_{nk}\) denote \begin{equation}
  S_{nk}^l = \{u_{\alpha_n}(\gamma, M_n, R, \widetilde{\gamma}, \widetilde{M_n}, \widetilde{R}) = 0 \ssep \alpha_{n(k)}\in \{0, 1, \ldots, L - 1\}^{K - 1}, \alpha_{nk} = l\} \subseteq S_n.
\end{equation}

\noindent Summing together the equations in \(S_{nk}^l\) gives one equation, \begin{equation}
  \int_{\gamma_{k\alpha_{nl}}}^{\gamma_{k, \alpha_{nl} + 1}} \phi(\alpha_{nk}^{\ast}; M_{n,l+1}, 1) d\alpha_{nk}^{\ast} - \int_{\widetilde{\gamma}_{k\alpha_{nl}}}^{\widetilde{\gamma}_{k, \alpha_{nl} + 1}} \phi(\alpha_{nk}^{\ast}; \widetilde{M}_{n,l+1}, 1) d\alpha_{nk}^{\ast} = 0
\end{equation}

\noindent since the summation produces integrals whose bounds are over the entire real line in each dimension other than \(k\), corresponding to the support of the density on those dimensions.

Performing the summation for \(S_{nk}^0\) gives \begin{equation}
  \int_{\gamma_{k,0}}^{\gamma_{k,1}} \phi(\alpha_{nk}^{\ast}; M_{n1}, 1) d\alpha_n^{\ast} - \int_{\widetilde{\gamma}_{k,0}}^{\widetilde{\gamma}_{k,1}} \phi(\alpha_{nk}^{\ast}; \widetilde{M}_{n1}, 1) d\alpha_n^{\ast} = 0
\end{equation}

\noindent which is \begin{equation}
  \int_{-\infty}^{0} \phi(\alpha_{nk}^{\ast}; M_{n1}, 1) d\alpha_n^{\ast} - \int_{-\infty}^{0} \phi(\alpha_{nk}^{\ast}; \widetilde{M}_{n1}, 1) d\alpha_n^{\ast} = 0.
\end{equation}

\noindent This is \(\Phi(0 - M_{n1}) - \Phi(0 - \widetilde{M}_{n1}) = 0\), which gives \(M_{n1} = \widetilde{M_{n1}}\). Repeating this process across all \(k \in [K]\) yields \(M_n = \widetilde{M_n}\).

Repeating this process across all \(n \in [N]\) yields \(M = \widetilde{M}\).

We next show that for all \(k \in [K]\), \(\gamma_k = \widetilde{\gamma}_k\). Note that if \(L = 2\), all elements of all \(\gamma_k\) are fixed and there is nothing to be shown. If \(L > 2\) there is at least one element of each \(\gamma_k\) which can vary.

Choose an arbitrary \(n \in [N]\). Choose a \(k \in [K]\). For \(l = 1\), add the equations in \(S_{nk}^l\), which results in the equation \begin{equation}
  \int_{0}^{\gamma_{k,l+1}} \phi(\alpha_{nk}^{\ast}; M_{nk}, 1) d\alpha_n^{\ast} - \int_{0}^{\widetilde{\gamma}_{k,l+1}} \phi(\alpha_{nk}^{\ast}; M_{nk}, 1) d\alpha_n^{\ast} = 0
\end{equation}

\noindent which is equivalent to \(\Phi(\gamma_{k,l+1} - M_{nk}) - \Phi(0 - M_{nk}) = \Phi(\widetilde{\gamma}_{k,l+1} - M_{nk}) - \Phi(0 - M_{nk})\), so \(\gamma_{k,l+1} = \widetilde{\gamma}_{k,l+1}\). Then sequentially, for each \(l > 1\), adding the equations in \(S_{nk}^l\) yields \begin{equation}
  \int_{\gamma_{kl}}^{\gamma_{k,l+1}} \phi(\alpha_{nk}^{\ast}; M_{nk}, 1) d\alpha_n^{\ast} - \int_{\gamma_{kl}}^{\widetilde{\gamma}_{k,l+1}} \phi(\alpha_{nk}^{\ast}; M_{nk}, 1) d\alpha_n^{\ast} = 0
\end{equation}

\noindent which is \(\Phi(\gamma_{k,l+1} - M_{nk}) - \Phi(\gamma_{kl} - M_{nk}) = \Phi(\widetilde{\gamma}_{k,l+1} - M_{nk}) - \Phi(\gamma_{kl} - M_{nk})\), which yields \(\gamma_{k,l+1} = \widetilde{\gamma}_{k,l+1}\). In this manner we obtain \(\gamma_k = \widetilde{\gamma}_k\).

We now show for each \(i, j \in [K] \times [K]\) that \(R_{ij} = \widetilde{R}_{ij}\). Choose an arbitrary \(n \in [N]\). Denote \begin{equation}
  S_{n,i,j}^0 = \{u_{\alpha_n}(\gamma, M_n, R, \widetilde{\gamma}, \widetilde{M_n}, \widetilde{R}) = 0;\, \alpha_{n(i,j)} \in \{0, 1, \ldots, L - 1\}^{K - 2}, \alpha_{ni} = \alpha_{nj} = 0\} \subseteq S_n.
\end{equation}

\noindent Adding the equations in \(S_{n,i,j}^0\) yields \begin{align}
  & \int_{-\infty}^{0} \int_{-\infty}^{0} \phi_2\left((\alpha_{ni}^{\ast}, \alpha_{nj}^{\ast}); (M_{ni}, M_{nj}), \begin{pmatrix}1 & R_{ij} \\ R_{ij} & 1\end{pmatrix}\right) d\alpha_{ni}^{\ast} d\alpha_{nj}^{\ast} \notag\\
  & \quad\quad - \int_{-\infty}^{0} \int_{-\infty}^{0} \phi_2\left((\alpha_{ni}^{\ast}, \alpha_{nj}^{\ast}); (M_{ni}, M_{nj}), \begin{pmatrix}1 & \widetilde{R}_{ij} \\ \widetilde{R}_{ij} & 1\end{pmatrix}\right) d\alpha_{ni}^{\ast} d\alpha_{nj}^{\ast} = 0 \label{identifmvpcorrcoef}.
\end{align}

\noindent This is equivalent to writing \(g(R_{ij}) - g(\widetilde{R}_{ij}) = 0\) for \(g: (-1, 1) \to \mathbb{R}\) where \(g(R_{ij})\) stands for \begin{equation}
  \int_{-\infty}^{0} \int_{-\infty}^{0} \phi_2\left((\alpha_{ni}^{\ast}, \alpha_{nj}^{\ast}); (M_{ni}, M_{nj}), \begin{pmatrix}1 & R_{ij} \\ R_{ij} & 1\end{pmatrix}\right) d\alpha_{ni}^{\ast} d\alpha_{nj}^{\ast}
\end{equation}

\noindent Observe that \begin{align}
  g(R_{ij}) & = \int_{-\infty}^{0} \int_{-\infty}^{0} \phi_2\left((\alpha_{ni}^{\ast} - M_{ni}, \alpha_{nj}^{\ast} - M_{nj}); (0, 0), \begin{pmatrix}1 & R_{ij} \\ R_{ij} & 1\end{pmatrix}\right) d\alpha_{ni}^{\ast} d\alpha_{nj}^{\ast} \notag\\
  & = \int_{-\infty}^{-M_{i2}} \int_{-\infty}^{-M_{j2}} \phi_2\left((Z_1, Z_2); (0, 0), \begin{pmatrix}1 & R_{ij} \\ R_{ij} & 1\end{pmatrix}\right) dZ_1 dZ_2
\end{align}

\noindent where we have let \(Z_1 = \alpha_{ni}^\ast - M_{ni}\), \(Z_2 = \alpha_{nj}^\ast - M_{nj}\). According to a result from \citet[pp. 107]{drezner1990computation}, \begin{equation}
  \frac{\partial}{\partial R_{ij}} g(R_{ij}) = \frac{1}{2\pi \sqrt{1 - R_{ij}^2}} \exp\left(\frac{-(M_{ni}^2 - 2 R_{ij} M_{ni} M_{nj} + M_{nj}^2)}{2(1 - R_{ij}^2)}\right).
\end{equation}

\noindent Therefore \(g\) is strictly increasing in \(R_{ij}\) on \((-1, 1)\), and thus has an inverse function on \(g((-1, 1))\). Observe that from \eqref{identifmvpcorrcoef} we have \(g(R_{ij}) = g(\widetilde{R}_{ij})\), so applying the inverse function on both sides we have \(R_{ij} = \widetilde{R}_{ij}\).

Since the above can be performed for any combination of \((i, j)\), we have that \(R = \widetilde{R}\).

We thus have that \(\{p(\alpha \mid \omega)\}\) is strictly identifiable.
\end{proof}

\subsubsection*{E.2.2 \quad Showing the Strict Identifiability of the Multivariate Probit for the Model's Actual Parameter Space}

We now establish that there is an isomorphism between \(\Theta_s\) and \(\Omega\). This is done by showing that there is an injective and surjective mapping from \(\mathcal{Z}\) onto \(\mathcal{M}\).

\begin{lemma}\label{lemmaaboutg}
If \(\text{rank}(W) = D + H_{\text{otr}}\), then the function \(g: \mathcal{Z} \to \mathcal{M}\) defined by \(g(\zeta) = W\zeta\) is injective and surjective.
\end{lemma}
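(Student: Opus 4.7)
The plan is to dispatch surjectivity immediately from the definition and then reduce injectivity to the standard linear-algebra fact that a matrix with full column rank has trivial null space. Since $\mathcal{M}$ was defined in the excerpt as $\{M : M = W\zeta,\, \zeta \in \mathbb{R}^{(D+H_{\text{otr}}) \times K}\}$, it is precisely the image of $g$, so surjectivity of $g: \mathcal{Z} \to \mathcal{M}$ holds by construction with no further argument required.

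For injectivity, I would take arbitrary $\zeta_1, \zeta_2 \in \mathcal{Z}$ and suppose $g(\zeta_1) = g(\zeta_2)$, i.e.\ $W\zeta_1 = W\zeta_2$, which gives $W(\zeta_1 - \zeta_2) = 0$. Writing $\zeta_1 - \zeta_2$ column-by-column as $[c_1, c_2, \ldots, c_K]$ with each $c_k \in \mathbb{R}^{D + H_{\text{otr}}}$, the matrix equation becomes the collection of vector equations $W c_k = 0$ for all $k \in [K]$. The hypothesis $\text{rank}(W) = D + H_{\text{otr}}$ (full column rank, which is possible because $N \geq D + H_{\text{otr}}$ is assumed) means the linear map $c \mapsto Wc$ from $\mathbb{R}^{D+H_{\text{otr}}}$ to $\mathbb{R}^N$ has trivial kernel, so each $c_k = 0$. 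Therefore $\zeta_1 - \zeta_2 = 0$, establishing injectivity.

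There is no real obstacle here: the argument is a one-line consequence of the rank hypothesis applied column-wise. The only point worth emphasizing in the write-up is that the full column rank condition is well-defined (requires $N \geq D + H_{\text{otr}}$, which is condition (C5) from Theorem \ref{identiffinalresult}), and that the column-wise reduction is legitimate because matrix multiplication $W\zeta$ acts on $\zeta$ independently column by column. With this lemma in hand, the parameterization map from $\Theta_s = \mathcal{Z} \times \mathcal{G} \times \mathcal{R}$ to $\Omega = \mathcal{M} \times \mathcal{G} \times \mathcal{R}$ given by $(\zeta, \gamma, R) \mapsto (W\zeta, \gamma, R)$ is a bijection, so strict identifiability of $\{p(\alpha \mid \omega)\}$ from Theorem \ref{identifmvp1} transfers directly to strict identifiability of $\{p(\alpha \mid \theta_s)\}$, completing the multivariate probit identifiability step of the main theorem.
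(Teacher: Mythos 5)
Your proof is correct and follows essentially the same route as the paper's: both reduce injectivity to the column-wise action of $W$ and the fact that full column rank (via rank--nullity) forces a trivial kernel, and both obtain surjectivity directly from the definition of $\mathcal{M}$ as the image of $g$. The only cosmetic difference is that you argue via $W(\zeta_1 - \zeta_2) = 0$ while the paper phrases it as injectivity of the per-column map $h(v) = Wv$; these are the same argument.
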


\begin{proof}
  In the proof of this Lemma, for a matrix \(M\) we write \(M_k\) to denote the \(k\)th column of \(M\). Recall that \(W \in \mathbb{R}^{N \times (D + H_{\text{otr}})}\) and that \(N \geq D + H_{\text{otr}}\). Assume \(\text{rank}(W) = D + H_{\text{otr}}\) (i.e. \(W\) is full rank).

  First we show that that \(g\) is injective. Note that \(g(\zeta) = W\zeta = (W\zeta_1, \ldots, W\zeta_K)\). Define the function \(h: \mathbb{R}^{D + H_{\text{otr}}} \to \text{Im } W\) by \(v \mapsto Wv\), so \(g(\zeta) = (h(\zeta_1), \ldots, h(\zeta_K))\). We observe that \(g\) is injective if and only if for all \(k \in [K]\), \(h(\zeta_k) = h(\widetilde{\zeta}_k) \implies \zeta_k = \widetilde{\zeta}_k\). This would certainly hold if \(h\) itself were injective. Since by the rank-nullity theorem \citep[pp. 61]{lang1987linear} we have \(D + H_{\text{otr}} = \text{dim } \text{Ker } h + \text{dim } \text{Im } h = \text{dim } \text{Ker } h + \text{rank}(W)\), \(h\) is injective if and only if \(\text{rank}(W) = D + H_{\text{otr}}\), which it is by assumption. Therefore \(g\) is injective.

By the definition of \(g\), we have that \(g\) is surjective.
\end{proof}

\begin{lemma}
\label{identifisomorph}
If \(\text{rank}(W) = D + H_{\text{otr}}\), then \(\Theta_s\) and \(\Omega\) are isomorphic
\end{lemma}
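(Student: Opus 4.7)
The plan is to exhibit an explicit bijection $F\colon \Theta_s \to \Omega$ that reduces the desired isomorphism to the bijectivity of the first-coordinate map $g$ established in Lemma \ref{lemmaaboutg}. Since the two product spaces share their second and third factors ($\mathcal{G}$ and $\mathcal{R}$), the only nontrivial ingredient is to relate $\mathcal{Z}$ to $\mathcal{M}$ bijectively, which is exactly what Lemma \ref{lemmaaboutg} provides under the assumption $\text{rank}(W) = D + H_{\text{otr}}$.

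Concretely, I would define $F\colon \Theta_s \to \Omega$ by $F(\zeta, \gamma, R) = (g(\zeta), \gamma, R) = (W\zeta, \gamma, R)$. Well-definedness is immediate from the definition of $\mathcal{M}$. For injectivity, if $F(\zeta, \gamma, R) = F(\widetilde{\zeta}, \widetilde{\gamma}, \widetilde{R})$, then matching coordinates forces $\gamma = \widetilde{\gamma}$, $R = \widetilde{R}$, and $g(\zeta) = g(\widetilde{\zeta})$; the last equality gives $\zeta = \widetilde{\zeta}$ by the injectivity portion of Lemma \ref{lemmaaboutg}. For surjectivity, given any $(M, \gamma, R) \in \Omega$, the definition of $\mathcal{M}$ supplies some $\zeta \in \mathcal{Z}$ with $M = W\zeta$, so $F(\zeta, \gamma, R) = (M, \gamma, R)$.

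There is essentially no hard step here: all of the analytic content sits in Lemma \ref{lemmaaboutg}, whose proof invokes rank--nullity, and the remainder is the routine observation that a product of a bijection with identities is itself a bijection. The one conceptual point worth flagging is the meaning of \emph{isomorphic}. Since the parameter spaces are used here as indexing sets for the density families, a set-theoretic bijection is what is required; as a bonus, $F$ is linear in its first argument and the identity in the other two, so it is a smooth (indeed real-analytic) bijection if one wishes to record additional structure. The intended payoff is then immediate: combined with Theorem \ref{identifmvp1}, which yields strict identifiability of $\{p(\alpha \mid \omega)\}$ over $\Omega$, pulling back along $F$ transfers strict identifiability to $\{p(\alpha \mid \theta_s)\}$ over $\Theta_s$, completing the argument sketched in the outline preceding the statement.
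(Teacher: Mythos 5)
Your proposal is correct and matches the paper's proof exactly: the paper defines the same map \((a,b,c)\mapsto(g(a),b,c)\) and appeals to Lemma \ref{lemmaaboutg} for bijectivity of the first coordinate, merely stating the conclusion with less detail than you supply. No further comment is needed.
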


\begin{proof}
Let \(f: \Theta_s \to \Omega\) be defined by the mapping \((a, b, c) \mapsto (g(a), b, c)\), where \(g\) is as defined in Lemma \ref{lemmaaboutg}. Clearly \(f\) is an isomorphism from \(\Theta_s\) to \(\Omega\).
\end{proof}

\begin{theorem}
\label{identiflikfinalident}
If \(\text{rank}(W) = D + H_{\text{otr}}\), then \(\{p(\alpha \mid \theta_s)\}\) is strictly identifiable.
\end{theorem}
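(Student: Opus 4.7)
The plan is to derive Theorem \ref{identiflikfinalident} as a direct corollary of the two results immediately preceding it in Section E.2, namely the strict identifiability of the $\Omega$-parameterized family (Theorem \ref{identifmvp1}) and the isomorphism $f: \Theta_s \to \Omega$ provided by Lemma \ref{identifisomorph}. The core observation is that the family $\{p(\alpha \mid \theta_s)\}$ and the family $\{p(\alpha \mid \omega)\}$ are the \emph{same} set of densities once we identify $M_n = W_n \zeta$; only the parameterization differs. So strict identifiability of the former will follow from strict identifiability of the latter together with injectivity of the parameterization map.

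Concretely, I would proceed in three short steps. First, I would make explicit that for any $\theta_s = (\zeta, \gamma, R) \in \Theta_s$, the density $p(\alpha \mid \theta_s)$ coincides with $p(\alpha \mid f(\theta_s))$ where $f(\theta_s) = (W\zeta, \gamma, R) \in \Omega$: this is immediate from the definition of $\mathcal{M}$ and of $f$ and requires no calculation beyond matching $M = W\zeta$ in the integrand of equation \eqref{mvpdensity}. Second, I would take arbitrary $\theta_s, \widetilde{\theta}_s \in \Theta_s$ with $p(\alpha \mid \theta_s) = p(\alpha \mid \widetilde{\theta}_s)$ for every $\alpha \in \times_{n=1}^N A_L$, and by the first step conclude that $p(\alpha \mid f(\theta_s)) = p(\alpha \mid f(\widetilde{\theta}_s))$ for every $\alpha$; then Theorem \ref{identifmvp1} yields $f(\theta_s) = f(\widetilde{\theta}_s)$. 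Third, invoking the hypothesis $\text{rank}(W) = D + H_{\text{otr}}$, Lemma \ref{identifisomorph} gives that $f$ is an isomorphism, in particular injective, so $\theta_s = \widetilde{\theta}_s$, which is precisely strict identifiability under Definition \ref{identifdefstrict} with the equality relation.

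There is essentially no obstacle here beyond bookkeeping: the substantive work (showing the mean matrix, thresholds, and correlation entries are each pinned down by the finite collection of cell probabilities) was done inside Theorem \ref{identifmvp1}, and the substantive work for passing from $M$ back to $\zeta$ was done in Lemma \ref{lemmaaboutg} via the rank–nullity argument that uses the full-column-rank assumption on $W$. The only place where one must be slightly careful is to observe that the identification $M = W\zeta$ is exactly what links the two parameterizations, so that $f$ is not merely a bijection of parameter sets but actually carries the likelihood map of $\Theta_s$ to the likelihood map of $\Omega$; this is what makes the reduction to Theorem \ref{identifmvp1} legitimate rather than a formal restatement.
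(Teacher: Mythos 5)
Your proposal is correct and follows exactly the paper's route: the paper's own proof is a one-line appeal to Theorem \ref{identifmvp1} and Lemma \ref{identifisomorph}, and your three steps simply make explicit the composition (density match under $M = W\zeta$, strict identifiability over $\Omega$, injectivity of $f$ from the full-rank assumption) that the paper leaves implicit. No gaps.
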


\begin{proof}
  From Theorem \ref{identifmvp1} and Lemma \ref{identifisomorph}, we conclude that \(\{p(\alpha \mid \theta_s)\}\) is strictly identifiable.
\end{proof}

\section*{Supplementary Material F}

Supplementary Material F describes the model transformation.

\begin{align}
  p(Z) & = \underbrace{p(Y \mid Y^\ast, \kappa) \cdot p(Y^\ast \mid \beta, \alpha) \cdot p(\kappa) \cdot p(\beta \mid \delta) \cdot p(\delta \mid \omega) \cdot p(\omega)}_{(\text{part1})} \cdot \underbrace{p(\alpha \mid \alpha^\ast, \gamma)}_{(\text{part2})} \notag\\
  & \quad\quad \cdot \underbrace{p(\gamma \mid V)}_{(\text{part3})} \cdot \underbrace{p(R, V)}_{(\text{part4})} \cdot \underbrace{p(\alpha^\ast \mid R, \lambda, \xi)}_{(\text{part5})} \cdot \underbrace{p(\lambda, \xi \mid R)}_{(\text{part6})}.
\end{align}

We consider \(g^{-1}\), the inverse of the transformation. The value taken by \(g^{-1}\) at \(\widetilde{z}\) is \(z\). We write \(z = g^{-1}(\widetilde{z}) = (h_1(\widetilde{z}), \ldots, h_8(\widetilde{z}))\), where, writing \(\widetilde{V} = \text{diag}(\sigma_{11}, \ldots, \sigma_{KK})\) and denoting \(z = (z_1, \ldots, z_8)\), \begin{equation}
  \begin{array}{lll}
    z_1 & = & Y \\
    z_2 & = & \alpha \\
    z_3 & = & b_1 = (Y^\ast, \beta, \delta, \omega) \\
    z_4 & = & \alpha^\ast \\
    z_5 & = & \gamma \\
    z_6 & = & \zeta \\
    z_7 & = & (r_{12}, r_{13}, \ldots, r_{1K}, r_{23}, \ldots, r_{2K}, \ldots, r_{K-1,K}) \\
    z_8 & = & (v_1, \ldots, v_K).
  \end{array}
\end{equation}

\noindent Denoting \(\widetilde{z} = (\widetilde{z}_1, \ldots, \widetilde{z}_8)\), with \begin{equation}
  \begin{array}{lll}
    \widetilde{z}_1 & = & Y \\
    \widetilde{z}_2 & = & \alpha \\
    \widetilde{z}_3 & = & b_1 \\
    \widetilde{z}_4 & = & \widetilde{\alpha^\ast} \\
    \widetilde{z}_5 & = & \widetilde{\gamma} \\
    \widetilde{z}_6 & = & \widetilde{\zeta} \\
    \widetilde{z}_7 & = & (\sigma_{12}, \ldots, \sigma_{1K}, \sigma_{23}, \ldots, \sigma_{2K}, \ldots, \sigma_{K-1,K}) \\
    \widetilde{z}_8 & = & (\sigma_{11}, \ldots, \sigma_{KK}). \\  
  \end{array}
\end{equation}

we define \begin{equation}
  \begin{array}{lll}
    h_1(\widetilde{z}) & = & Y \\
    h_2(\widetilde{z}) & = & \alpha \\
    h_3(\widetilde{z}) & = & b_1 \\
    h_4(\widetilde{z}) & = & \widetilde{\alpha^\ast} \widetilde{V}^{-1/2} \\
    h_5(\widetilde{z}) & = & \widetilde{\gamma} \widetilde{V}^{-1/2} \\
    h_6(\widetilde{z}) & = & \widetilde{\zeta} \widetilde{V}^{-1/2} \\
    h_7(\widetilde{z}) & = & \left(\frac{\sigma_{12}}{\sigma_{11}^{1/2} \sigma_{22}^{1/2}}, \ldots, \frac{\sigma_{1K}}{\sigma_{11}^{1/2} \sigma_{KK}^{1/2}}, \frac{\sigma_{23}}{\sigma_{22}^{1/2} \sigma_{33}^{1/2}}, \ldots \frac{\sigma_{2K}}{\sigma_{22}^{1/2} \sigma_{KK}^{1/2}}, \cdots, \frac{\sigma_{K-1,K}}{\sigma_{K-1,K-1}^{1/2} \sigma_{KK}^{1/2}}\right) \\
    h_8(\widetilde{z}) & = & (\sigma_{11}, \ldots, \sigma_{KK}).
  \end{array}
\end{equation}

A derivation in \citet{wayman2025restricted} demonstrates (with \(\widetilde{\zeta}\) playing the role of \(\widetilde{\lambda}\) and with \(\alpha^\ast\) having a different number of rows, namely \(NT\)) that Jacobian determinant in the change of variables formula is \begin{align}
  J_{g^{-1}}(\widetilde{z}) = \underbrace{\left(\prod_{k \in [K]} \sigma_{kk}^{-1/2}\right)^{TN}}_{(\text{D1})} \cdot \underbrace{\left(\prod_{k \in [K]} \sigma_{kk}^{-1/2}\right)^{L - 2}}_{(\text{D2})} \cdot \underbrace{\left(\prod_{k \in [K]} \sigma_{kk}^{-1/2}\right)^{D + H_\text{otr}}}_{(\text{D3})} \cdot \underbrace{\left(\prod_{k \in [K]} \sigma_{kk}^{-1/2}\right)^{K - 1}}_{(\text{D4})}
\end{align}

For \(\widetilde{Z} = (Y, \alpha, b_1, \widetilde{\alpha^\ast}, \widetilde{\gamma}, \widetilde{\zeta}, \Sigma)\), where \(\widetilde{\zeta} = (\widetilde{\lambda}^\prime, \widetilde{\xi}^\prime)^\prime\), writing \(\text{etr}(\cdot)\) to mean \(\exp(\text{tr}(\cdot))\), we have \begin{equation}
  p(\widetilde{Z}) = (\widetilde{\text{part1}}) \cdot (\widetilde{\text{part2}}) \cdot (\widetilde{\text{part3}}) \cdot (\widetilde{\text{part4}}) \cdot (\widetilde{\text{part5}}) \cdot (\widetilde{\text{part6}})
\end{equation}

\noindent where \begin{align}
  (\widetilde{\text{part1}}) & = \prod_{n=1}^N \left[\prod_{t=1}^T \prod_{j=1}^J I(Y_{nj}^{\ast, t} \in (\kappa_{j, Y_{nj}^t - 1}, \kappa_{jY_{nj}^t}]) \cdot \phi(Y_{nj}^{\ast, t}; d_n^t \beta_j, 1)\right] \notag\\
    & \quad \cdot I(-\infty = \kappa_{j0} < 0 = \kappa_{j1} < \cdots < \kappa_{j M_j} = \infty) \notag\\
    & \quad \cdot \prod_{j=1}^J \Bigg[c_j(\delta_j) \cdot I(\beta_j \in \mathcal{R}_j) \notag\\
      & \quad\quad \cdot \left(\prod_{h=1}^H \left[I(\delta_{hj} = 0) \cdot \Delta(\beta_{hj}) + I(\delta_{hj} = 1) \cdot \phi(\beta_{hj}; 0, \sigma_{\beta}^2)\right]\right) \notag\\
      & \quad\quad \cdot \left(\prod_{h=1}^H \omega^{\delta_{hj}} (1 - \omega)^{1 - \delta_{hj}}\right)\Bigg] \notag\\
    & \quad \cdot \frac{1}{B(\omega_0, \omega_1)} \omega^{\omega_0 - 1} (1 - \omega)^{\omega_1 - 1} \label{part1}
\end{align}

\begin{align}
  \widetilde{\text{(part2)}} & = \prod_{n=1}^N \prod_{t=1}^T \prod_{k=1}^K I\left(\widetilde{\alpha_{nk}^{\ast, t}} \sigma_{kk}^{-1/2} \in \Big(\widetilde{\gamma}_{k, \alpha_{nk}^t} \sigma_{kk}^{-1/2}, \widetilde{\gamma}_{k, \alpha_{nk}^t + 1} \sigma_{kk}^{-1/2}\Big]\right) \notag\\
    & = \prod_{n=1}^N \prod_{t=1}^T \prod_{k=1}^K I\left(\widetilde{\alpha_{nk}^{\ast, t}} \in \Big(\widetilde{\gamma}_{k,\alpha_{nk}^t}, \widetilde{\gamma}_{k, \alpha_{nk}^t + 1}\Big]\right) \label{part2}
\end{align}

\begin{equation}
  (\widetilde{\text{part3}}) = \prod_{l=2}^{L - 1} \Bigg[a \exp\left[-a (\widetilde{\gamma}_{kl} - \widetilde{\gamma}_{k,l-1})\right] \cdot I\left(\widetilde{\gamma}_{kl} \in (\widetilde{\gamma}_{k, l-1}, \infty)\right)\Bigg] \label{part3}
\end{equation}

\begin{equation}
(\widetilde{\text{part4}}) = (\det{\Sigma})^{-\frac{1}{2}(v_0 + K + 1)} \exp\left(-\frac{1}{2}\tr(\Sigma^{-1})\right) \label{part4}
\end{equation}

\noindent Letting \((S)_{ij} = \sigma_{ij} / \sigma_{ii}^{1/2} \sigma_{jj}^{1/2}\), \begin{align}
  (\widetilde{\text{part5}}) & = (2\pi)^{-\frac{1}{2}KN} (\det{S})^{-\frac{1}{2}TN} \notag\\
  & \quad\quad \cdot \text{etr}\left\{-\frac{1}{2}\left(\widetilde{\alpha^\ast} \widetilde{V}^{-1/2} - W \widetilde{\zeta} \widetilde{V}^{-1/2}\right) S^{-1} \left(\widetilde{\alpha^\ast} \widetilde{V}^{-1/2} - W \widetilde{\zeta} \widetilde{V}^{-1/2}\right)^\prime\right\} \cdot (\text{D1}) \notag\\
  & = (2\pi)^{-\frac{1}{2}KN} (\det{S})^{-\frac{1}{2}TN} \text{etr}\left\{-\frac{1}{2}\left(\widetilde{\alpha^\ast} - W \widetilde{\zeta}\right) \Sigma^{-1} \left(\widetilde{\alpha^\ast} - W \widetilde{\zeta}\right)^\prime\right\} \cdot (\det{\widetilde{V}})^{-\frac{1}{2}TN} \notag\\
  & = (2\pi)^{-\frac{1}{2}KN} (\det{\Sigma})^{-\frac{1}{2}TN} \text{etr}\left\{-\frac{1}{2}\left(\widetilde{\alpha^\ast} - W \widetilde{\zeta}\right) \Sigma^{-1} \left(\widetilde{\alpha^\ast} - W \widetilde{\zeta}\right)^\prime\right\} \label{part5}
\end{align}

\begin{align}
  (\widetilde{\text{part6}}) & = (2\pi)^{-\frac{1}{2}(D + H_{\text{otr}})K} (\det{S})^{-\frac{1}{2}(D + H_{\text{otr}})} (\det{I_{D + H_{\text{otr}}}})^{-\frac{1}{2}K} \notag\\
  & \quad\quad \cdot \text{etr}\left\{-\frac{1}{2} \left(\widetilde{\zeta} \widetilde{V}^{-1/2}\right) S^{-1} \left(\widetilde{\zeta} \widetilde{V}^{-1/2}\right)^\prime\right\} \cdot (\text{D3}) \notag\\
  & = (2\pi)^{-\frac{1}{2}(D + H_{\text{otr}})K} (\det{S})^{-\frac{1}{2}(D + H_{\text{otr}})} (\det{I_{D + H_{\text{otr}}}})^{-\frac{1}{2}K} \notag\\
  & \quad\quad \cdot \text{etr}\left\{-\frac{1}{2} \widetilde{\zeta} \Sigma^{-1} \widetilde{\zeta}^\prime\right\} \cdot (\det{\widetilde{V}})^{-\frac{1}{2}(D + H_{\text{otr}})} \notag\\
  & = (2\pi)^{-\frac{1}{2}(D + H_{\text{otr}})K} (\det{\Sigma})^{-\frac{1}{2}(D + H_{\text{otr}})} (\det{I_{D + H_{\text{otr}}}})^{-\frac{1}{2}K} \cdot \text{etr}\left\{-\frac{1}{2} \widetilde{\zeta} \Sigma^{-1} \widetilde{\zeta}^\prime\right\} \label{part6}
\end{align}

\noindent where (\(\widetilde{\text{part3}}\)) and (\(\widetilde{\text{part4}}\)) were derived in previous work \citep{wayman2025restricted}.

\section*{Supplementary Material G}

Supplementary Material G describes the sampling steps.

\subsection*{Augmented Data and Measurement Model Thresholds}

For all \(j \in [J]\),  we sample \((\kappa_j, Y_j^\ast)\) using a Metropolis step established for cumulative-link models \citep{cowles1996accelerating} that results in \(\kappa_j\) converging faster than would be the case if Gibbs steps were used for these variables.

\subsection*{Beta and Sparsity Matrix}

For each \(j \in [J]\) and each \(h \in [H]\) we sample \(\delta_{hj}\) using a Gibbs step collapsed on \(\beta_{hj}\), the density for which is a Bernoulli: \begin{align}
& p\left(\delta_{hj} = 1 \mid \alpha, \beta_{(h)j}, Y_j^\ast, \omega\right) = \Bigg[(1 - \omega) + \omega \cdot \left(\Phi\left(\frac{-L_{hj}}{\sigma_{\beta}}\right)\right)^{-1} \cdot \left(\frac{c_2^2}{\sigma_{\beta}^2}\right)^{1/2} \notag\\
    & \quad\quad \cdot \exp\left(\frac{c_1^2}{2c_2^2}\right) \cdot \Phi\left(\frac{-(L_{hj} - c_1)}{c_2}\right)\Bigg]^{-1} \notag\\
& \quad\quad \cdot \omega \cdot \left(\Phi\left(\frac{-L_{hj}}{\sigma_{\beta}}\right)\right)^{-1} \cdot \left(\frac{c_2^2}{\sigma_{\beta}^2}\right)^{1/2} \cdot \exp\left(\frac{c_1^2}{2c_2^2}\right) \cdot \Phi\left(\frac{-(L_{hj} - c_1)}{c_2}\right).\label{deltaprob}
\end{align}

\noindent In \eqref{deltaprob}, \begin{equation*}
  c_2^2 = \left[\left(d^\prime d\right)_{hh} + \frac{1}{\sigma_{\beta}^2}\right]^{-1}
\end{equation*}

\noindent where \(\left(d^\prime d\right)_{hh}\) refers to the entry in row \(h\) and column \(h\) of the \(H \times H\) matrix \(d^\prime d\). Also in \eqref{deltaprob}, \begin{equation*}
  c_1 = c_2^2 \cdot \left(d^\prime Y_j^{\ast} - \left(d^\prime d\right)_{(h)} \beta_{(h)j}\right)_h
\end{equation*}

\noindent is the entry in row \(h\) of the \(H \times 1\) vector resulting from the calculation, where \(\left(d^\prime d\right)_{(h)}\) refers to \(d^\prime d\) with column \(h\) eliminated and where \(\beta_{(h)j}\) refers to the column vector \(\beta_j\) with element \(h\) eliminated.

We then use \(\delta_{hj}\) to sample \(\beta_{hj}\) from its full conditional, \(p(\beta_{hj} \mid \delta_j, Y_j^\ast, \alpha)\), which is a point mass at \(\beta_{hj} = 0\) when \(\delta_{hj} = 0\), and when \(\delta_{hj} = 1\), the density is \begin{equation}
  p(\beta_{hj} \mid \delta_j, \beta_{(h)j}, Y_j^\ast, \alpha) = I\left(\beta_{hj} \in (L_{hj}, \infty)\right) \frac{\phi(\beta_{hj}; c_1, c_2^2)}{[1 - \Phi(L_{hj}; c_1, c_2^2)]}
\end{equation}

\noindent a left-truncated normal whose left-truncation point is \citep{wayman2025restricted} \begin{equation}
  L_{hj} = \max_{u, v: u, v \in A_L \,\land\, u \geq v} -\left(d_{(1,h)u} - d_{(1,h)v}\right) \beta_{(1,h)j}
\end{equation}

\noindent and whose underlying mean and variance are \(c_1\) and \(c_2^2\) respectively (the notation \(\beta_{(1,h)j}\) refers to vector \(\beta_j\) with elements \(1\) and \(h\) removed). When \(h = 0\), \(L_{hj} = -\infty\) and the density is that of a normal distribution.

\subsection*{Latent States and Related Auxiliary Variables}

For each \(t \in [T]\), \(n \in [N]\), and \(k \in [K]\), we sample \((\widetilde{\alpha_{nk}^{\ast, t}}, \alpha_{nk}^t)\) by first sampling \(\alpha_{nk}^t\) using a Gibbs step collapsed on \(\widetilde{\alpha_{nk}^{\ast, t}}\), and then using \(\alpha_{nk}^t\) to sample \(\widetilde{\alpha_{nk}^{\ast, t}}\) from its full conditional.

We find the sampling density for this first step by finding the full conditional of \((\widetilde{\alpha_{nk}^{\ast, t}}, \alpha_{nk}^t)\) and integrating with respect to \(\widetilde{\alpha_{nk}^{\ast, t}}\). This depends on the particular value of \(t\). For \(t \in \{1, 2, \ldots, T-1\}\), the full conditional of \((\widetilde{\alpha_{nk}^{\ast, t}}, \alpha_{nk}^t)\) is \begin{equation}\label{bg3pt1}
  p\left(\widetilde{\alpha_{nk}^{\ast, t}}, \alpha_{nk}^t \mid Y_n^{\ast, t}, \alpha_{n(k)}^t, \beta, \gamma_k, \alpha_{nk}^{t-1}, \widetilde{\alpha_{n(k)}^{\ast, t}}, \widetilde{\alpha_n^{\ast, t + 1}}, \widetilde{\zeta}, \Sigma\right)
\end{equation}

For \(t = T\), the full conditional of \((\widetilde{\alpha_{nk}^{\ast, t}}, \alpha_{nk}^t)\) is \begin{equation}\label{bg3pt1pt5}
  p\left(\widetilde{\alpha_{nk}^{\ast, t}}, \alpha_{nk}^t \mid Y_n^{\ast, t}, \alpha_{n(k)}^t, \beta, \gamma_k, \alpha_{nk}^{t-1}, \widetilde{\alpha_{n(k)}^{\ast, t}}, \widetilde{\zeta}, \Sigma\right)
\end{equation}

\noindent Both of these are proportional to quantities appearing in (\(\widetilde{\text{part1}}\)), (\(\widetilde{\text{part2}}\)), and (\(\widetilde{\text{part5}}\)). Note that \eqref{part5} is the density of a matrix variate normal with variable \(\widetilde{\alpha}\), mean \(W \widetilde{\zeta}\), and covariance \(I_{TN} \otimes \Sigma\). Observe that the density of this matrix variate normal can be factored as follows: \begin{align}
  (\text{part5}) & = c_1 \cdot \prod_{n=1}^N \Bigg(\Bigg\{\prod_{t=2}^T (\det{\Sigma})^{-\frac{1}{2}} \text{etr}\left[-\frac{1}{2} \left(\widetilde{\alpha_n^{\ast, t}} - W_n^t \widetilde{\zeta}\right) \Sigma^{-1} \left(\widetilde{\alpha_n^{\ast, t}} - W_n^t \widetilde{\zeta}\right)^\prime\right]\Bigg\} \notag\\
  & \qquad\qquad\qquad \cdot (\det{\Sigma})^{-\frac{1}{2}} \text{etr}\left[-\frac{1}{2} \left(\widetilde{\alpha_n^{\ast, 1}} - X_n^1 \widetilde{\lambda}\right) \Sigma^{-1} \left(\widetilde{\alpha_n^{\ast, 1}} - X_n^1 \widetilde{\lambda}\right)^\prime\right]\Bigg)\label{bg3bpt2}
\end{align}

\noindent Making use of \eqref{bg3bpt2} and using proportionality, for \(t \in \{1, 2, \ldots, T-1\}\), \begin{align}
  & p\left(\widetilde{\alpha_{nk}^{\ast, t}}, \alpha_{nk}^t \mid Y_n^{\ast, t}, \alpha_{n(k)}^t, \beta, \gamma_k, \alpha_{nk}^{t-1}, \widetilde{\alpha_{n(k)}^{\ast, t}}, \widetilde{\alpha_n^{\ast, t + 1}}, \widetilde{\zeta}, \Sigma\right) \notag\\
  & = c_1 \cdot \left[\prod_{j=1}^J \phi(Y_{nj}^{\ast, t}; d_n^t \beta_j, 1)\right] \cdot I\left(\widetilde{\alpha_{nk}^{\ast, t}} \in \Big(\widetilde{\gamma}_{k,\alpha_{nk}^t}, \widetilde{\gamma}_{k, \alpha_{nk}^t + 1}\Big]\right) \notag\\
    & \quad\quad \cdot \underbrace{(\det{\Sigma})^{-\frac{1}{2}} \text{etr}\left[-\frac{1}{2} \left(\widetilde{\alpha_n^{\ast, t}} - W_n^t \widetilde{\zeta}\right) \Sigma^{-1} \left(\widetilde{\alpha_n^{\ast, t}} - W_n^t \widetilde{\zeta}\right)^\prime\right]}_{(1)} \notag\\
    & \quad\quad \cdot \underbrace{(\det{\Sigma})^{-\frac{1}{2}} \text{etr}\left[-\frac{1}{2} \left(\widetilde{\alpha_n^{\ast, t + 1}} - W_n^{t + 1} \widetilde{\zeta}\right) \Sigma^{-1} \left(\widetilde{\alpha_n^{\ast, t + 1}} - W_n^{t + 1} \widetilde{\zeta}\right)^\prime\right]}_{(2)} \notag\\
    & = c_2 \cdot \left[\prod_{j=1}^J \phi(Y_{nj}^{\ast, t}; d_n^t \beta_j, 1)\right] \cdot I\left(\widetilde{\alpha_{nk}^{\ast, t}} \in \Big(\widetilde{\gamma}_{k,\alpha_{nk}^t}, \widetilde{\gamma}_{k, \alpha_{nk}^t + 1}\Big]\right) \notag\\
    & \quad\quad \cdot \phi\left(\widetilde{\alpha_{nk}^{\ast, t}}; \mu_{nk}^t, \sigma_k^2\right) \cdot \phi_K\left(\widetilde{\alpha_n^{\ast, t + 1}}; W_n^{t + 1} \widetilde{\zeta}, \Sigma\right) \label{bg3bpt3}
\end{align}

\noindent where we have used the fact that since (1) is the density of a multivariate normal with variable \(\widetilde{\alpha_n^{\ast, t}}\), this density can be written \citep{marden2015multivariate} as the product of two densities, one which only involves \(\widetilde{\alpha_{n(k)}^{\ast, t}}\), and one of which is the density of a multivariate normal with variable \(\widetilde{\alpha_{nk}^{\ast, t}}\), mean \(\mu_k^t\), and covariance \(\sigma_k^2\), where \(\mu_{nk}^t = W_n^t \widetilde{\zeta}_k + (\widetilde{\alpha_{n(k)}^{\ast, t}} - W_n^t \widetilde{\zeta}_{(k)})\Sigma_{(k)(k)}^{-1} \Sigma_{(k)k}\) and \(\sigma_k^2 = \Sigma_{kk} - \Sigma_{k(k)} \Sigma_{(k)(k)}^{-1} \Sigma_{(k)k}\). We note that (2) is the density of a multivariate normal with variable, \(\widetilde{\alpha_n^{\ast, t + 1}}\), mean \(W_n^{t + 1} \widetilde{\zeta}\), and covariance \(\Sigma\).

Similarly, for \(t = T\), \begin{align}
  & p\left(\widetilde{\alpha_{nk}^{\ast, t}}, \alpha_{nk}^t \mid Y_n^{\ast, t}, \alpha_{n(k)}^t, \beta, \gamma_k, \alpha_{nk}^{t-1}, \widetilde{\alpha_{n(k)}^{\ast, t}}, \widetilde{\zeta}, \Sigma\right) \notag\\
  & = c_3 \cdot \left[\prod_{j=1}^J \phi(Y_{nj}^{\ast, t}; d_n^t \beta_j, 1)\right] \cdot I\left(\widetilde{\alpha_{nk}^{\ast, t}} \in \Big(\widetilde{\gamma}_{k,\alpha_{nk}^t}, \widetilde{\gamma}_{k, \alpha_{nk}^t + 1}\Big]\right) \notag\\
  & \quad\quad \cdot \underbrace{(\det{\Sigma})^{-\frac{1}{2}} \text{etr}\left[-\frac{1}{2} \left(\widetilde{\alpha_n^{\ast, t}} - W_n^t \widetilde{\zeta}\right) \Sigma^{-1} \left(\widetilde{\alpha_n^{\ast, t}} - W_n^t \widetilde{\zeta}\right)^\prime\right]}_{(1)} \notag\\
  & = c_4 \cdot \left[\prod_{j=1}^J \phi(Y_{nj}^{\ast, t}; d_n^t \beta_j, 1)\right] \cdot I\left(\widetilde{\alpha_{nk}^{\ast, t}} \in \Big(\widetilde{\gamma}_{k,\alpha_{nk}^t}, \widetilde{\gamma}_{k, \alpha_{nk}^t + 1}\Big]\right) \cdot \phi\left(\widetilde{\alpha_{nk}^{\ast, t}}; \mu_{nk}^t, \sigma_k^2\right).
\end{align}

Taking the integral, for \(t \in \{1, 2, \ldots, T - 1\}\) we have \begin{align}
  & p\left(\alpha_{nk}^t \mid Y_n^{\ast, t}, \alpha_{n(k)}^t, \beta, \gamma_k, \alpha_{nk}^{t-1}, \widetilde{\alpha_{n(k)}^{\ast, t}}, \widetilde{\alpha_n^{\ast, t + 1}}, \widetilde{\zeta}, \Sigma\right) \notag\\
  & = \int p(\widetilde{\alpha_{nk}^{\ast, t}}, \alpha_{nk}^t \mid Y_n^{\ast, t}, \alpha_{n(k)}^t, \beta, \gamma_k, \alpha_{nk}^{t-1}, \widetilde{\alpha_{n(k)}^{\ast, t}}, \widetilde{\alpha_n^{\ast, t + 1}}, \widetilde{\zeta}, \Sigma) d\widetilde{\alpha_{nk}^{\ast, t}} \notag\\
  & = c_2 \cdot \left[\prod_{j=1}^J \phi(Y_{nj}^{\ast, t}; d_n^t \beta_j, 1)\right] \cdot \phi_K\left(\widetilde{\alpha_n^{\ast, t + 1}}; W_n^{t + 1} \widetilde{\zeta}, \Sigma\right) \notag\\
  & \quad\quad \cdot \int I\left(\widetilde{\alpha_{nk}^{\ast, t}} \in \Big(\widetilde{\gamma}_{k,\alpha_{nk}^t}, \widetilde{\gamma}_{k, \alpha_{nk}^t + 1}\Big]\right) \cdot \phi\left(\widetilde{\alpha_n^{\ast, t}}; \mu_{nk}^t, \sigma_k^2\right) d\widetilde{\alpha_{nk}^{\ast, t}} \notag\\
  & = c_2 \cdot \left[\prod_{j=1}^J \phi(Y_{nj}^{\ast, t}; d_n^t \beta_j, 1)\right] \cdot \phi_K\left(\widetilde{\alpha_n^{\ast, t + 1}}; W_n^{t + 1} \widetilde{\zeta}, \Sigma\right) \notag\\
  & \quad\quad \cdot \int_{\widetilde{\gamma}_{k,\alpha_{nk}^t}}^{\widetilde{\gamma}_{k, \alpha_{nk}^t + 1}} \phi\left(\widetilde{\alpha_n^{\ast, t}}; \mu_{nk}^t, \sigma_k^2\right) d\widetilde{\alpha_{nk}^{\ast, t}} \notag\\
    & = c_2 \cdot \left[\prod_{j=1}^J \phi(Y_{nj}^{\ast, t}; d_n^t \beta_j, 1)\right] \cdot \phi_K\left(\widetilde{\alpha_n^{\ast, t + 1}}; W_n^{t + 1} \widetilde{\zeta}, \Sigma\right) \notag\\
  & \quad\quad \cdot \left[\Phi\left(\frac{\widetilde{\gamma}_{k, \alpha_{nk}^t + 1} - \mu_{nk}^t}{\sigma_k}\right) - \Phi\left(\frac{\widetilde{\gamma}_{k, \alpha_{nk}^t} - \mu_{nk}^t}{\sigma_k}\right)\right] \label{bg3bpt5}
\end{align}

and similarly for \(t = T\), \begin{align}
  & p\left(\alpha_{nk}^t \mid Y_n^{\ast, t}, \alpha_{n(k)}^t, \beta, \gamma_k, \alpha_{nk}^{t-1}, \widetilde{\alpha_{n(k)}^{\ast, t}}, \widetilde{\zeta}, \Sigma\right) \notag\\
  & = c_4 \cdot \left[\prod_{j=1}^J \phi(Y_{nj}^{\ast, t}; d_n^t \beta_j, 1)\right] \cdot \left[\Phi\left(\frac{\widetilde{\gamma}_{k, \alpha_{nk}^t + 1} - \mu_{nk}^t}{\sigma_k}\right) - \Phi\left(\frac{\widetilde{\gamma}_{k, \alpha_{nk}^t} - \mu_{nk}^t}{\sigma_k}\right)\right].
\end{align}

To calculate the probability for each of \(\alpha_{nk}^t = l \in \{0, 1, \ldots, L - 1\}\), for \(t \in \{1, 2, \ldots, T - 1\}\) we plug \(l\) into \begin{align}
  p_l & := \left[\prod_{j=1}^J \phi(Y_{nj}^{\ast, t}; d_n^t \beta_j, 1)\right] \cdot \phi_K\left(\widetilde{\alpha_n^{\ast, t + 1}}; W_n^{t + 1} \widetilde{\zeta}, \Sigma\right) \notag\\
  & \quad\quad \cdot \left[\Phi\left(\frac{\widetilde{\gamma}_{k, \alpha_{nk}^t + 1} - \mu_{nk}^t}{\sigma_k}\right) - \Phi\left(\frac{\widetilde{\gamma}_{k, \alpha_{nk}^t} - \mu_{nk}^t}{\sigma_k}\right)\right]
\end{align}

\noindent and then calculate \(c = (\sum_{l=0}^{L - 1} p_l)^{-1}\), so then \begin{equation}\label{probalphatcateg1}
  p\left(\alpha_{nk}^t \mid Y_n^{\ast, t}, \alpha_{n(k)}^t, \beta, \gamma_k, \alpha_{nk}^{t-1}, \widetilde{\alpha_{n(k)}^{\ast, t}}, \widetilde{\alpha_n^{\ast, t + 1}}, \widetilde{\zeta}, \Sigma\right) = c \cdot p_{\alpha_{nk}^t}.
\end{equation}

\noindent For \(t = T\), we plug \(l\) into \begin{equation}
  p_l := \left[\prod_{j=1}^J \phi(Y_{nj}^{\ast, t}; d_n^t \beta_j, 1)\right] \cdot \left[\Phi\left(\frac{\widetilde{\gamma}_{k, \alpha_{nk}^t + 1} - \mu_{nk}^t}{\sigma_k}\right) - \Phi\left(\frac{\widetilde{\gamma}_{k, \alpha_{nk}^t} - \mu_{nk}^t}{\sigma_k}\right)\right]
\end{equation}

\noindent and calculate \(c = (\sum_{l=0}^{L - 1} p_l)^{-1}\), so that \begin{equation}\label{probalphatcateg2}
  p\left(\alpha_{nk}^t \mid Y_n^{\ast, t}, \alpha_{n(k)}^t, \beta, \gamma_k, \alpha_{nk}^{t-1}, \widetilde{\alpha_{n(k)}^{\ast, t}}, \widetilde{\zeta}, \Sigma\right) = c \cdot p_{\alpha_{nk}^t}.
\end{equation}

The full conditional of \(\widetilde{\alpha_{nk}^{\ast, t}}\) is, for all \(t \in \{1, 2, \ldots, T\}\), \begin{equation}
  p\left(\widetilde{\alpha_{nk}^{\ast, t}} \mid \alpha_{nk}^t, \widetilde{\gamma}_k, \alpha_n^{t-1}, \widetilde{\alpha_{n(k)}^{\ast, t}}, \widetilde{\zeta}, \Sigma\right) = c_5 \cdot I\left(\widetilde{\alpha_{nk}^{\ast, t}} \in \Big(\widetilde{\gamma}_{k,\alpha_{nk}^t}, \widetilde{\gamma}_{k, \alpha_{nk}^t + 1}\Big]\right) \cdot \phi\left(\widetilde{\alpha_{nk}^{\ast, t}}; \mu_{nk}^t, \sigma_k^2\right)
\end{equation}

\noindent so we conclude that \begin{align}\label{probalphat}
  & p\left(\widetilde{\alpha_{nk}^{\ast, t}} \mid \alpha_{nk}^t, \widetilde{\gamma}_k, \alpha_n^{t-1}, \widetilde{\alpha_{n(k)}^{\ast, t}}, \widetilde{\zeta}, \Sigma\right) \notag\\
  & = I\left(\widetilde{\alpha_{nk}^{\ast, t}} \in (\widetilde{\gamma}_{k,\alpha_{nk}^t}, \widetilde{\gamma}_{k, \alpha_{nk}^t + 1})\right) \frac{\phi(\widetilde{\alpha_{nk}^{\ast, t}}; \mu_{nk}^t, \sigma_k^2)}{\Phi(\widetilde{\gamma}_{k,\alpha_{nk}^t + 1}; \mu_{nk}^t, \sigma_k^2) - \Phi(\widetilde{\gamma}_{k\alpha_{nk}^t}; \mu_{nk}^t, \sigma_k^2)}
\end{align}

\noindent a truncated normal with left and right truncation points \(\widetilde{\gamma}_{k,\alpha_{nk}^t}\) and \(\widetilde{\gamma}_{k, \alpha_{nk}^t + 1}\) respectively, and where the mean and variance of the underlying normal distribution are \(\mu_{nk}^t\) and \(\sigma_k^2\) respectively.

\subsection*{Thresholds for Latent State Levels}

For each \(k \in [K]\), each threshold \(\widetilde{\gamma}_{kl}\) where \(l \in \{2, 3, \ldots, L - 1\}\) is sampled from its full conditional; these densities are derived in the paper that introduced the cross-sectional model we are extending \cite{wayman2025restricted}. For \(l \in \{2, 3, \ldots, L - 2\}\), the full conditional \(p(\widetilde{\gamma}_{kl} \mid \widetilde{\gamma}_{k,l-1}, \widetilde{\gamma}_{k,l+1}, \widetilde{\alpha^\ast})\) is a continuous uniform distribution on the range \begin{equation}
  \left(\max\left(\max_{n \in [N]:\, \alpha_{nk} = l - 1} \left(\widetilde{\alpha^{\ast}}_{nk}\right), \widetilde{\gamma}_{k,l-1}\right), \min\left(\min_{n \in [N]:\, \alpha_{nk} = l} \left(\widetilde{\alpha^{\ast}}_{nk}\right), \widetilde{\gamma}_{k,l+1}\right)\right).
\end{equation}

\noindent For  \(l = L - 1\), the full conditional for \(\widetilde{\gamma}_{kl}\) is a left-truncated exponential: \begin{align}
  p(\widetilde{\gamma}_{kl} \mid \widetilde{\gamma}_{k,l-1}, \widetilde{\alpha^\ast}) & = c \cdot I\left(\widetilde{\gamma}_{kl} \geq \max\left(\max_{n \in [N]:\, \alpha_{nk} = l - 1} \left(\widetilde{\alpha^{\ast}}_{nk}\right), \widetilde{\gamma}_{k,l-1}\right)\right) \notag\\
  & \quad\quad \cdot I\left(\widetilde{\gamma}_{kl} < \min\left(\min_{n \in [N]:\, \alpha_{nk} = l} \left(\widetilde{\alpha^{\ast}}_{nk}\right), \infty\right)\right) \notag\\
  & \quad\quad \cdot \exp\left(-a \widetilde{\gamma}_{kl}\right).
\end{align}

\subsection*{Covariance Matrix and Slope Parameter for Covariates}

We sample \((\widetilde{\zeta}, \Sigma)\) by first sampling \(\widetilde{\Sigma}\) using a Gibbs step collapsed on \(\widetilde{\zeta}\) and then using \(\widetilde{\Sigma}\) to sample \(\widetilde{\zeta}\) from its full conditional. To find that first sampling density, we find the full conditional of \((\widetilde{\zeta}, \Sigma)\) and integrate with respect to \(\widetilde{\zeta}\).

We observe that the full conditional of \((\widetilde{\zeta}, \Sigma)\) is \begin{align}
  p(\widetilde{\zeta}, \Sigma \mid \alpha^\ast) & = c_1 \cdot (\text{part3}) \cdot (\text{part5}) \cdot (\text{part6}) \notag\\
& = c_2 \cdot (\det{\Sigma})^{-\frac{1}{2}NT} \text{etr}\left[-\frac{1}{2}\left(\widetilde{\alpha^\ast} - W \widetilde{\zeta}\right) \Sigma^{-1} \left(\widetilde{\alpha^\ast} - W \widetilde{\zeta}\right)^\prime\right] \cdot (\det{\Sigma})^{-\frac{1}{2}(D + H_{\text{otr}})} \notag\\
& \quad\quad \cdot \text{etr}\left[-\frac{1}{2} \widetilde{\zeta} \Sigma^{-1} \widetilde{\zeta}^\prime\right] \cdot (\det{\Sigma})^{-(v_0 + K + 1)/2} \cdot \text{etr}\left[-\frac{1}{2}\Sigma^{-1}\right]
\end{align}

Let \(\Xi = (\widetilde{\alpha^\ast}^\prime, O^\prime)^\prime\) and \(\Omega = (W^\prime, I_{D + H_{\text{otr}}}^\prime)^\prime\). Note that \(\Omega^\prime \Omega = W^\prime W + I_{D + H_{\text{otr}}}\) and \(\Omega^\prime \Xi = W^\prime \widetilde{\alpha^\ast}\). Define \(\widehat{L_2} = (W^\prime W + I_{D + H_{\text{otr}}})^{-1} W^\prime \widetilde{\alpha^\ast}\) and \(S = (\widetilde{\alpha^\ast} - W \widehat{L_2})^\prime (\widetilde{\alpha^\ast} - W \widehat{L_2}) + \widehat{L_2}^\prime I_{D + H_{\text{otr}}} \widehat{L_2}\). From a derivation of Bayesian multiple linear regression \citep{rossi2005bayesian,wayman2025restricted}, we have that \begin{equation}
(\widetilde{\alpha^{\ast}} - W \widetilde{\zeta})^\prime (\widetilde{\alpha^{\ast}} - W \widetilde{\zeta}) = S + (\widetilde{\zeta} - \widehat{L_2})^\prime \Omega^\prime \Omega (\widetilde{\zeta} - \widehat{L_2}) - \widetilde{\zeta}^\prime I_{D + H_{\text{otr}}} \widetilde{\zeta}
\end{equation}

\noindent Therefore we can write \begin{align}
  & p(\widetilde{\zeta}, \Sigma \mid \widetilde{\alpha^\ast}) \notag\\
  & = c_3 \cdot (\det{\Sigma})^{-\frac{1}{2}(NT + D + H_{\text{otr}} + v_0 + K + 1)} \notag\\
  & \quad\quad \cdot \text{etr}\left[-\frac{1}{2} \Sigma^{-1} \left\{S + (\widetilde{\zeta} - \widehat{L_2})^\prime \Omega^\prime \Omega (\widetilde{\zeta} - \widehat{L_2}) - \widetilde{\zeta}^\prime I_{D + H_{\text{otr}}} \widetilde{\zeta}\right\} \right] \notag\\
& \quad\quad \cdot \text{etr}\left[-\frac{1}{2}\Sigma^{-1} \widetilde{\zeta}^\prime I_{D + H_{\text{otr}}} \widetilde{\zeta}\right] \cdot \text{etr}\left[-\frac{1}{2} \Sigma^{-1}\right] \notag\\
& = c_3 \cdot (\det{\Sigma})^{-\frac{1}{2}(NT + D + H_{\text{otr}} + v_0 + K + 1)} \cdot \text{etr}\left[-\frac{1}{2} \Sigma^{-1} \left\{S + (\widetilde{\zeta} - \widehat{L_2})^\prime \Omega^\prime \Omega (\widetilde{\zeta} - \widehat{L_2})\right\}\right] \notag\\
  & \quad\quad \cdot \text{etr}\left[-\frac{1}{2} \Sigma^{-1}\right] \notag\\
  & = c_3 \cdot (\det{\Sigma})^{-\frac{1}{2}(NT + D + H_{\text{otr}} + v_0 + K + 1)} \cdot \text{etr}\left[-\frac{1}{2} \Sigma^{-1} \left\{(\widetilde{\zeta} - \widehat{L_2})^\prime \Omega^\prime \Omega (\widetilde{\zeta} - \widehat{L_2})\right\}\right] \notag\\
  & \quad\quad \cdot \text{etr}\left[-\frac{1}{2} \Sigma^{-1} (I_K + S)\right] \notag\\
& = c_3 \cdot (\det{\Sigma})^{-\frac{1}{2}(NT + v_0 + K + 1)} \cdot \text{etr}\left[-\frac{1}{2} \Sigma^{-1} (I_K + S)\right] \cdot (\det{\Sigma})^{-\frac{1}{2}(D + H_{\text{otr}})} \notag\\
  & \quad\quad \cdot \text{etr}\left[-\frac{1}{2} \Sigma^{-1} \left\{(\widetilde{\zeta} - \widehat{L_2})^\prime \Omega^\prime \Omega (\widetilde{\zeta} - \widehat{L_2})\right\}\right]
\end{align}

\noindent Thus our first step samples \(\Sigma\) from \begin{align}
& p(\Sigma \mid \widetilde{\alpha^\ast}) \notag\\
& = c_3 \cdot (\det{\Sigma})^{-\frac{1}{2}(NT + v_0 + K + 1)} \cdot \text{etr}\left[-\frac{1}{2} \Sigma^{-1} (I_K + S)\right] \notag\\
  & \quad\quad \cdot \int (\det{\Sigma})^{-\frac{1}{2}(D + H_{\text{otr}})} \cdot \text{etr}\left[-\frac{1}{2} \Sigma^{-1} \left\{(\widetilde{\zeta} - \widehat{L_2})^\prime \Omega^\prime \Omega (\widetilde{\zeta} - \widehat{L_2})\right\}\right] d\widetilde{\zeta} \notag\\
& = c_4 \cdot (\det{\Sigma})^{-\frac{1}{2}(NT + v_0 + K + 1)} \cdot \text{etr}\left[-\frac{1}{2} \Sigma^{-1} (I_K + S)\right] \label{samplsigma}
\end{align}

\noindent the density of an inverse Wishart distribution with matrix parameter \(I_K + S\) and scalar parameter \(NT + v_0\).

Utilizing the above algebraic manipulations and proportionality, we find that the full conditional of \(\widetilde{\zeta}\) is \begin{equation}
  \widetilde{\zeta} \mid \Sigma, \widetilde{\alpha^\ast} \sim N_{D + H_{\text{otr}}, K}\left((W^\prime W + I_{D + H_{\text{otr}}})^{-1} W^\prime \widetilde{\alpha^{\ast}}; (W^\prime W + I_{D + H_{\text{otr}}})^{-1} \otimes \Sigma\right). \label{samplzetatilde}
\end{equation}

\subsection*{Sparsity Matrix Related Parameter}

We sample \(\omega\) from its full conditional, which is \begin{equation}
  \omega \mid \delta \sim \text{Beta}\left(\sum_{j \in [J], h \in [H]} \delta_{hj} + \omega_0,\, HJ - \sum_{j \in [J], h \in [H]} \delta_{hj} + \omega_1\right).
\end{equation}

\section*{Supplementary Material H}

Supplementary Material H displays the simulation results.

\begin{table}
\small
\begin{center}
\caption{\label{tab:sim01part01} Parameter recovery, simulation study one}
\vspace{0.5\baselineskip}
\begin{threeparttable}[t]
\begin{tabular}{rrrlrlrrrr}
\toprule
\(N\) & \(J\) & \(K\) & \(L\) & \(\rho\) & \(\gamma\) & \(\eta\) & \(R\) & \(\lambda\) & \(\xi\) \\
\midrule
250 & 15 & 2 & 2 & 0.000 &  & 0.012 & 0.026 & 0.070 & 0.140 \\
250 & 15 & 2 & 3 & 0.000 & 0.053 & 0.039 & 0.104 & 0.173 & 0.133 \\
250 & 25 & 3 & 2 & 0.000 &  & 0.011 & 0.044 & 0.116 & 0.065 \\
250 & 25 & 3 & 3 & 0.000 & 0.055 & 0.013 & 0.032 & 0.063 & 0.104 \\
250 & 45 & 4 & 2 & 0.000 &  & 0.011 & 0.032 & 0.071 & 0.101 \\
250 & 15 & 2 & 2 & 0.250 &  & 0.012 & 0.047 & 0.106 & 0.109 \\
250 & 15 & 2 & 3 & 0.250 & 0.039 & 0.013 & 0.022 & 0.079 & 0.113 \\
250 & 25 & 3 & 2 & 0.250 &  & 0.012 & 0.044 & 0.081 & 0.108 \\
250 & 25 & 3 & 3 & 0.250 & 0.053 & 0.014 & 0.034 & 0.075 & 0.137 \\
250 & 45 & 4 & 2 & 0.250 &  & 0.011 & 0.031 & 0.064 & 0.103 \\
250 & 15 & 2 & 2 & 0.500 &  & 0.012 & 0.045 & 0.091 & 0.178 \\
250 & 15 & 2 & 3 & 0.500 & 0.035 & 0.015 & 0.018 & 0.070 & 0.089 \\
250 & 25 & 3 & 2 & 0.500 &  & 0.050 & 0.245 & 0.226 & 0.108 \\
250 & 25 & 3 & 3 & 0.500 & 0.342 & 0.081 & 0.280 & 0.169 & 0.113 \\
250 & 45 & 4 & 2 & 0.500 &  & 0.012 & 0.025 & 0.127 & 0.117 \\
500 & 15 & 2 & 2 & 0.000 &  & 0.008 & 0.017 & 0.064 & 0.076 \\
500 & 15 & 2 & 3 & 0.000 & 0.053 & 0.025 & 0.057 & 0.121 & 0.079 \\
500 & 25 & 3 & 2 & 0.000 &  & 0.008 & 0.027 & 0.059 & 0.078 \\
500 & 25 & 3 & 3 & 0.000 & 0.034 & 0.012 & 0.024 & 0.066 & 0.075 \\
500 & 45 & 4 & 2 & 0.000 &  & 0.009 & 0.029 & 0.065 & 0.076 \\
500 & 15 & 2 & 2 & 0.250 &  & 0.008 & 0.016 & 0.059 & 0.073 \\
500 & 15 & 2 & 3 & 0.250 & 0.037 & 0.009 & 0.013 & 0.060 & 0.077 \\
500 & 25 & 3 & 2 & 0.250 &  & 0.008 & 0.022 & 0.061 & 0.072 \\
500 & 25 & 3 & 3 & 0.250 & 0.076 & 0.030 & 0.063 & 0.085 & 0.087 \\
500 & 45 & 4 & 2 & 0.250 &  & 0.009 & 0.027 & 0.063 & 0.075 \\
500 & 15 & 2 & 2 & 0.500 &  & 0.008 & 0.013 & 0.061 & 0.072 \\
500 & 15 & 2 & 3 & 0.500 & 0.033 & 0.010 & 0.012 & 0.058 & 0.077 \\
500 & 25 & 3 & 2 & 0.500 &  & 0.037 & 0.170 & 0.158 & 0.088 \\
500 & 25 & 3 & 3 & 0.500 & 0.471 & 0.076 & 0.298 & 0.228 & 0.110 \\
500 & 45 & 4 & 2 & 0.500 &  & 0.023 & 0.088 & 0.117 & 0.089 \\
\bottomrule
\end{tabular}
  \begin{tablenotes}
    \footnotesize
    \item Values displayed for all columns are the average, taken over all elements of the parameter, of the mean absolute error of estimation of that element over all replications.
  \end{tablenotes}
  \end{threeparttable}
\end{center}
\end{table}

\begin{table}
\small
\begin{center}
\caption{\label{tab:sim01part02} Parameter recovery, simulation study one (contd.)}
\vspace{0.5\baselineskip}
\begin{threeparttable}[t]
\begin{tabular}{rrrlrlrrrr}
\toprule
\(N\) & \(J\) & \(K\) & \(L\) & \(\rho\) & \(\gamma\) & \(\eta\) & \(R\) & \(\lambda\) & \(\xi\) \\
\midrule
1500 & 15 & 2 & 2 & 0.000 &  & 0.005 & 0.012 & 0.043 & 0.045 \\
1500 & 15 & 2 & 3 & 0.000 & 0.041 & 0.021 & 0.053 & 0.107 & 0.044 \\
1500 & 25 & 3 & 2 & 0.000 &  & 0.005 & 0.017 & 0.046 & 0.046 \\
1500 & 25 & 3 & 3 & 0.000 & 0.017 & 0.005 & 0.011 & 0.042 & 0.042 \\
1500 & 45 & 4 & 2 & 0.000 &  & 0.006 & 0.020 & 0.049 & 0.045 \\
1500 & 15 & 2 & 2 & 0.250 &  & 0.005 & 0.009 & 0.044 & 0.043 \\
1500 & 15 & 2 & 3 & 0.250 & 0.019 & 0.005 & 0.007 & 0.041 & 0.044 \\
1500 & 25 & 3 & 2 & 0.250 &  & 0.005 & 0.013 & 0.044 & 0.040 \\
1500 & 25 & 3 & 3 & 0.250 & 0.084 & 0.031 & 0.069 & 0.080 & 0.049 \\
1500 & 45 & 4 & 2 & 0.250 &  & 0.005 & 0.016 & 0.045 & 0.043 \\
1500 & 15 & 2 & 2 & 0.500 &  & 0.005 & 0.008 & 0.042 & 0.044 \\
1500 & 15 & 2 & 3 & 0.500 & 0.018 & 0.006 & 0.007 & 0.043 & 0.044 \\
1500 & 25 & 3 & 2 & 0.500 &  & 0.030 & 0.149 & 0.137 & 0.048 \\
1500 & 25 & 3 & 3 & 0.500 & 0.482 & 0.079 & 0.328 & 0.251 & 0.068 \\
1500 & 45 & 4 & 2 & 0.500 &  & 0.018 & 0.070 & 0.095 & 0.054 \\
3000 & 15 & 2 & 2 & 0.000 &  & 0.003 & 0.008 & 0.034 & 0.031 \\
3000 & 15 & 2 & 3 & 0.000 & 0.044 & 0.024 & 0.063 & 0.116 & 0.033 \\
3000 & 25 & 3 & 2 & 0.000 &  & 0.004 & 0.012 & 0.038 & 0.032 \\
3000 & 25 & 3 & 3 & 0.000 & 0.016 & 0.005 & 0.010 & 0.037 & 0.031 \\
3000 & 45 & 4 & 2 & 0.000 &  & 0.005 & 0.014 & 0.043 & 0.032 \\
3000 & 15 & 2 & 2 & 0.250 &  & 0.003 & 0.007 & 0.035 & 0.029 \\
3000 & 15 & 2 & 3 & 0.250 & 0.016 & 0.004 & 0.006 & 0.034 & 0.032 \\
3000 & 25 & 3 & 2 & 0.250 &  & 0.004 & 0.012 & 0.039 & 0.029 \\
3000 & 25 & 3 & 3 & 0.250 & 0.099 & 0.032 & 0.075 & 0.090 & 0.038 \\
3000 & 45 & 4 & 2 & 0.250 &  & 0.004 & 0.013 & 0.039 & 0.030 \\
3000 & 15 & 2 & 2 & 0.500 &  & 0.003 & 0.006 & 0.034 & 0.031 \\
3000 & 15 & 2 & 3 & 0.500 & 0.014 & 0.004 & 0.005 & 0.034 & 0.030 \\
3000 & 25 & 3 & 2 & 0.500 &  & 0.032 & 0.158 & 0.144 & 0.034 \\
3000 & 25 & 3 & 3 & 0.500 & 0.449 & 0.071 & 0.324 & 0.212 & 0.047 \\
3000 & 45 & 4 & 2 & 0.500 &  & 0.014 & 0.054 & 0.080 & 0.039 \\
\bottomrule
\end{tabular}
  \begin{tablenotes}
    \footnotesize
    \item Values displayed for all columns are the average, taken over all elements of the parameter, of the mean absolute error of estimation of that element over all replications.
  \end{tablenotes}
  \end{threeparttable}
\end{center}
\end{table}

\begin{table}
\small
\begin{center}
\caption{\label{tab:sim01part03} Parameter recovery, simulation study one (contd.)}
\vspace{0.5\baselineskip}
\begin{threeparttable}[t]
\begin{tabular}{rrrlrrrrrrr}
\toprule
\(N\) & \(J\) & \(K\) & \(L\) & \(\rho\) & \(\beta\) & \(\delta\) & \(\delta^0\) & \(\delta^1\) & \(\beta^0\) & \(\beta^1\) \\
\midrule
250 & 15 & 2 & 2 & 0.000 & 0.090 & 1.000 & 1.000 & 1.000 & 0.020 & 0.125 \\
250 & 15 & 2 & 3 & 0.000 & 0.245 & 0.930 & 0.875 & 0.973 & 0.120 & 0.345 \\
250 & 25 & 3 & 2 & 0.000 & 0.052 & 0.991 & 0.986 & 1.000 & 0.017 & 0.106 \\
250 & 25 & 3 & 3 & 0.000 & 0.079 & 0.979 & 0.997 & 0.933 & 0.014 & 0.243 \\
250 & 45 & 4 & 2 & 0.000 & 0.035 & 0.997 & 0.996 & 1.000 & 0.010 & 0.100 \\
250 & 15 & 2 & 2 & 0.250 & 0.083 & 0.975 & 0.925 & 1.000 & 0.033 & 0.107 \\
250 & 15 & 2 & 3 & 0.250 & 0.150 & 0.985 & 0.983 & 0.987 & 0.015 & 0.259 \\
250 & 25 & 3 & 2 & 0.250 & 0.050 & 0.989 & 0.981 & 1.000 & 0.014 & 0.105 \\
250 & 25 & 3 & 3 & 0.250 & 0.094 & 0.967 & 0.991 & 0.907 & 0.014 & 0.295 \\
250 & 45 & 4 & 2 & 0.250 & 0.036 & 0.994 & 0.992 & 1.000 & 0.010 & 0.101 \\
250 & 15 & 2 & 2 & 0.500 & 0.071 & 0.992 & 0.975 & 1.000 & 0.024 & 0.095 \\
250 & 15 & 2 & 3 & 0.500 & 0.127 & 0.974 & 0.975 & 0.973 & 0.028 & 0.207 \\
250 & 25 & 3 & 2 & 0.500 & 0.227 & 0.880 & 0.814 & 0.979 & 0.201 & 0.265 \\
250 & 25 & 3 & 3 & 0.500 & 0.305 & 0.873 & 0.868 & 0.885 & 0.190 & 0.594 \\
250 & 45 & 4 & 2 & 0.500 & 0.046 & 0.992 & 0.989 & 1.000 & 0.015 & 0.125 \\
500 & 15 & 2 & 2 & 0.000 & 0.051 & 0.990 & 0.970 & 1.000 & 0.014 & 0.069 \\
500 & 15 & 2 & 3 & 0.000 & 0.159 & 0.971 & 0.957 & 0.982 & 0.057 & 0.239 \\
500 & 25 & 3 & 2 & 0.000 & 0.037 & 0.993 & 0.989 & 1.000 & 0.011 & 0.077 \\
500 & 25 & 3 & 3 & 0.000 & 0.068 & 0.980 & 0.990 & 0.956 & 0.015 & 0.200 \\
500 & 45 & 4 & 2 & 0.000 & 0.031 & 0.994 & 0.992 & 0.999 & 0.009 & 0.087 \\
500 & 15 & 2 & 2 & 0.250 & 0.053 & 0.989 & 0.969 & 1.000 & 0.016 & 0.071 \\
500 & 15 & 2 & 3 & 0.250 & 0.116 & 0.990 & 0.990 & 0.990 & 0.010 & 0.201 \\
500 & 25 & 3 & 2 & 0.250 & 0.036 & 0.994 & 0.990 & 1.000 & 0.010 & 0.074 \\
500 & 25 & 3 & 3 & 0.250 & 0.115 & 0.959 & 0.969 & 0.936 & 0.044 & 0.295 \\
500 & 45 & 4 & 2 & 0.250 & 0.029 & 0.994 & 0.992 & 0.999 & 0.009 & 0.078 \\
500 & 15 & 2 & 2 & 0.500 & 0.053 & 0.988 & 0.965 & 1.000 & 0.017 & 0.071 \\
500 & 15 & 2 & 3 & 0.500 & 0.100 & 0.989 & 0.986 & 0.992 & 0.015 & 0.169 \\
500 & 25 & 3 & 2 & 0.500 & 0.160 & 0.921 & 0.883 & 0.976 & 0.137 & 0.195 \\
500 & 25 & 3 & 3 & 0.500 & 0.246 & 0.877 & 0.859 & 0.923 & 0.178 & 0.418 \\
500 & 45 & 4 & 2 & 0.500 & 0.083 & 0.959 & 0.950 & 0.982 & 0.053 & 0.161 \\
\bottomrule
\end{tabular}
  \begin{tablenotes}
    \footnotesize
    \item Values displayed for \(\beta\) parameters are the average, taken over all elements of the parameter, of the mean absolute error of estimation of each element over all replications. Values displayed for \(\delta\) parameters are the average, taken over all elements of the parameter, of the recovery accuracy of each element.
  \end{tablenotes}
  \end{threeparttable}
\end{center}
\end{table}

\begin{table}
\small
\begin{center}
\caption{\label{tab:sim01part04} Parameter recovery, simulation study one (contd.)}
\vspace{0.5\baselineskip}
\begin{threeparttable}[t]
\begin{tabular}{rrrlrrrrrrr}
\toprule
\(N\) & \(J\) & \(K\) & \(L\) & \(\rho\) & \(\beta\) & \(\delta\) & \(\delta^0\) & \(\delta^1\) & \(\beta^0\) & \(\beta^1\) \\
\midrule
1500 & 15 & 2 & 2 & 0.000 & 0.029 & 0.993 & 0.980 & 1.000 & 0.006 & 0.040 \\
1500 & 15 & 2 & 3 & 0.000 & 0.139 & 0.975 & 0.970 & 0.980 & 0.046 & 0.213 \\
1500 & 25 & 3 & 2 & 0.000 & 0.021 & 0.996 & 0.993 & 1.000 & 0.006 & 0.044 \\
1500 & 25 & 3 & 3 & 0.000 & 0.041 & 0.991 & 0.998 & 0.973 & 0.003 & 0.136 \\
1500 & 45 & 4 & 2 & 0.000 & 0.021 & 0.995 & 0.993 & 0.998 & 0.007 & 0.055 \\
1500 & 15 & 2 & 2 & 0.250 & 0.029 & 0.993 & 0.980 & 1.000 & 0.007 & 0.041 \\
1500 & 15 & 2 & 3 & 0.250 & 0.091 & 0.991 & 0.995 & 0.989 & 0.004 & 0.161 \\
1500 & 25 & 3 & 2 & 0.250 & 0.018 & 0.997 & 0.996 & 1.000 & 0.003 & 0.041 \\
1500 & 25 & 3 & 3 & 0.250 & 0.105 & 0.956 & 0.961 & 0.945 & 0.045 & 0.255 \\
1500 & 45 & 4 & 2 & 0.250 & 0.015 & 0.997 & 0.996 & 1.000 & 0.003 & 0.044 \\
1500 & 15 & 2 & 2 & 0.500 & 0.030 & 0.996 & 0.988 & 1.000 & 0.006 & 0.042 \\
1500 & 15 & 2 & 3 & 0.500 & 0.084 & 0.991 & 0.993 & 0.989 & 0.006 & 0.146 \\
1500 & 25 & 3 & 2 & 0.500 & 0.135 & 0.924 & 0.888 & 0.979 & 0.124 & 0.151 \\
1500 & 25 & 3 & 3 & 0.500 & 0.246 & 0.858 & 0.830 & 0.927 & 0.193 & 0.381 \\
1500 & 45 & 4 & 2 & 0.500 & 0.061 & 0.965 & 0.955 & 0.989 & 0.041 & 0.112 \\
3000 & 15 & 2 & 2 & 0.000 & 0.020 & 0.997 & 0.992 & 1.000 & 0.003 & 0.028 \\
3000 & 15 & 2 & 3 & 0.000 & 0.136 & 0.975 & 0.966 & 0.982 & 0.053 & 0.202 \\
3000 & 25 & 3 & 2 & 0.000 & 0.015 & 0.997 & 0.995 & 1.000 & 0.004 & 0.031 \\
3000 & 25 & 3 & 3 & 0.000 & 0.035 & 0.991 & 0.994 & 0.984 & 0.007 & 0.106 \\
3000 & 45 & 4 & 2 & 0.000 & 0.017 & 0.993 & 0.991 & 0.999 & 0.007 & 0.043 \\
3000 & 15 & 2 & 2 & 0.250 & 0.020 & 0.997 & 0.991 & 1.000 & 0.003 & 0.028 \\
3000 & 15 & 2 & 3 & 0.250 & 0.068 & 0.994 & 0.997 & 0.992 & 0.002 & 0.121 \\
3000 & 25 & 3 & 2 & 0.250 & 0.016 & 0.995 & 0.992 & 1.000 & 0.006 & 0.032 \\
3000 & 25 & 3 & 3 & 0.250 & 0.111 & 0.949 & 0.950 & 0.948 & 0.057 & 0.249 \\
3000 & 45 & 4 & 2 & 0.250 & 0.014 & 0.995 & 0.994 & 0.998 & 0.004 & 0.037 \\
3000 & 15 & 2 & 2 & 0.500 & 0.021 & 0.996 & 0.987 & 1.000 & 0.004 & 0.029 \\
3000 & 15 & 2 & 3 & 0.500 & 0.064 & 0.993 & 0.995 & 0.992 & 0.003 & 0.113 \\
3000 & 25 & 3 & 2 & 0.500 & 0.138 & 0.915 & 0.877 & 0.973 & 0.128 & 0.153 \\
3000 & 25 & 3 & 3 & 0.500 & 0.218 & 0.864 & 0.833 & 0.942 & 0.175 & 0.326 \\
3000 & 45 & 4 & 2 & 0.500 & 0.049 & 0.972 & 0.966 & 0.987 & 0.030 & 0.096 \\
\bottomrule
\end{tabular}
  \begin{tablenotes}
    \footnotesize
    \item Values displayed for \(\beta\) parameters are the average, taken over all elements of the parameter, of the mean absolute error of estimation of each element over all replications. Values displayed for \(\delta\) parameters are the average, taken over all elements of the parameter, of the recovery accuracy of each element.
  \end{tablenotes}
  \end{threeparttable}
\end{center}
\end{table}

\begin{table}
\small
\begin{center}
\caption{\label{3-tab:simresults1} Parameter recovery, simulation study two, no missing data}
\vspace{0.5\baselineskip}
\begin{threeparttable}[t]
\begin{tabular}{r|rllr|rrrrrr}
\toprule
\(N\) & \(J\) & \(K\) & \(L\) & \(\rho\) & \(\gamma\) & \(\eta\) & \(R\) & \(\lambda\) & \(\xi\)\\
\midrule
125 & 15 & 2 & 2 & 0.000 &  & 0.005 & 0.013 & 0.068 & 0.063\\
125 & 15 & 2 & 3 & 0.000 & 0.044 & 0.024 & 0.059 & 0.132 & 0.054\\
125 & 25 & 3 & 2 & 0.000 &  & 0.006 & 0.019 & 0.070 & 0.056\\
125 & 25 & 3 & 3 & 0.000 & 0.019 & 0.006 & 0.011 & 0.064 & 0.047\\
125 & 45 & 4 & 2 & 0.000 &  & 0.009 & 0.025 & 0.080 & 0.052\\
125 & 15 & 2 & 2 & 0.250 &  & 0.005 & 0.012 & 0.066 & 0.058\\
125 & 15 & 2 & 3 & 0.250 & 0.023 & 0.006 & 0.009 & 0.061 & 0.052\\
125 & 25 & 3 & 2 & 0.250 &  & 0.005 & 0.014 & 0.069 & 0.052\\
125 & 25 & 3 & 3 & 0.250 & 0.097 & 0.035 & 0.077 & 0.104 & 0.051\\
125 & 45 & 4 & 2 & 0.250 &  & 0.006 & 0.017 & 0.068 & 0.051\\
125 & 15 & 2 & 2 & 0.500 &  & 0.005 & 0.009 & 0.064 & 0.062\\
125 & 15 & 2 & 3 & 0.500 & 0.022 & 0.006 & 0.007 & 0.065 & 0.052\\
125 & 25 & 3 & 2 & 0.500 &  & 0.036 & 0.169 & 0.172 & 0.061\\
125 & 25 & 3 & 3 & 0.500 & 0.471 & 0.070 & 0.305 & 0.221 & 0.067\\
125 & 45 & 4 & 2 & 0.500 &  & 0.017 & 0.063 & 0.117 & 0.061\\
250 & 15 & 2 & 2 & 0.000 &  & 0.004 & 0.009 & 0.055 & 0.047\\
250 & 15 & 2 & 3 & 0.000 & 0.040 & 0.019 & 0.047 & 0.111 & 0.037\\
250 & 25 & 3 & 2 & 0.000 &  & 0.004 & 0.015 & 0.056 & 0.039\\
250 & 25 & 3 & 3 & 0.000 & 0.013 & 0.004 & 0.008 & 0.051 & 0.033\\
250 & 45 & 4 & 2 & 0.000 &  & 0.006 & 0.018 & 0.062 & 0.038\\
250 & 15 & 2 & 2 & 0.250 &  & 0.004 & 0.009 & 0.055 & 0.045\\
250 & 15 & 2 & 3 & 0.250 & 0.017 & 0.004 & 0.006 & 0.050 & 0.035\\
250 & 25 & 3 & 2 & 0.250 &  & 0.005 & 0.014 & 0.059 & 0.038\\
250 & 25 & 3 & 3 & 0.250 & 0.081 & 0.030 & 0.066 & 0.086 & 0.038\\
250 & 45 & 4 & 2 & 0.250 &  & 0.005 & 0.019 & 0.059 & 0.038\\
250 & 15 & 2 & 2 & 0.500 &  & 0.004 & 0.007 & 0.050 & 0.045\\
250 & 15 & 2 & 3 & 0.500 & 0.014 & 0.004 & 0.004 & 0.052 & 0.038\\
250 & 25 & 3 & 2 & 0.500 &  & 0.032 & 0.152 & 0.164 & 0.045\\
250 & 25 & 3 & 3 & 0.500 & 0.484 & 0.078 & 0.343 & 0.258 & 0.048\\
250 & 45 & 4 & 2 & 0.500 &  & 0.017 & 0.071 & 0.105 & 0.043\\
500 & 15 & 2 & 2 & 0.000 &  & 0.003 & 0.007 & 0.041 & 0.030\\
500 & 15 & 2 & 3 & 0.000 & 0.042 & 0.021 & 0.054 & 0.114 & 0.026\\
500 & 25 & 3 & 2 & 0.000 &  & 0.003 & 0.008 & 0.044 & 0.028\\
500 & 25 & 3 & 3 & 0.000 & 0.010 & 0.003 & 0.006 & 0.041 & 0.023\\
500 & 45 & 4 & 2 & 0.000 &  & 0.010 & 0.024 & 0.064 & 0.026\\
500 & 15 & 2 & 2 & 0.250 &  & 0.003 & 0.006 & 0.043 & 0.031\\
500 & 15 & 2 & 3 & 0.250 & 0.012 & 0.003 & 0.004 & 0.039 & 0.024\\
500 & 25 & 3 & 2 & 0.250 &  & 0.002 & 0.007 & 0.045 & 0.026\\
500 & 25 & 3 & 3 & 0.250 & 0.099 & 0.033 & 0.071 & 0.094 & 0.029\\
500 & 45 & 4 & 2 & 0.250 &  & 0.003 & 0.011 & 0.047 & 0.026\\
500 & 15 & 2 & 2 & 0.500 &  & 0.003 & 0.005 & 0.042 & 0.031\\
500 & 15 & 2 & 3 & 0.500 & 0.011 & 0.003 & 0.004 & 0.042 & 0.027\\
500 & 25 & 3 & 2 & 0.500 &  & 0.035 & 0.175 & 0.161 & 0.030\\
500 & 25 & 3 & 3 & 0.500 & 0.477 & 0.075 & 0.345 & 0.233 & 0.037\\
500 & 45 & 4 & 2 & 0.500 &  & 0.014 & 0.058 & 0.090 & 0.032\\
\bottomrule
\end{tabular}
  \begin{tablenotes}
    \footnotesize
    \item Values displayed for all columns are the average, taken over all elements of the parameter, of the mean absolute error of estimation of that element over all replications.
  \end{tablenotes}
  \end{threeparttable}
\end{center}
\end{table}

\begin{table}
\small
\begin{center}
\caption{\label{3-tab:simresults1contd} Parameter recovery, simulation study two, no missing data (contd.)}
\vspace{0.5\baselineskip}
\begin{threeparttable}[t]
\begin{tabular}{r|rllr|rrrrrr}
\toprule
\(N\) & \(J\) & \(K\) & \(L\) & \(\rho\) & \(\beta\) & \(\delta\) & \(\delta^0\) & \(\delta^1\) & \(\beta^0\) & \(\beta^1\) \\
\midrule
125 & 15 & 2 & 2 & 0.000 & 0.032 & 0.995 & 0.984 & 1.000 & 0.007 & 0.044 \\
125 & 15 & 2 & 3 & 0.000 & 0.147 & 0.974 & 0.966 & 0.980 & 0.051 & 0.223 \\
125 & 25 & 3 & 2 & 0.000 & 0.025 & 0.994 & 0.990 & 1.000 & 0.009 & 0.048 \\
125 & 25 & 3 & 3 & 0.000 & 0.043 & 0.989 & 0.997 & 0.970 & 0.004 & 0.141 \\
125 & 45 & 4 & 2 & 0.000 & 0.030 & 0.990 & 0.987 & 0.997 & 0.013 & 0.072 \\
125 & 15 & 2 & 2 & 0.250 & 0.032 & 0.994 & 0.983 & 1.000 & 0.007 & 0.045 \\
125 & 15 & 2 & 3 & 0.250 & 0.095 & 0.992 & 0.996 & 0.989 & 0.004 & 0.167 \\
125 & 25 & 3 & 2 & 0.250 & 0.020 & 0.997 & 0.995 & 1.000 & 0.003 & 0.044 \\
125 & 25 & 3 & 3 & 0.250 & 0.117 & 0.951 & 0.955 & 0.942 & 0.053 & 0.277 \\
125 & 45 & 4 & 2 & 0.250 & 0.017 & 0.996 & 0.994 & 0.999 & 0.005 & 0.049 \\
125 & 15 & 2 & 2 & 0.500 & 0.032 & 0.992 & 0.978 & 1.000 & 0.009 & 0.044 \\
125 & 15 & 2 & 3 & 0.500 & 0.080 & 0.990 & 0.990 & 0.990 & 0.007 & 0.139 \\
125 & 25 & 3 & 2 & 0.500 & 0.151 & 0.915 & 0.878 & 0.972 & 0.136 & 0.175 \\
125 & 25 & 3 & 3 & 0.500 & 0.213 & 0.881 & 0.858 & 0.936 & 0.163 & 0.341 \\
125 & 45 & 4 & 2 & 0.500 & 0.059 & 0.970 & 0.963 & 0.990 & 0.036 & 0.117 \\
250 & 15 & 2 & 2 & 0.000 & 0.022 & 0.996 & 0.987 & 1.000 & 0.004 & 0.030 \\
250 & 15 & 2 & 3 & 0.000 & 0.121 & 0.981 & 0.974 & 0.986 & 0.040 & 0.186 \\
250 & 25 & 3 & 2 & 0.000 & 0.019 & 0.994 & 0.990 & 1.000 & 0.008 & 0.034 \\
250 & 25 & 3 & 3 & 0.000 & 0.031 & 0.995 & 0.998 & 0.985 & 0.002 & 0.104 \\
250 & 45 & 4 & 2 & 0.000 & 0.020 & 0.993 & 0.991 & 0.998 & 0.009 & 0.048 \\
250 & 15 & 2 & 2 & 0.250 & 0.022 & 0.997 & 0.992 & 1.000 & 0.004 & 0.031 \\
250 & 15 & 2 & 3 & 0.250 & 0.071 & 0.993 & 0.996 & 0.991 & 0.003 & 0.126 \\
250 & 25 & 3 & 2 & 0.250 & 0.019 & 0.995 & 0.991 & 1.000 & 0.008 & 0.035 \\
250 & 25 & 3 & 3 & 0.250 & 0.098 & 0.957 & 0.960 & 0.952 & 0.044 & 0.234 \\
250 & 45 & 4 & 2 & 0.250 & 0.018 & 0.991 & 0.988 & 0.998 & 0.009 & 0.042 \\
250 & 15 & 2 & 2 & 0.500 & 0.023 & 0.997 & 0.991 & 1.000 & 0.004 & 0.032 \\
250 & 15 & 2 & 3 & 0.500 & 0.070 & 0.992 & 0.995 & 0.989 & 0.004 & 0.123 \\
250 & 25 & 3 & 2 & 0.500 & 0.140 & 0.916 & 0.878 & 0.974 & 0.136 & 0.145 \\
250 & 25 & 3 & 3 & 0.500 & 0.264 & 0.850 & 0.816 & 0.934 & 0.214 & 0.390 \\
250 & 45 & 4 & 2 & 0.500 & 0.062 & 0.961 & 0.950 & 0.988 & 0.043 & 0.111 \\
500 & 15 & 2 & 2 & 0.000 & 0.015 & 0.998 & 0.994 & 1.000 & 0.002 & 0.022 \\
500 & 15 & 2 & 3 & 0.000 & 0.117 & 0.983 & 0.974 & 0.990 & 0.046 & 0.174 \\
500 & 25 & 3 & 2 & 0.000 & 0.010 & 0.999 & 0.998 & 1.000 & 0.001 & 0.023 \\
500 & 25 & 3 & 3 & 0.000 & 0.023 & 0.996 & 0.998 & 0.992 & 0.001 & 0.077 \\
500 & 45 & 4 & 2 & 0.000 & 0.036 & 0.980 & 0.975 & 0.992 & 0.024 & 0.066 \\
500 & 15 & 2 & 2 & 0.250 & 0.016 & 0.997 & 0.992 & 1.000 & 0.002 & 0.022 \\
500 & 15 & 2 & 3 & 0.250 & 0.054 & 0.995 & 0.997 & 0.993 & 0.002 & 0.096 \\
500 & 25 & 3 & 2 & 0.250 & 0.010 & 0.999 & 0.998 & 1.000 & 0.001 & 0.023 \\
500 & 25 & 3 & 3 & 0.250 & 0.143 & 0.937 & 0.930 & 0.953 & 0.091 & 0.276 \\
500 & 45 & 4 & 2 & 0.250 & 0.011 & 0.995 & 0.993 & 1.000 & 0.005 & 0.028 \\
500 & 15 & 2 & 2 & 0.500 & 0.016 & 0.997 & 0.991 & 1.000 & 0.003 & 0.022 \\
500 & 15 & 2 & 3 & 0.500 & 0.056 & 0.993 & 0.997 & 0.990 & 0.002 & 0.100 \\
500 & 25 & 3 & 2 & 0.500 & 0.150 & 0.897 & 0.848 & 0.970 & 0.149 & 0.152 \\
500 & 25 & 3 & 3 & 0.500 & 0.234 & 0.848 & 0.811 & 0.942 & 0.195 & 0.333 \\
500 & 45 & 4 & 2 & 0.500 & 0.048 & 0.969 & 0.962 & 0.988 & 0.032 & 0.087 \\
\bottomrule
\end{tabular}
  \begin{tablenotes}
    \footnotesize
    \item Values displayed for \(\beta\) parameters are the average, taken over all elements of the parameter, of the mean absolute error of estimation of each element over all replications. Values displayed for \(\delta\) parameters are the average, taken over all elements of the parameter, of the recovery accuracy of each element over all replications.
  \end{tablenotes}
  \end{threeparttable}
\end{center}
\end{table}

\begin{table}
\small
\begin{center}
\caption{\label{3-tab:simresults2} Parameter recovery, simulation study two, 10\% missing data}
\vspace{0.5\baselineskip}
\begin{threeparttable}[t]
\begin{tabular}{r|rllr|rrrrrr}
\toprule
\(N\) & \(J\) & \(K\) & \(L\) & \(\rho\) & \(\gamma\) & \(\eta\) & \(R\) & \(\lambda\) & \(\xi\) \\
\midrule
125 & 15 & 2 & 2 & 0.000 &  & 0.005 & 0.014 & 0.072 & 0.070\\
125 & 15 & 2 & 3 & 0.000 & 0.042 & 0.022 & 0.056 & 0.125 & 0.058\\
125 & 25 & 3 & 2 & 0.000 &  & 0.005 & 0.017 & 0.076 & 0.070\\
125 & 25 & 3 & 3 & 0.000 & 0.021 & 0.006 & 0.012 & 0.068 & 0.050\\
125 & 45 & 4 & 2 & 0.000 &  & 0.009 & 0.028 & 0.084 & 0.083\\
125 & 15 & 2 & 2 & 0.250 &  & 0.005 & 0.013 & 0.070 & 0.064\\
125 & 15 & 2 & 3 & 0.250 & 0.025 & 0.006 & 0.010 & 0.064 & 0.055\\
125 & 25 & 3 & 2 & 0.250 &  & 0.005 & 0.016 & 0.074 & 0.066\\
125 & 25 & 3 & 3 & 0.250 & 0.113 & 0.036 & 0.075 & 0.109 & 0.063\\
125 & 45 & 4 & 2 & 0.250 &  & 0.006 & 0.023 & 0.082 & 0.078\\
125 & 15 & 2 & 2 & 0.500 &  & 0.006 & 0.009 & 0.072 & 0.069\\
125 & 15 & 2 & 3 & 0.500 & 0.024 & 0.007 & 0.008 & 0.068 & 0.054\\
125 & 25 & 3 & 2 & 0.500 &  & 0.034 & 0.161 & 0.172 & 0.073\\
125 & 25 & 3 & 3 & 0.500 & 0.531 & 0.074 & 0.293 & 0.226 & 0.077\\
125 & 45 & 4 & 2 & 0.500 &  & 0.017 & 0.071 & 0.122 & 0.090\\
250 & 15 & 2 & 2 & 0.000 &  & 0.004 & 0.009 & 0.058 & 0.052\\
250 & 15 & 2 & 3 & 0.000 & 0.028 & 0.013 & 0.029 & 0.085 & 0.043\\
250 & 25 & 3 & 2 & 0.000 &  & 0.004 & 0.012 & 0.061 & 0.058\\
250 & 25 & 3 & 3 & 0.000 & 0.016 & 0.004 & 0.009 & 0.055 & 0.035\\
250 & 45 & 4 & 2 & 0.000 &  & 0.007 & 0.022 & 0.071 & 0.073\\
250 & 15 & 2 & 2 & 0.250 &  & 0.004 & 0.010 & 0.057 & 0.050\\
250 & 15 & 2 & 3 & 0.250 & 0.019 & 0.004 & 0.007 & 0.052 & 0.039\\
250 & 25 & 3 & 2 & 0.250 &  & 0.005 & 0.014 & 0.064 & 0.057\\
250 & 25 & 3 & 3 & 0.250 & 0.086 & 0.033 & 0.070 & 0.093 & 0.048\\
250 & 45 & 4 & 2 & 0.250 &  & 0.005 & 0.022 & 0.074 & 0.068\\
250 & 15 & 2 & 2 & 0.500 &  & 0.004 & 0.007 & 0.054 & 0.050\\
250 & 15 & 2 & 3 & 0.500 & 0.016 & 0.005 & 0.005 & 0.052 & 0.039\\
250 & 25 & 3 & 2 & 0.500 &  & 0.030 & 0.142 & 0.161 & 0.066\\
250 & 25 & 3 & 3 & 0.500 & 0.501 & 0.081 & 0.329 & 0.262 & 0.059\\
250 & 45 & 4 & 2 & 0.500 &  & 0.017 & 0.084 & 0.113 & 0.077\\
500 & 15 & 2 & 2 & 0.000 &  & 0.003 & 0.007 & 0.046 & 0.038\\
500 & 15 & 2 & 3 & 0.000 & 0.036 & 0.018 & 0.048 & 0.104 & 0.031\\
500 & 25 & 3 & 2 & 0.000 &  & 0.003 & 0.009 & 0.050 & 0.046\\
500 & 25 & 3 & 3 & 0.000 & 0.013 & 0.003 & 0.007 & 0.044 & 0.025\\
500 & 45 & 4 & 2 & 0.000 &  & 0.010 & 0.029 & 0.073 & 0.067\\
500 & 15 & 2 & 2 & 0.250 &  & 0.003 & 0.006 & 0.046 & 0.037\\
500 & 15 & 2 & 3 & 0.250 & 0.014 & 0.003 & 0.005 & 0.041 & 0.028\\
500 & 25 & 3 & 2 & 0.250 &  & 0.003 & 0.008 & 0.049 & 0.046\\
500 & 25 & 3 & 3 & 0.250 & 0.097 & 0.031 & 0.070 & 0.093 & 0.042\\
500 & 45 & 4 & 2 & 0.250 &  & 0.003 & 0.018 & 0.065 & 0.059\\
500 & 15 & 2 & 2 & 0.500 &  & 0.003 & 0.005 & 0.047 & 0.041\\
500 & 15 & 2 & 3 & 0.500 & 0.012 & 0.003 & 0.004 & 0.044 & 0.029\\
500 & 25 & 3 & 2 & 0.500 &  & 0.034 & 0.170 & 0.164 & 0.058\\
500 & 25 & 3 & 3 & 0.500 & 0.479 & 0.079 & 0.336 & 0.235 & 0.048\\
500 & 45 & 4 & 2 & 0.500 &  & 0.013 & 0.064 & 0.100 & 0.072\\
\bottomrule
\end{tabular}
  \begin{tablenotes}
    \footnotesize
    \item Values displayed for all columns are the average, taken over all elements of the parameter, of the mean absolute error of estimation of that element over all replications.
  \end{tablenotes}
  \end{threeparttable}
\end{center}
\end{table}

\begin{table}
\small
\begin{center}
\caption{\label{3-tab:simresults2contd} Parameter recovery, simulation study two, 10\% missing data (contd.)}
\vspace{0.5\baselineskip}
\begin{threeparttable}[t]
\begin{tabular}{r|rllr|rrrrrr}
\toprule
\(N\) & \(J\) & \(K\) & \(L\) & \(\rho\) & \(\beta\) & \(\delta\) & \(\delta^0\) & \(\delta^1\) & \(\beta^0\) & \(\beta^1\) \\
\midrule
125 & 15 & 2 & 2 & 0.000 & 0.034 & 0.994 & 0.983 & 1.000 & 0.008 & 0.047 \\
125 & 15 & 2 & 3 & 0.000 & 0.145 & 0.973 & 0.968 & 0.977 & 0.046 & 0.223 \\
125 & 25 & 3 & 2 & 0.000 & 0.022 & 0.997 & 0.995 & 1.000 & 0.004 & 0.050 \\
125 & 25 & 3 & 3 & 0.000 & 0.044 & 0.990 & 0.997 & 0.971 & 0.004 & 0.143 \\
125 & 45 & 4 & 2 & 0.000 & 0.031 & 0.989 & 0.987 & 0.997 & 0.014 & 0.073 \\
125 & 15 & 2 & 2 & 0.250 & 0.034 & 0.994 & 0.982 & 1.000 & 0.008 & 0.047 \\
125 & 15 & 2 & 3 & 0.250 & 0.097 & 0.992 & 0.996 & 0.989 & 0.005 & 0.171 \\
125 & 25 & 3 & 2 & 0.250 & 0.021 & 0.997 & 0.996 & 1.000 & 0.004 & 0.047 \\
125 & 25 & 3 & 3 & 0.250 & 0.119 & 0.952 & 0.958 & 0.935 & 0.049 & 0.297 \\
125 & 45 & 4 & 2 & 0.250 & 0.020 & 0.994 & 0.992 & 0.998 & 0.006 & 0.055 \\
125 & 15 & 2 & 2 & 0.500 & 0.034 & 0.991 & 0.972 & 1.000 & 0.010 & 0.047 \\
125 & 15 & 2 & 3 & 0.500 & 0.084 & 0.990 & 0.990 & 0.990 & 0.008 & 0.145 \\
125 & 25 & 3 & 2 & 0.500 & 0.144 & 0.924 & 0.892 & 0.973 & 0.125 & 0.173 \\
125 & 25 & 3 & 3 & 0.500 & 0.217 & 0.886 & 0.870 & 0.925 & 0.157 & 0.370 \\
125 & 45 & 4 & 2 & 0.500 & 0.055 & 0.976 & 0.970 & 0.991 & 0.030 & 0.119 \\
250 & 15 & 2 & 2 & 0.000 & 0.023 & 0.995 & 0.985 & 1.000 & 0.004 & 0.032 \\
250 & 15 & 2 & 3 & 0.000 & 0.097 & 0.985 & 0.983 & 0.988 & 0.023 & 0.157 \\
250 & 25 & 3 & 2 & 0.000 & 0.015 & 0.998 & 0.996 & 1.000 & 0.002 & 0.035 \\
250 & 25 & 3 & 3 & 0.000 & 0.032 & 0.994 & 0.998 & 0.983 & 0.002 & 0.108 \\
250 & 45 & 4 & 2 & 0.000 & 0.023 & 0.992 & 0.990 & 0.996 & 0.010 & 0.057 \\
250 & 15 & 2 & 2 & 0.250 & 0.023 & 0.996 & 0.988 & 1.000 & 0.004 & 0.033 \\
250 & 15 & 2 & 3 & 0.250 & 0.074 & 0.993 & 0.996 & 0.991 & 0.003 & 0.131 \\
250 & 25 & 3 & 2 & 0.250 & 0.019 & 0.995 & 0.992 & 1.000 & 0.007 & 0.038 \\
250 & 25 & 3 & 3 & 0.250 & 0.104 & 0.956 & 0.959 & 0.947 & 0.044 & 0.255 \\
250 & 45 & 4 & 2 & 0.250 & 0.015 & 0.996 & 0.995 & 0.999 & 0.004 & 0.040 \\
250 & 15 & 2 & 2 & 0.500 & 0.024 & 0.997 & 0.991 & 1.000 & 0.004 & 0.034 \\
250 & 15 & 2 & 3 & 0.500 & 0.070 & 0.992 & 0.994 & 0.989 & 0.004 & 0.123 \\
250 & 25 & 3 & 2 & 0.500 & 0.131 & 0.924 & 0.889 & 0.975 & 0.122 & 0.145 \\
250 & 25 & 3 & 3 & 0.500 & 0.251 & 0.848 & 0.815 & 0.932 & 0.200 & 0.379 \\
250 & 45 & 4 & 2 & 0.500 & 0.061 & 0.961 & 0.951 & 0.987 & 0.042 & 0.111 \\
500 & 15 & 2 & 2 & 0.000 & 0.016 & 0.998 & 0.994 & 1.000 & 0.002 & 0.023 \\
500 & 15 & 2 & 3 & 0.000 & 0.110 & 0.984 & 0.977 & 0.989 & 0.040 & 0.167 \\
500 & 25 & 3 & 2 & 0.000 & 0.010 & 0.999 & 0.998 & 1.000 & 0.001 & 0.024 \\
500 & 25 & 3 & 3 & 0.000 & 0.024 & 0.997 & 0.999 & 0.991 & 0.001 & 0.080 \\
500 & 45 & 4 & 2 & 0.000 & 0.037 & 0.978 & 0.973 & 0.993 & 0.025 & 0.065 \\
500 & 15 & 2 & 2 & 0.250 & 0.016 & 0.997 & 0.992 & 1.000 & 0.003 & 0.023 \\
500 & 15 & 2 & 3 & 0.250 & 0.058 & 0.995 & 0.996 & 0.993 & 0.002 & 0.103 \\
500 & 25 & 3 & 2 & 0.250 & 0.010 & 0.999 & 0.998 & 1.000 & 0.001 & 0.024 \\
500 & 25 & 3 & 3 & 0.250 & 0.117 & 0.949 & 0.947 & 0.954 & 0.064 & 0.249 \\
500 & 45 & 4 & 2 & 0.250 & 0.010 & 0.997 & 0.995 & 1.000 & 0.004 & 0.028 \\
500 & 15 & 2 & 2 & 0.500 & 0.017 & 0.996 & 0.987 & 1.000 & 0.003 & 0.023 \\
500 & 15 & 2 & 3 & 0.500 & 0.055 & 0.994 & 0.995 & 0.993 & 0.002 & 0.097 \\
500 & 25 & 3 & 2 & 0.500 & 0.143 & 0.907 & 0.865 & 0.971 & 0.137 & 0.153 \\
500 & 25 & 3 & 3 & 0.500 & 0.233 & 0.848 & 0.815 & 0.933 & 0.184 & 0.357 \\
500 & 45 & 4 & 2 & 0.500 & 0.043 & 0.973 & 0.966 & 0.991 & 0.027 & 0.082 \\
\bottomrule
\end{tabular}
  \begin{tablenotes}
    \footnotesize
    \item Values displayed for \(\beta\) parameters are the average, taken over all elements of the parameter, of the mean absolute error of estimation of each element over all replications. Values displayed for \(\delta\) parameters are the average, taken over all elements of the parameter, of the recovery accuracy of each element over all replications.
  \end{tablenotes}
  \end{threeparttable}
\end{center}
\end{table}

\begin{table}
\small
\begin{center}
\caption{\label{3-tab:simresults3} Parameter recovery, simulation study two, 25\% missing data}
\vspace{0.5\baselineskip}
\begin{threeparttable}[t]
\begin{tabular}{r|rllr|rrrrrr}
\toprule
\(N\) & \(J\) & \(K\) & \(L\) & \(\rho\) & \(\gamma\) & \(\eta\) & \(R\) & \(\lambda\) & \(\xi\) \\
\midrule
125 & 15 & 2 & 2 & 0.000 &  & 0.006 & 0.016 & 0.087 & 0.091\\
125 & 15 & 2 & 3 & 0.000 & 0.037 & 0.016 & 0.037 & 0.121 & 0.074\\
125 & 25 & 3 & 2 & 0.000 &  & 0.006 & 0.019 & 0.089 & 0.113\\
125 & 25 & 3 & 3 & 0.000 & 0.026 & 0.007 & 0.015 & 0.078 & 0.060\\
125 & 45 & 4 & 2 & 0.000 &  & 0.008 & 0.029 & 0.099 & 0.173\\
125 & 15 & 2 & 2 & 0.250 &  & 0.006 & 0.014 & 0.081 & 0.080\\
125 & 15 & 2 & 3 & 0.250 & 0.030 & 0.007 & 0.012 & 0.066 & 0.067\\
125 & 25 & 3 & 2 & 0.250 &  & 0.006 & 0.018 & 0.085 & 0.115\\
125 & 25 & 3 & 3 & 0.250 & 0.116 & 0.040 & 0.070 & 0.120 & 0.094\\
125 & 45 & 4 & 2 & 0.250 &  & 0.006 & 0.032 & 0.121 & 0.157\\
125 & 15 & 2 & 2 & 0.500 &  & 0.006 & 0.011 & 0.087 & 0.089\\
125 & 15 & 2 & 3 & 0.500 & 0.026 & 0.007 & 0.010 & 0.070 & 0.061\\
125 & 25 & 3 & 2 & 0.500 &  & 0.029 & 0.138 & 0.171 & 0.132\\
125 & 25 & 3 & 3 & 0.500 & 0.548 & 0.080 & 0.247 & 0.223 & 0.113\\
125 & 45 & 4 & 2 & 0.500 &  & 0.016 & 0.090 & 0.145 & 0.174\\
250 & 15 & 2 & 2 & 0.000 &  & 0.004 & 0.009 & 0.068 & 0.067\\
250 & 15 & 2 & 3 & 0.000 & 0.026 & 0.012 & 0.028 & 0.094 & 0.056\\
250 & 25 & 3 & 2 & 0.000 &  & 0.004 & 0.015 & 0.078 & 0.106\\
250 & 25 & 3 & 3 & 0.000 & 0.020 & 0.005 & 0.011 & 0.061 & 0.044\\
250 & 45 & 4 & 2 & 0.000 &  & 0.007 & 0.024 & 0.095 & 0.168\\
250 & 15 & 2 & 2 & 0.250 &  & 0.004 & 0.011 & 0.066 & 0.063\\
250 & 15 & 2 & 3 & 0.250 & 0.021 & 0.005 & 0.009 & 0.055 & 0.053\\
250 & 25 & 3 & 2 & 0.250 &  & 0.004 & 0.014 & 0.078 & 0.114\\
250 & 25 & 3 & 3 & 0.250 & 0.098 & 0.035 & 0.071 & 0.109 & 0.081\\
250 & 45 & 4 & 2 & 0.250 &  & 0.005 & 0.032 & 0.117 & 0.152\\
250 & 15 & 2 & 2 & 0.500 &  & 0.004 & 0.008 & 0.067 & 0.069\\
250 & 15 & 2 & 3 & 0.500 & 0.019 & 0.005 & 0.006 & 0.054 & 0.045\\
250 & 25 & 3 & 2 & 0.500 &  & 0.024 & 0.117 & 0.160 & 0.131\\
250 & 25 & 3 & 3 & 0.500 & 0.539 & 0.087 & 0.283 & 0.254 & 0.099\\
250 & 45 & 4 & 2 & 0.500 &  & 0.017 & 0.102 & 0.136 & 0.171\\
500 & 15 & 2 & 2 & 0.000 &  & 0.003 & 0.008 & 0.057 & 0.058\\
500 & 15 & 2 & 3 & 0.000 & 0.023 & 0.010 & 0.027 & 0.085 & 0.046\\
500 & 25 & 3 & 2 & 0.000 &  & 0.003 & 0.010 & 0.065 & 0.096\\
500 & 25 & 3 & 3 & 0.000 & 0.016 & 0.003 & 0.009 & 0.051 & 0.035\\
500 & 45 & 4 & 2 & 0.000 &  & 0.010 & 0.030 & 0.100 & 0.166\\
500 & 15 & 2 & 2 & 0.250 &  & 0.003 & 0.007 & 0.057 & 0.053\\
500 & 15 & 2 & 3 & 0.250 & 0.017 & 0.003 & 0.006 & 0.043 & 0.042\\
500 & 25 & 3 & 2 & 0.250 &  & 0.003 & 0.010 & 0.064 & 0.111\\
500 & 25 & 3 & 3 & 0.250 & 0.103 & 0.034 & 0.076 & 0.110 & 0.077\\
500 & 45 & 4 & 2 & 0.250 &  & 0.004 & 0.030 & 0.114 & 0.149\\
500 & 15 & 2 & 2 & 0.500 &  & 0.003 & 0.005 & 0.064 & 0.070\\
500 & 15 & 2 & 3 & 0.500 & 0.013 & 0.004 & 0.005 & 0.047 & 0.034\\
500 & 25 & 3 & 2 & 0.500 &  & 0.030 & 0.150 & 0.168 & 0.135\\
500 & 25 & 3 & 3 & 0.500 & 0.513 & 0.088 & 0.274 & 0.228 & 0.088\\
500 & 45 & 4 & 2 & 0.500 &  & 0.012 & 0.083 & 0.127 & 0.173\\
\bottomrule
\end{tabular}
  \begin{tablenotes}
    \footnotesize
    \item Values displayed for all columns are the average, taken over all elements of the parameter, of the mean absolute error of estimation of that element over all replications.
  \end{tablenotes}
  \end{threeparttable}
\end{center}
\end{table}

\begin{table}
\small
\begin{center}
\caption{\label{3-tab:simresults3contd} Parameter recovery, simulation study two, 25\% missing data (contd.)}
\vspace{0.5\baselineskip}
\begin{threeparttable}[t]
\begin{tabular}{r|rllr|rrrrrr}
\toprule
\(N\) & \(J\) & \(K\) & \(L\) & \(\rho\) & \(\beta\) & \(\delta\) & \(\delta^0\) & \(\delta^1\) & \(\beta^0\) & \(\beta^1\) \\
\midrule
125 & 15 & 2 & 2 & 0.000 & 0.037 & 0.993 & 0.980 & 1.000 & 0.009 & 0.051 \\
125 & 15 & 2 & 3 & 0.000 & 0.127 & 0.980 & 0.978 & 0.981 & 0.031 & 0.203 \\
125 & 25 & 3 & 2 & 0.000 & 0.024 & 0.997 & 0.995 & 1.000 & 0.005 & 0.054 \\
125 & 25 & 3 & 3 & 0.000 & 0.048 & 0.988 & 0.997 & 0.965 & 0.005 & 0.157 \\
125 & 45 & 4 & 2 & 0.000 & 0.028 & 0.991 & 0.989 & 0.998 & 0.011 & 0.072 \\
125 & 15 & 2 & 2 & 0.250 & 0.038 & 0.993 & 0.980 & 1.000 & 0.009 & 0.052 \\
125 & 15 & 2 & 3 & 0.250 & 0.104 & 0.992 & 0.995 & 0.990 & 0.005 & 0.182 \\
125 & 25 & 3 & 2 & 0.250 & 0.024 & 0.996 & 0.994 & 1.000 & 0.004 & 0.052 \\
125 & 25 & 3 & 3 & 0.250 & 0.127 & 0.949 & 0.958 & 0.927 & 0.047 & 0.330 \\
125 & 45 & 4 & 2 & 0.250 & 0.017 & 0.998 & 0.997 & 1.000 & 0.003 & 0.053 \\
125 & 15 & 2 & 2 & 0.500 & 0.038 & 0.990 & 0.970 & 1.000 & 0.012 & 0.052 \\
125 & 15 & 2 & 3 & 0.500 & 0.088 & 0.990 & 0.990 & 0.990 & 0.009 & 0.151 \\
125 & 25 & 3 & 2 & 0.500 & 0.123 & 0.940 & 0.915 & 0.978 & 0.102 & 0.156 \\
125 & 25 & 3 & 3 & 0.500 & 0.224 & 0.889 & 0.883 & 0.907 & 0.148 & 0.416 \\
125 & 45 & 4 & 2 & 0.500 & 0.053 & 0.980 & 0.976 & 0.991 & 0.027 & 0.121 \\
250 & 15 & 2 & 2 & 0.000 & 0.025 & 0.995 & 0.984 & 1.000 & 0.005 & 0.035 \\
250 & 15 & 2 & 3 & 0.000 & 0.099 & 0.985 & 0.985 & 0.986 & 0.021 & 0.162 \\
250 & 25 & 3 & 2 & 0.000 & 0.017 & 0.997 & 0.995 & 1.000 & 0.003 & 0.037 \\
250 & 25 & 3 & 3 & 0.000 & 0.036 & 0.992 & 0.998 & 0.978 & 0.003 & 0.120 \\
250 & 45 & 4 & 2 & 0.000 & 0.024 & 0.992 & 0.991 & 0.996 & 0.010 & 0.061 \\
250 & 15 & 2 & 2 & 0.250 & 0.025 & 0.997 & 0.990 & 1.000 & 0.005 & 0.036 \\
250 & 15 & 2 & 3 & 0.250 & 0.079 & 0.993 & 0.995 & 0.991 & 0.003 & 0.139 \\
250 & 25 & 3 & 2 & 0.250 & 0.018 & 0.996 & 0.993 & 1.000 & 0.005 & 0.038 \\
250 & 25 & 3 & 3 & 0.250 & 0.110 & 0.955 & 0.961 & 0.940 & 0.042 & 0.281 \\
250 & 45 & 4 & 2 & 0.250 & 0.016 & 0.995 & 0.994 & 0.998 & 0.005 & 0.044 \\
250 & 15 & 2 & 2 & 0.500 & 0.027 & 0.996 & 0.987 & 1.000 & 0.005 & 0.037 \\
250 & 15 & 2 & 3 & 0.500 & 0.072 & 0.991 & 0.993 & 0.990 & 0.005 & 0.126 \\
250 & 25 & 3 & 2 & 0.500 & 0.112 & 0.940 & 0.914 & 0.980 & 0.097 & 0.133 \\
250 & 25 & 3 & 3 & 0.500 & 0.252 & 0.857 & 0.837 & 0.908 & 0.186 & 0.419 \\
250 & 45 & 4 & 2 & 0.500 & 0.061 & 0.964 & 0.955 & 0.988 & 0.039 & 0.115 \\
500 & 15 & 2 & 2 & 0.000 & 0.018 & 0.998 & 0.993 & 1.000 & 0.003 & 0.026 \\
500 & 15 & 2 & 3 & 0.000 & 0.086 & 0.988 & 0.987 & 0.989 & 0.020 & 0.140 \\
500 & 25 & 3 & 2 & 0.000 & 0.012 & 0.998 & 0.997 & 1.000 & 0.001 & 0.027 \\
500 & 25 & 3 & 3 & 0.000 & 0.026 & 0.995 & 0.998 & 0.989 & 0.002 & 0.086 \\
500 & 45 & 4 & 2 & 0.000 & 0.034 & 0.982 & 0.977 & 0.994 & 0.022 & 0.065 \\
500 & 15 & 2 & 2 & 0.250 & 0.018 & 0.996 & 0.989 & 1.000 & 0.003 & 0.026 \\
500 & 15 & 2 & 3 & 0.250 & 0.062 & 0.994 & 0.996 & 0.993 & 0.002 & 0.110 \\
500 & 25 & 3 & 2 & 0.250 & 0.011 & 0.999 & 0.998 & 1.000 & 0.001 & 0.026 \\
500 & 25 & 3 & 3 & 0.250 & 0.122 & 0.953 & 0.954 & 0.948 & 0.061 & 0.276 \\
500 & 45 & 4 & 2 & 0.250 & 0.012 & 0.996 & 0.995 & 0.999 & 0.004 & 0.033 \\
500 & 15 & 2 & 2 & 0.500 & 0.018 & 0.996 & 0.989 & 1.000 & 0.003 & 0.026 \\
500 & 15 & 2 & 3 & 0.500 & 0.061 & 0.992 & 0.995 & 0.991 & 0.003 & 0.107 \\
500 & 25 & 3 & 2 & 0.500 & 0.124 & 0.924 & 0.890 & 0.976 & 0.115 & 0.138 \\
500 & 25 & 3 & 3 & 0.500 & 0.236 & 0.854 & 0.831 & 0.910 & 0.164 & 0.417 \\
500 & 45 & 4 & 2 & 0.500 & 0.038 & 0.980 & 0.975 & 0.993 & 0.022 & 0.078 \\
\bottomrule
\end{tabular}
  \begin{tablenotes}
    \footnotesize
    \item Values displayed for \(\beta\) parameters are the average, taken over all elements of the parameter, of the mean absolute error of estimation of each element over all replications. Values displayed for \(\delta\) parameters are the average, taken over all elements of the parameter, of the recovery accuracy of each element over all replications.
  \end{tablenotes}
  \end{threeparttable}
\end{center}
\end{table}

\begin{table}
\small
\begin{center}
\caption{\label{tab:sim1timings} Run time in minutes for one replication, simulation study one}
\vspace{0.5\baselineskip}
\begin{tabular}{lccc|cccc}
\toprule
      &       &       &            & \multicolumn{4}{c}{\(N\)} \\
\(J\) & \(K\) & \(L\) & \(\rho\)   & 250 & 500 & 1500 & 3000 \\
\midrule
15 & 2 & 2 & 0.000 & 0.837 & 1.720 &  7.135 & 12.769 \\
15 & 2 & 3 & 0.000 & 1.093 & 3.108 &  9.050 & 19.434 \\
25 & 3 & 2 & 0.000 & 2.108 & 4.346 & 12.936 & 27.174 \\
25 & 3 & 3 & 0.000 & 4.082 & 6.811 & 20.498 & 35.841 \\
45 & 4 & 2 & 0.000 & 4.510 & 7.853 & 23.255 & 42.831 \\
15 & 2 & 2 & 0.250 & 0.736 & 2.333 &  7.155 & 13.256 \\
15 & 2 & 3 & 0.250 & 1.049 & 3.525 &  8.610 & 19.550 \\
25 & 3 & 2 & 0.250 & 2.302 & 4.586 & 11.971 & 26.146 \\
25 & 3 & 3 & 0.250 & 3.720 & 6.880 & 19.108 & 39.587 \\
45 & 4 & 2 & 0.250 & 4.372 & 8.177 & 21.912 & 44.929 \\
15 & 2 & 2 & 0.500 & 1.251 & 2.382 &  6.690 & 13.461 \\
15 & 2 & 3 & 0.500 & 1.748 & 3.243 &  9.337 & 18.968 \\
25 & 3 & 2 & 0.500 & 2.350 & 4.463 & 13.858 & 26.841 \\
25 & 3 & 3 & 0.500 & 3.554 & 6.747 & 18.498 & 37.894 \\
45 & 4 & 2 & 0.500 & 4.035 & 7.894 & 22.399 & 45.233 \\
\bottomrule
\end{tabular}
\end{center}
\end{table}

\begin{table}
\small
\begin{center}
\caption{\label{tab:sim2timings1} Run time in minutes for one replication, simulation study two, no missing data}
\vspace{0.5\baselineskip}
\begin{tabular}{lccc|rrr}
\toprule
      &       &       &            & \multicolumn{3}{c}{\(N\)} \\
\(J\) & \(K\) & \(L\) & \(\rho\)   & 125 & 250 & 500 \\
\midrule
15 & 2 & 2 & 0.000 &  6.643 & 13.041 & 25.839 \\
15 & 2 & 3 & 0.000 &  9.232 & 17.963 & 35.094 \\
25 & 3 & 2 & 0.000 & 11.431 & 24.459 & 47.083 \\
25 & 3 & 3 & 0.000 & 17.802 & 36.123 & 73.727 \\
45 & 4 & 2 & 0.000 & 19.921 & 43.303 & 83.551 \\
15 & 2 & 2 & 0.250 &  6.472 & 12.851 & 26.553 \\
15 & 2 & 3 & 0.250 &  8.984 & 18.254 & 37.436 \\
25 & 3 & 2 & 0.250 & 11.671 & 24.399 & 47.519 \\
25 & 3 & 3 & 0.250 & 16.617 & 37.069 & 75.249 \\
45 & 4 & 2 & 0.250 & 19.896 & 41.895 & 86.916 \\
15 & 2 & 2 & 0.500 &  6.456 & 13.789 & 27.225 \\
15 & 2 & 3 & 0.500 &  9.575 & 18.843 & 38.113 \\
25 & 3 & 2 & 0.500 & 11.581 & 24.095 & 49.575 \\
25 & 3 & 3 & 0.500 & 17.792 & 35.091 & 72.375 \\
45 & 4 & 2 & 0.500 & 20.874 & 44.594 & 86.447 \\
\bottomrule
\end{tabular}
\end{center}
\end{table}

\begin{table}
\small
\begin{center}
\caption{\label{tab:sim2timings2} Run time in minutes for one replication, simulation study two, 10\% missing data}
\vspace{0.5\baselineskip}
\begin{tabular}{lccc|rrr}
\toprule
      &       &       &            & \multicolumn{3}{c}{\(N\)} \\
\(J\) & \(K\) & \(L\) & \(\rho\)   & 125 & 250 & 500 \\
\midrule
15 & 2 & 2 & 0.000 & 7.063 & 14.389 & 26.246 \\
15 & 2 & 3 & 0.000 & 9.820 & 20.723 & 37.883 \\
25 & 3 & 2 & 0.000 & 13.469 & 26.487 & 51.017 \\
25 & 3 & 3 & 0.000 & 20.010 & 40.209 & 78.351 \\
45 & 4 & 2 & 0.000 & 22.675 & 45.778 & 89.979 \\
15 & 2 & 2 & 0.250 & 6.898 & 14.047 & 27.977 \\
15 & 2 & 3 & 0.250 & 10.050 & 20.306 & 39.686 \\
25 & 3 & 2 & 0.250 & 13.512 & 26.757 & 53.323 \\
25 & 3 & 3 & 0.250 & 20.577 & 41.333 & 80.077 \\
45 & 4 & 2 & 0.250 & 22.775 & 46.790 & 89.617 \\
15 & 2 & 2 & 0.500 & 6.831 & 13.822 & 27.761 \\
15 & 2 & 3 & 0.500 & 9.804 & 21.147 & 39.774 \\
25 & 3 & 2 & 0.500 & 13.147 & 27.063 & 54.766 \\
25 & 3 & 3 & 0.500 & 19.856 & 40.323 & 79.222 \\
45 & 4 & 2 & 0.500 & 22.472 & 45.280 & 90.526 \\
\bottomrule
\end{tabular}
\end{center}
\end{table}

\begin{table}
\small
\begin{center}
\caption{\label{tab:sim2timings3} Run time in minutes for one replication, simulation study two, 25\% missing data}
\vspace{0.5\baselineskip}
\begin{tabular}{lccc|rrr}
\toprule
      &       &       &            & \multicolumn{3}{c}{\(N\)} \\
\(J\) & \(K\) & \(L\) & \(\rho\)   & 125 & 250 & 500 \\
\midrule
15 & 2 & 2 & 0.000 & 7.222 & 15.518 & 28.649 \\
15 & 2 & 3 & 0.000 & 9.949 & 20.752 & 39.413 \\
25 & 3 & 2 & 0.000 & 13.510 & 27.679 & 52.755 \\
25 & 3 & 3 & 0.000 & 20.458 & 40.719 & 80.557 \\
45 & 4 & 2 & 0.000 & 23.333 & 47.644 & 94.864 \\
15 & 2 & 2 & 0.250 & 7.217 & 15.038 & 27.172 \\
15 & 2 & 3 & 0.250 & 10.313 & 21.235 & 37.587 \\
25 & 3 & 2 & 0.250 & 13.769 & 27.716 & 51.922 \\
25 & 3 & 3 & 0.250 & 20.852 & 41.941 & 77.197 \\
45 & 4 & 2 & 0.250 & 23.691 & 47.649 & 88.378 \\
15 & 2 & 2 & 0.500 & 7.213 & 14.959 & 27.272 \\
15 & 2 & 3 & 0.500 & 10.186 & 19.668 & 39.221 \\
25 & 3 & 2 & 0.500 & 13.838 & 27.555 & 51.130 \\
25 & 3 & 3 & 0.500 & 20.264 & 40.087 & 77.899 \\
45 & 4 & 2 & 0.500 & 24.701 & 47.713 & 90.792 \\
\bottomrule
\end{tabular}
\end{center}
\end{table}

\section*{Supplementary Material I}

Supplementary Material I derives the conditional likelihood for this model, which is as follows: \begin{align}
  p(y_n \mid \theta, \alpha_n) & = \prod_{t=1}^T p(y_n^t \mid \theta, \alpha_n^t) \notag\\
  & = \prod_{t=1}^T \prod_{j=1}^J p(y_{nj}^t \mid \theta, \alpha_n^t) \notag\\
  & = \prod_{t=1}^T \prod_{j=1}^J \int p(y_{nj}^{\ast, t}, y_{nj}^t \mid \theta, \alpha_n^t) d y_{nj}^{\ast, t} \notag\\
  & = \prod_{t=1}^T \prod_{j=1}^J \int p(y_{nj}^t \mid y_{nj}^{\ast, t}, \theta, \alpha_n^t) p(y_{nj}^{\ast, t} \mid \theta, \alpha_n^t) d y_{nj}^{\ast, t} \notag\\
  & = \prod_{t=1}^T \prod_{j=1}^J \int p(y_{nj}^t \mid y_{nj}^{\ast, t}, \kappa_j) p(y_{nj}^{\ast, t} \mid \beta_j, \alpha_n^t) d y_{nj}^{\ast, t} \notag\\
  & = \prod_{t=1}^T \prod_{j=1}^J \int I(y_{nj}^{\ast, t} \in [\kappa_{y_{nj}^t}, \kappa_{y_{nj}^t + 1})) \cdot \phi(y_{nj}^{\ast, t}; d_n^t \beta_j, 1) d y_{nj}^{\ast, t} \notag\\
  & = \prod_{t=1}^T \prod_{j=1}^J \left[\Phi(\kappa_{y_{nj}^t + 1} - d_n^t \beta_j) - \Phi(\kappa_{y_{nj}^t} - d_n^t \beta_j)\right]
\end{align}

\section*{Supplementary Material J}

Supplementary Material J displays the 95\% equal-tail credible intervals for coefficients of \(\lambda\) and \(\xi\) for the education application.

\begin{table}[H]
\small
\begin{center}
\caption{95\% equal-tail credible intervals for \(\lambda\) coefficients, education application}
\vspace{0.5\baselineskip}
\begin{tabular}{lrrr}
\toprule
Attribute & Intercept & Diagnosis & Traditional \\
\midrule
1 & (0.57, 1.27)   & (-0.95, -0.36) & (-1.01, -0.39) \\
2 & (-0.48, -0.03) & (0.06, 0.54)   & (0.04, 0.53)   \\
3 & (-2.16, -1.50) & (0.37, 0.94)   & (0.21, 0.79)   \\
4 & (-1.36, -0.76) & (0.09, 0.64)   & (-0.19, 0.37)  \\
5 & (-0.38, 0.06)  & (0.02, 0.56)   & (-0.15, 0.37)  \\
6 & (-0.75, -0.31) & (0.07, 0.58)   & (-0.03, 0.48)  \\
\bottomrule
\end{tabular}
\end{center}
\end{table}

\begin{table}[H]
\small
\begin{center}
\caption{95\% equal-tail credible intervals for \(\xi\) coefficients, education application}
\vspace{0.5\baselineskip}
\begin{tabular}{lcccccc}
\toprule
& \multicolumn{6}{c}{Attributes at time \(t\)} \\
Attr. at \(t - 1\) & 1 & 2 & 3 & 4 & 5 & 6 \\
\midrule
1 & (0.43, 1.18)   & (-0.66, 0.04) & (-0.50, 0.24) & (-0.60, 0.20) & (0.26, 1.07)   & (0.04, 0.74) \\
2 & (-0.53, 0.22)  & (0.95, 1.65)  & (-0.58, 0.10) & (-0.59, 0.16) & (0.43, 1.27)   & (-0.07, 0.68) \\
3 & (-0.04, 0.72)  & (-0.59, 0.09) & (0.86, 1.60)  & (-0.26, 0.52) & (-0.08, 0.82)  & (0.03, 0.85) \\
4 & (-0.57, 0.12)  & (-0.22, 0.53) & (-0.52, 0.23) & (0.78, 1.48)  & (0.05, 0.89)   & (0.14, 0.85) \\
5 & (-0.88, -0.07) & (-0.57, 0.02) & (0.30, 0.93)  & (0.00, 0.80)  & (2.13, 2.95)   & (0.80, 1.46) \\
6 & (-0.88, -0.13) & (-0.16, 0.48) & (-0.36, 0.35) & (0.14, 0.80)  & (0.94, 1.82)   & (1.75, 2.52) \\
Intercept & (-1.15, -0.02) & (-0.36, 0.59) & (0.21, 1.31)  & (-0.71, 0.31) & (-3.27, -2.02) & (-2.47, -1.33) \\
\bottomrule
\end{tabular}
\end{center}
\end{table}

\section*{Supplementary Material K}

Supplementary Material K contains additional information on the emotional state application.

\subsection*{Details on Handling Time Points for Missingness}

Since the particular positions of the missing time points for each respondent for each day were not recorded in the dataset (only the date and time of each request was recorded), users of this dataset can see for each respondent how many responses there were per day but not the indices of missing positions. We thus computed an estimate of these missing time points as follows.

First, we created an empirical distribution of \(10^6\) draws of six time points from a process that approximates the one described above that was used by the researchers who collected the data. We converted the time points to integer values, where rounding to the nearest integer implies a binning of the data. For each respondent, for each day where the respondent had at least one time point missing, we found the subset of vectors from the empirical distribution which contain the observed vector of integer time points, and took one sample from the empirical distribution to get a hypothetical full vector of six integer time points for that respondent for that day. That gave us a vector (hypothesized value) of missing time positions and missing time points for each respondent-day that had missing data.

\subsection*{Initializations of Missing Data Values}

For initial values of missing data rows, since the data was collected over a period of five days with six possible measurements a day (that is, if there were six measurements in a day there was no missing data for that day), if \(t\) is the first possible measurement for a day, we initialized \(Y_n^t\) with the first non-missing value following \(t\) for that day, and then for each \(t \in (t_n^1, \ldots, t_n^i)\), if \(t\) is the second or later possible measurement for a day, we initialized \(Y_n^t\) with the first non-missing value preceding \(t\) within that day.

\subsection*{Further Data Analysis Results}

\begin{table}
\small
\begin{center}
\caption{\label{3-tab:lambda} \(\lambda\) coefficients estimates, emotional state application}
\vspace{0.5\baselineskip}
\begin{tabular}{lrrr}
\toprule
& \multicolumn{3}{c}{Attribute} \\
 & 1 & 2 & 3 \\
\midrule
Intercept &  0.32 &  0.51 &  0.54 \\
Presence  & -0.04 &  0.11 & -0.05 \\
Search    &  0.10 & -0.13 &  0.06 \\
Afternoon &  0.21 & -0.10 & -0.01 \\
Evening   &  0.45 & -0.22 & -0.10 \\
\bottomrule
\end{tabular}
\end{center}
\end{table}

\begin{table}
\small
\begin{center}
\caption{\label{3-tab:lambdacredint} 95\% equal-tail credible intervals for \(\lambda\) coefficients, emotional state application}
\vspace{0.5\baselineskip}
\begin{tabular}{lccc}
\toprule
& \multicolumn{3}{c}{Attribute} \\
 & 1 & 2 & 3 \\
\midrule
Intercept & (0.06, 0.59) & (0.23, 0.79) & (0.31, 0.77) \\
Presence  & (-0.09, 0.01) & (0.06, 0.17) & (-0.1, -0.01) \\
Search    & (0.04, 0.15) & (-0.19, -0.07) & (0.01, 0.11) \\
Afternoon & (0.09, 0.33) & (-0.23, 0.04) & (-0.12, 0.1) \\
Evening   & (0.32, 0.57) & (-0.37, -0.08) & (-0.22, 0.01) \\
\bottomrule
\end{tabular}
\end{center}
\end{table}

\begin{table}
\small
\begin{center}
\caption{\label{3-tab:xi} \(\xi\) coefficients estimates, emotional state application}
\vspace{0.5\baselineskip}
\begin{tabular}{crrr}
\toprule
& \multicolumn{3}{c}{Attribute at time \(t\)} \\
Effect from time \(t - 1\) & 1 & 2 & 3 \\
\midrule
Intercept     & -0.42 & -0.75 & -0.90 \\
\([0, 0, 1]\) & -0.13 & -0.19 & 1.02 \\
\([0, 0, 2]\) & 0.12 & -0.00 & 0.51 \\
\([0, 1, 0]\) & 0.21 & 0.52 & -0.09 \\
\([0, 2, 0]\) & 0.17 & 0.90 & -0.27 \\
\([1, 0, 0]\) & 0.39 & 0.19 & 0.03 \\
\([2, 0, 0]\) & 0.38 & 0.20 & -0.04 \\
\bottomrule
\end{tabular}
\end{center}
\end{table}

\begin{table}
\small
\begin{center}
\caption{\label{3-tab:xicredint} 95\% equal-tail credible intervals for \(\xi\) coefficients, emotional state application}
\vspace{0.5\baselineskip}
\begin{tabular}{cccc}
\toprule
& \multicolumn{3}{c}{Attribute at time \(t\)} \\
Effect from time \(t - 1\) & 1 & 2 & 3 \\
\midrule
Intercept     & (-0.73, -0.12) & (-1.07, -0.42) & (-1.18, -0.62) \\
\([0, 0, 1]\) & (-0.25, -0.01) & (-0.32, -0.07) & (0.91, 1.14) \\
\([0, 0, 2]\) & (-0.02, 0.26) & (-0.16, 0.16) & (0.38, 0.64) \\
\([0, 1, 0]\) & (0.1, 0.32) & (0.39, 0.65) & (-0.2, 0.02) \\
\([0, 2, 0]\) & (-0.0, 0.34) & (0.7, 1.11) & (-0.45, -0.09) \\
\([1, 0, 0]\) & (0.26, 0.52) & (0.05, 0.34) & (-0.08, 0.15) \\
\([2, 0, 0]\) & (0.26, 0.49) & (0.08, 0.32) & (-0.15, 0.07) \\
\bottomrule
\end{tabular}
\end{center}
\end{table}

\begin{table}
\small
\begin{center}
\caption{\label{3-tab:rmat} \(R\) (underlying correlation matrix) estimate, emotional state application}
\vspace{0.5\baselineskip}
\begin{tabular}{lrrr}
\toprule
 & 1 & 2 & 3 \\
\midrule
1 & 1.00 & -0.21 & -0.47 \\
2 & -0.21 & 1.00 & -0.43 \\
3 & -0.47 & -0.43 & 1.00 \\
\bottomrule
\end{tabular}
\end{center}
\end{table}

\end{document}